\documentclass[sigconf,nonacm]{acmart}

\usepackage{amsmath,amsfonts}

\usepackage{array}

 \usepackage{tabularx}

\usepackage{textcomp}
\usepackage{stfloats}
\usepackage{verbatim}
\usepackage{graphicx}
\usepackage{color, colortbl}
\usepackage{enumitem}

\usepackage{amssymb}
\usepackage{framed}
\usepackage{multirow}
\usepackage{booktabs}
\usepackage{siunitx}
\usepackage{todonotes}
\usepackage{makecell}
\usepackage{longtable}
\usepackage{xspace}
\usepackage{pgfplots}
\usepackage{pgfplots, pgfplotstable}
\usepackage{bm}
\usepackage{adjustbox}
\usepackage{multicol}
\usepackage{boxedminipage}
\usepackage{multirow}
\usepackage{adjustbox}
\usepackage{multicol}
\usepackage{amsthm}

\newcommand{\spc}{\ensuremath{\text{SPC}}\xspace}
\usepackage{pgfplots}

\usepackage{xspace}

\usepackage{multirow}
\usepackage[mathscr]{eucal}
\usepackage{bm}
\definecolor{Gray}{gray}{0.9}
\usepackage{adjustbox}
\usepgfplotslibrary{groupplots}
\usepgfplotslibrary{colorbrewer}

\usepackage{subcaption}

\usepgfplotslibrary{colorbrewer}

\usepackage{colortbl}
\usepackage[skins]{tcolorbox}
\newtcolorbox{mybox}[2][]{%
  attach boxed title to top center
               = {yshift=-11pt},
  colframe     =black,
  colbacktitle = black,
  title        = #2,#1,
  enhanced,
}

\newcommand{\et}{\textit{et al.}\xspace}
\usepackage{amsfonts}

\newcommand{\ux}{\ensuremath{u_{\st 1}}\xspace}
\newcommand{\uy}{\ensuremath{u_{\st 2}}\xspace}
\usepackage{enumitem}
\usepackage{lipsum}
\usepackage{xcolor}
 \newcommand{\st}{\scriptscriptstyle}

\usepackage{adjustbox}

\usepackage{multicol}

\usepackage{bm}
\usepackage{multicol}

\newcommand{\cper}{\ensuremath{\bar{\mathtt{\pi}}}\xspace}

\newcommand{\ses}{\ensuremath\mathtt{SS}\xspace}

\newcommand{\keygen}{\ensuremath\mathtt{KGen}\xspace}
\newcommand{\enc}{\ensuremath\mathtt{Enc}\xspace}
\newcommand{\dec}{\ensuremath\mathtt{Dec}\xspace}

\newcommand{\hadd}{\ensuremath\stackrel{\st H}+\xspace}
\newcommand{\hmul}{\ensuremath\stackrel{\st H}{\times}\xspace}

\newcommand{\se}{\ensuremath{{S}}\xspace}
\newcommand{\re}{\ensuremath{{R}}\xspace}
\newcommand{\p}{\ensuremath{{P}}\xspace}

\newcommand{\tp}{\ensuremath{{T}}\xspace}

\newcommand{\parse}{\ensuremath{\mathtt{parse}}\xspace}

\newcommand{\adv}{\ensuremath{\mathcal{A}}\xspace}
\newcommand{\simm}{\ensuremath{\mathtt{Sim}}\xspace}
\newcommand{\view}{\ensuremath{\mathtt{View}}\xspace}
\newcommand{\secsh}{\ensuremath{\mathtt{SS}^{\st(t,n)}}\xspace}
\newcommand{\empt}{\ensuremath{\epsilon}\xspace}

\usepackage{mathtools}
\usepackage{adjustbox}

\usepackage{framed}
\usepackage{tikz}
\newcommand{\h}{\ensuremath\mathtt{H}\xspace}
\newcommand{\g}{\ensuremath\mathtt{G}\xspace}

\newcommand{\ot}{\ensuremath{\mathcal{OT}^{\st 2}_{\st 1}}\xspace}
\newcommand{\dqot}{\ensuremath{\mathcal{DQ\text{--}OT}^{\st 2}_{\st 1}}\xspace}
\newcommand{\dqothf}{\ensuremath{\mathcal{DQ^{\st MR}\text{--}OT}^{\st 2}_{\st 1}}\xspace}
\newcommand{\rdqothf}{\ensuremath{{\text{DQ}^{\st \text{MR}}\text{--}\text{OT}}}\xspace}
\newcommand{\duqot}{\ensuremath{\mathcal{DUQ\text{--}OT}^{\st 2}_{\st 1}}\xspace}
\newcommand{\duqothf}{\ensuremath{\mathcal{DUQ^{\st MR}\text{--}OT}^{\st 2}_{\st 1}}\xspace}
\newcommand{\rduqothf}{\ensuremath{{\text{DUQ}^{\st \text{MR}}\text{--}\text{OT}}}\xspace}
\newcommand{\onenot}{\ensuremath{\mathcal{OT}^{\st n}_{\st 1}}\xspace}

\usetikzlibrary{arrows,decorations.markings}

\theoremstyle{theorem}
\theoremstyle{definition}

\theoremstyle{theorem}

\begin{document}

\title[Delegated-Query OT]{Oblivis: A Framework for\\ Delegated and Efficient Oblivious Transfer}\titlenote{This paper combines and supersedes our two earlier ePrint manuscripts~\cite{AbadiD24,DesmedtA24}, and will appear at PETS 2026.}



\author{Aydin Abadi}
\orcid{0000-0002-1414-8351}
\affiliation{%
  \institution{Newcastle University}
  \city{} 
  \state{} 
  \country{} 
}
\email{aydin.abadi@ncl.ac.uk}

\author{Yvo Desmedt}
\orcid{0000-0002-6679-7484}
\affiliation{%
  \institution{The University of Texas at Dallas}
  \city{} 
  \state{} 
  \country{} 
}
\email{y.desmedt@cs.ucl.ac.uk}



\begin{abstract}
As database deployments shift toward cloud platforms and edge devices, thin clients need to securely retrieve sensitive records without leaking their query intent or metadata to the proxies that mediate access. Oblivious Transfer (OT) is a core tool for private retrieval, yet existing OTs assume direct client–database interaction and lack support for delegated querying or lightweight clients. 

We present Oblivis, a modular framework of new OT protocols that enable delegated, privacy-preserving query execution. Oblivis allows clients to retrieve database records without direct access, protects against leakage to both databases and proxies, and is designed with practical efficiency in mind. Its components include: (1) Delegated-Query OT, which permits secure outsourcing of query generation; (2) Multi-Receiver OT for merged, cloud-hosted databases; (3) a compiler producing constant-size responses suitable for thin clients; and (4) Supersonic OT, a proxy-based, information-theoretic, and highly efficient 1-out-of-2 OT. The protocols are formally defined and proven secure in the simulation-based paradigm, under non-colluding assumption. We implement and empirically evaluate Supersonic OT.
It achieves at least a $92\times$ speedup over a highly efficient 1-out-of-2 OT, and a $2.6\times$–$106\times$ speedup over a standard OT extension across 200–100,000 invocations. Our implementation further shows that Supersonic OT remains efficient even on constrained hardware, e.g., it completes an end-to-end transfer in 1.36~ms on a Raspberry Pi~4.

\end{abstract}


\keywords{Oblivious Transfer, Privacy-Preserving Query Processing.}

\maketitle





\section{Introduction}\label{sec:intro}



Modern data-driven services increasingly operate in cloud-hosted environments, where many clients are thin and lack substantial local resources. In these settings, clients often rely on intermediaries (both technical and organisational) to reach the underlying service or database. At the technical level, platforms such as Cloudflare~\cite{CloudflareHowItWorks2025} sit between users and origin servers, handling client traffic and therefore observing request structure and metadata; recent compromises of such proxies highlight the associated risks~\cite{CloudflareInvestigationOkta2022}. 
At the organisational level, financial advisors routinely issue queries to proprietary data providers on behalf of clients, creating opportunities for misuse or leakage of sensitive query intent, as shown in prior incidents~\cite{leigh2015hsbc,JPMorgan}. 

In these settings, a recurring challenge emerges: clients, often resource-constrained, cannot communicate directly with the database, yet must retrieve information without revealing their query intent or exposing metadata to the proxies that mediate access.

Oblivious Transfer (OT) is a foundational primitive for private data retrieval. Classical OT ensures that a client retrieves exactly one database item without exposing its index and without learning anything about the remaining items. Existing OT constructions fall short in several fundamental ways, which we analyse below. 

\subsection{Motivating Scenarios}\label{sec::motivation}
To illustrate these challenges, we highlight representative scenarios.

\vspace{-2mm}

\subsubsection{Proxy-Assisted Access in Financial Systems.} 
In financial settings, advisors often act on behalf of clients, querying proprietary databases (e.g., containing real estate market information, market trends, and capital flows) offered by third-party providers such as Bloomberg Terminal~\cite{Bloomberg}, CoStar~\cite{costar},  or Multiple Listing Service~\cite{mls},  without granting clients direct access.  
This proxy model introduces privacy risks, especially when insider threats arise, e.g., ``Swiss Leaks'' and JPMorgan Chase incidents~\cite{leigh2015hsbc,JPMorgan}.  
In such cases, neither the advisor nor the database should learn the content or intent of a client's query. Existing OTs do not account for this kind of secure delegation. Encrypting a classical OT query would hide the client's index from the proxy advisor, but it does not support delegation: the receiver must still compute the full OT query locally, leaving the proxy unable to assist with the query-generation task. Delegation is valuable in these settings because it allows clients to remain lightweight and improves scalability by shifting the expensive steps away from the client.


\vspace{-2mm}

\subsubsection{Merged Cloud-Hosted Databases.} Cloud providers often merge datasets belonging to different organisations into a single logical database to reduce operational cost and simplify indexing. A recent incident, for example, showed that data belonging to British Airways, Boots, and the BBC was stored within the same cloud~\cite{BBB-hack}. When staff from different organizations query only their own records, direct privacy-preserving retrieval reveals sensitive metadata about the merged dataset. A similar situation arises even within a single organisation (e.g., RMIT University) when the cloud provider merges datasets from multiple internal departments, to enable organisation-wide analytics~\cite{slalom2023rmit}.



\vspace{-2mm}

\subsubsection{Hidden Query Components.}
In domains such as finance and healthcare, clients cannot always be informed about the exact conditions used to compute their outputs. An investment recommendation may depend on a proprietary internal flag associated with an account~\cite{Arora24,starlit}; a patient may be shielded from certain diagnoses~\cite{exchange-of-health-info,Withholding-Information}; and some systems assign each client a private access policy~\cite{JoshiMJF17,CamenischDEN12}.  Secure retrieval must therefore succeed without revealing the underlying index to the receiver.

\vspace{-2mm}
\subsubsection{Performance Constraints in Lightweight Clients.} Many systems rely on lightweight or long-lived clients (e.g., mobile applications or autonomous devices) that cannot run heavy machinery, download/store large encrypted datasets. These clients constantly interact with cloud services, including in settings where long-term security is important or where hardness assumptions are undesirable, since quantum algorithms show that quantum computers can break certain hardness assumptions. Sectors such as financial services and healthcare may operate under such requirements.

\subsection{Limitations of Existing Techniques}

We now discuss key research gaps our work addresses.


\vspace{-1mm}
\subsubsection{Lack of Support for Proxy-Based Delegation.} 

 Existing OT schemes assume that the receiver can compute the query index locally and interact directly with the sender, which is incompatible with settings that rely on proxy-based query generation. This restriction is inherent to classical OTs, since the receiver's choice bit is bound to the protocol execution to prevent any untrusted intermediary from inferring the selection.


%


\vspace{-1.2mm}
\subsubsection{Absence of Multi-Receiver Privacy and Hidden-Query Support.}

OT variants designed for multi-user or access-controlled databases reveal non-trivial metadata in multi-tenant cloud environments (e.g., the total database size or receiver access patterns). These constructions further assume that the receiver knows the query index, which conflicts with hidden-query settings. 


\vspace{-1.2mm}
\subsubsection{Constant-Size Responses Without Large Local Storage.}

Protocols achieving constant-size responses usually require substantial receiver-side storage.  Existing OTs offer no generic way to obtain constant-size responses without imposing heavy local storage costs.

\vspace{-1.2mm}

\subsubsection{Information-Theoretic OT} Information-theoretic OT avoids
hardness assumptions, but existing constructions depend on multiple senders, noisy channels, or a fully trusted setup. These requirements limit their practicality and prevent deployment in lightweight or latency-sensitive environments.

\vspace{-2mm}
\subsection{Our Contributions}

We introduce Oblivis\footnote{Oblivis is derived from the Latin verb oblivisci (to forget).
We use it to reflect that any information seen by parties other than the receiver is uninformative and can be regarded as forgettable since it reveals nothing about the message or the query.}, a framework organized into two families. The first family is setting-oriented and provides three key OT primitives. The second one is performance-oriented. We formally define and prove the proposed solutions in the simulation-based paradigm. 
Our main contributions are as follows.

\begin{itemize}[leftmargin=4mm]
\item Setting-Oriented Contributions:

\begin{enumerate}[leftmargin=5.2mm]
\item \textbf{Delegated-Query OT (DQ-OT):} Our construction introduces a new technique in which a receiver's choice index is secret-shared, jointly processed, and ordered by two proxies whose sequential computations yield a query structurally identical to that of the classical Naor–Pinkas OT. This technique enables secure, fully delegated query generation without revealing the index to either proxy.

\item \textbf{Delegate-Unknown-Query OT (DUQ-OT):} 
An extension of DQ-OT that supports hidden query indices, where the receiver securely obtains the correct record without learning the index used to retrieve it. This enables access policies and private conditioning that conventional OTs cannot capture.

\item \textbf{Delegated-Query Multi-Receiver OT (\rdqothf):} 
An extension of DQ-OT that supports privately querying merged cloud datasets, where receivers learn nothing about the total number of records or unrelated fields, and the sender learns nothing about which receiver queried which record.  

\end{enumerate}
\item Performance-Oriented Contributions: 
\begin{enumerate}[leftmargin=5.2mm]
\item \textbf{A Generic Compiler for Constant-Size Responses}: It transforms any $1$-out-of-$n$ OT  into one with constant-size receiver size download complexity.
\item \textbf{Supersonic OT}: We introduce Supersonic OT, a highly efficient proxy-based, information-theoretic 1-out-of-2 OT that requires neither multiple senders, nor noisy channels, nor a trusted setup. The construction employs a controlled-swap mechanism, which, to our knowledge, has not been previously reported in the OT literature. Supersonic OT enables constant-size responses without relying on pre-stored database encryptions or heavy preprocessing. It achieves:
\begin{itemize}[leftmargin=4.1mm]
\item constant-size responses (i.e., receiver download: $O(1)$);
\item  0.53~ms per transfer on a modern laptop;
\item $92\times$ speedup over the efficient base OT of~\cite{ChouO15}, and a $2.6\times$–$106\times$ speedup over the OT extension of ~\cite{IshaiKNP03};  
\item end-to-end execution on a Raspberry Pi 4, completing each transfer in 1.36 ms, showing smooth scaling
(e.g., 3.26 ms at 1,000 invocations).
\end{itemize}

\end{enumerate}
\end{itemize}



\begin{table}[t]
\centering
\small
\setlength{\tabcolsep}{4pt}
\renewcommand{\arraystretch}{1.1}
\begin{tabularx}{\columnwidth}{l l X X}
\toprule
\textbf{OT variant} & \textbf{Party} & \textbf{Input} & \textbf{Output} \\
\midrule


\multirow{4}{*}{DQ--OT} 
& $\se$   & $(m_{\st 0}, m_{\st 1})$ & $\epsilon$ \\
& $\p_{\st 1}$ & $\epsilon$  & $\epsilon$ \\
& $\p_{\st 2}$ & $\epsilon$  & $\epsilon$ \\
& $\re$   & $s \in \{0,1\}$ & $m_{\st s}$ \\
\midrule

\multirow{5}{*}{DUQ--OT} 
& $\se$   & $(m_{\st 0}, m_{\st 1})$ & $\epsilon$ \\
& $\tp$   & $s \in \{0,1\}$ & $\epsilon$ \\
& $\p_{\st 1}$ & $\epsilon$  & $\epsilon$ \\
& $\p_{\st 2}$ & $\epsilon$  & $\epsilon$ \\
& $\re$   & $\epsilon$  & $m_{\st s}$ \\
\midrule

\multirow{4}{*}{\rdqothf} 
& $\se$   & $\vec{m}$ & $\epsilon$ \\
& $\p_{\st 1}$ & $v \in \{0, \ldots, z-1\}$ & $z$ \\
& $\p_{\st 2}$ & $\epsilon$ & $\epsilon$ \\
& $\re$ & $s \in \{0,1\}$ & $m_{\st s, v}$ \\
\midrule

\multirow{5}{*}{\rduqothf} 
& $\se$   & $\vec{m}$ & $\epsilon$ \\
& $\tp$   & $(v,s,z)$ & $\epsilon$ \\
& $\p_{\st 1}$ & $\epsilon$ & $z$ \\
& $\p_{\st 2}$ & $\epsilon$ & $\epsilon$ \\
& $\re$ & $\epsilon$ & $m_{\st s, v}$ \\
\midrule

\multirow{2}{*}{Compiler} 
& $\se$ & $(m_{\st 0}, \ldots, m_{\st n-1})$ & $\epsilon$ \\
& $\re$ & $s \in \{0, \ldots, n-1\}$ & $m_{\st s}$ \\
\midrule

\multirow{3}{*}{Supersonic OT} 
& $\se$ & $(m_{\st 0}, m_{\st 1})$ & $\epsilon$ \\
& $\p$ & $\epsilon$ & $\epsilon$ \\
& $\re$ & $s \in \{0,1\}$ & $m_{\st s}$ \\
\bottomrule
\end{tabularx}
\caption{Summary of parties, inputs, and outputs for the OT variants in \emph{Oblivis} (ideal-functionality view). Here, $\vec{m} = [(m_{\st 0, 0}, m_{\st 1, 0}), \ldots, (m_{\st 0, z-1}, m_{\st 1, z-1})]$, $\epsilon$ denotes an empty input/output. In multi-receiver variants, $v$ identifies the receiver (or dataset slice) among $z$ slices.}
\label{tab:ot-io-summary}
\end{table}

\subsection{Context and Applications}
Oblivis targets proxy-mediated access where intermediaries are unavoidable. Consider the following  use cases. (a) Financial advisory platform: A client wants to retrieve a proprietary report via a brokerage platform, which today learns the client's intent; DQ-OT enables mediation without revealing the requested record. (b) Multi-tenant SaaS behind a gateway: SaaS APIs often sit behind an edge gateway (e.g., Cloudflare) for routing/access control. Oblivis uses the gateway for delegation, such that the tenant sends a request, proxies generate OT queries and contact the backend, and only the intended encrypted response reaches the tenant. (c) Travel booking via a portal:  Corporate travel portals mediate access to proprietary offer databases. Oblivis retrieves the chosen offer without exposing the selection to the portal.

\section{Preliminaries}

\subsection{Notations and Assumptions}\label{sec::notations}
By $\empt$ we mean an empty string. When $y$ represents a single value, $|y|$ refers to the bit length of $y$; when $y$ is a tuple, $|y|$ denotes the number of elements contained within $y$. We denote a sender by $\se$ and a receiver by $\re$. We assume parties interact with each other through a secure channel. 
We define a parse function as $\parse(\lambda, y)\rightarrow (\ux, \uy)$, which takes as input a value $\lambda$ and a value $y$ of length at least  $\lambda$-bit. It parses $y$ into two values  $\ux$ and $\uy$ and returns $(\ux, \uy)$ where the bit length of $\ux$ is $|y|-\lambda$ and the bit length of $\uy$ is  $\lambda$. 
Also, $U$ denotes a universe of messages $m_{\st 1},\ldots, m_{\st t}$. We define $\sigma$ as the maximum size of messages in $U$, i.e., $\sigma=Max(|m_{\st 1}|,\ldots, |m_{\st t}|)$. 
We use two hash functions $\h:\{0, 1\}^{\st *}\rightarrow \{0, 1\}^{\sigma}$ and $\g:\{0, 1\}^{\st *}\rightarrow \{0, 1\}^{\st\sigma+\lambda}$ modelled as random oracles \cite{Canetti97}. 

Table~\ref{tab:ot-io-summary} summarizes our protocols' input and output values for each party involved.  
In this paper, we use the simulation-based paradigm of secure multi-party computation \cite{DBLP:books/cu/Goldreich2004} to define and prove the protocols. We consider semi-honest adversaries that do not collude with each other.  
Appendix \ref{sec::sec-model} (in the supplemental material file) restates the formal definition in this model.


  \vspace{-2mm}
  \subsection{Random Permutation}
A random permutation $\pi(e_{\st 0} ,..., e_{\st n})\rightarrow(e'_{\st 0},..., e'_{\st n})$ is a probabilistic function that takes a set $A=\{e_{\st 0}, ..., e_{\st n}\}$ and returns the same set of elements in a permuted order $B=\{e'_{\st 0},..., e'_{\st n}\}$. The security of $\pi$ requires that given a set $B$, the probability that one can find the original index of an element $e'_{\st i}\in B$ is $\frac{1}{n}$.  In practice, the Fisher-Yates shuffle algorithm \cite{Knuth81} can permute a set of $n$ elements in time $O(n)$. We will use $\pi$ in the protocols presented in Figures \ref{fig::DQ-OT-with-unknown-query} and \ref{fig::DQ-OT-with-unknown-query-and-HF-first-eight-phases}.


\subsection{Controlled Swap}\label{Customised-Random-Swap}
The idea of controlled swap was introduced by Fredkin and Toffoli~\cite{FredkinT02a}. It can be defined as function $\cper$ which takes two inputs: a binary value $s$ and a pair $(c_{\st 0}, c_{\st 1})$. When $s=0$, it returns the input pair $(c_{\st 0}, c_{\st 1})$, i.e., it does not swap the elements. When $s=1$, it returns $(c_{\st 1}, c_{\st 0})$, effectively swapping the elements. 
It is clear that if $s$ is uniformly chosen at random, then $\cper$ represents a random permutation, implying that the probability of swapping or not swapping is $\frac{1}{2}$. We will use $\bar{\pi}$ in the Supersonic OT, presented in Figure~\ref{fig::Ultrasonic-OT}.

\subsection{The Original OT of Naor and Pinkas}\label{sec::OT-of-Naor-Pinkas}


As the DQ-OT protocol in Section~\ref{sec::DQ-OT} builds on the OT by Naor and Pinkas~\cite[pp.~450–451]{Efficient-OT-Naor}, we restate their construction in Figure~\ref{fig::Noar-OT}.


\begin{figure}[!h]
\setlength{\fboxsep}{.9pt}
\begin{center}
    \begin{tcolorbox}[enhanced,right=1mm, 
    drop fuzzy shadow southwest,
    colframe=black,colback=white]
\begin{enumerate}[leftmargin=2.2mm]

\item \underline{\textit{$\se$-side Initialization:}} 
$\mathtt{\se.Init}(1^{\st\lambda})\rightarrow pk$

\begin{enumerate}

\item  choose a random large prime number $p$ (where $\log_{\st 2}p=\lambda$), a random element
$C \stackrel{\st \$}\leftarrow \mathbb{Z}_{\st p}$ and random generator $g$. It is only important that the receiver \re will not know $a$, which is the discrete logarithm of $C$ to the base $g$, i.e., $C=g^{\st a}$.
\item publish $pk=(C, g, p)$. 

\end{enumerate}

\item \underline{\textit{$\re$-side Query Generation:}} 
$\mathtt{\re.GenQuery}(pk, s)\rightarrow (q, sp)$

\begin{enumerate}

\item pick a random value $r \stackrel{\st \$}\leftarrow \mathbb{Z}_p\setminus\{0\}$ and sets $sp=r$. 
\item set $\beta_{\st s}=g^{\st r}$ and $\beta_{\st 1-s}=\frac{C}{\beta_{\st s}}$.

\item send $q=\beta_{\st 0}$ to \se and locally stores $sp$.

\end{enumerate}

\item \underline{\textit{$\se$-side Response Generation:}} 
$\mathtt{\se.GenRes}(m_{\st 0}, m_{\st 1}, pk, q)\rightarrow res$

\begin{enumerate}

\item compute $\beta_{\st 1}=\frac{C}{\beta_{\st 0}}$.

\item choose two random values, $y_{\st 0}, y_{\st 1}\stackrel{\st \$}\leftarrow \mathbb{Z}_p$. 

\item encrypt the elements of the pair $(m_{\st 0}, m_{\st 1})$ as follows: 
$$e_{\st 0}:=(e_{\st 0,0}, e_{\st 0,1})=(g^{\st y_0}, \h(\beta_{\st 0}^{\st y_0}) \oplus m_{\st 0})$$
$$e_{\st 1}:=(e_{\st 1,0}, e_{\st 1,1})=(g^{\st y_1}, \h(\beta_{\st 1}^{\st y_1}) \oplus m_{\st 1})$$

\item send $res=(e_{\st 0}, e_{\st 1})$ to \re.
\end{enumerate}

\item\underline{\textit{\re-side Message Extraction:}} 
$\mathtt{\re.Retrieve}(res, sp, pk, s)\rightarrow m_{\st s}$

\begin{itemize}
\item retrieve the related message $m_{\st s}$ by computing 
$$m_{\st s}=\h((e_{\st s,0})^{\st r})\oplus e_{\st s, 1}$$

\end{itemize}

\end{enumerate}
\end{tcolorbox}
\end{center}
\vspace{-3mm}
\caption{Original OT proposed by  Naor and Pinkas \cite[pp. 450, 451]{Efficient-OT-Naor}. In this protocol, the input of \re is a private binary index $s$ and the input of \se is a pair of private messages $(m_{\st 0}, m_{\st 1})$.} 
\vspace{-4mm}
\label{fig::Noar-OT}
\end{figure}

\vspace{-2mm}
\subsection{Diffie-Hellman Assumption}\label{sec::dh-assumption}
Let $G$ be a group-generator scheme, which on input $1^{\st\lambda}$ outputs $(\mathbb{G}, p, g)$ where $\mathbb{G}$ is the description of a group, $p$ is the order of the group which is always a prime number, $\log_{\st 2}(p)=\lambda$ is a security parameter and $g$ is a generator of the group. In this paper, $g$ and $p$ can be selected by sender \se (in the context of OT). 

\subsubsection*{Computational Diffie-Hellman (CDH) Assumption} 
We say that $G$ is hard under the CDH assumption if for any probabilistic polynomial time (PPT) adversary $\mathcal{A}$, given $(g^{\st a_1}, g^{\st a_2})$, it has only negligible probability to correctly compute $g^{\st a_1\cdot a_2}$. More formally, it holds that $Pr[\mathcal{A}(\mathbb{G}, p, g, g^{\st a_1}, g^{\st a_2})\rightarrow g^{\st a_1\cdot a_2}]\leq \mu(\lambda)$, where $(\mathbb{G}, p, g) \stackrel{\st\$}\leftarrow G(1^{\st\lambda})$, $a_{\st 1}, a_{\st 2} \stackrel{\st\$}\leftarrow \mathbb{Z}_{\st p}$, and $\mu$ is a negligible function~\cite{DiffieH76}. 
%

\subsubsection*{Convention} Exponents (e.g., $x, r_{\st 1}, r_{\st 2}, a$) are elements of $\mathbb{Z}_{\st p}$ and all arithmetic on them  is modulo $p$; we omit $\bmod\ p$ for readability. Group expressions such as $g^{\st r}$ remain elements of the group $\mathbb{G}$.

\subsection{Secret Sharing}\label{sec::secret-haring}

A (threshold) secret sharing $\mathtt{SS}^{\st(t,n)}$  scheme is a cryptographic protocol that enables a dealer to distribute a string $s$, known as the secret, among $n$ parties in a way that the secret $s$ can be recovered when at least a predefined number of shares, say $t$, are combined. If the number of shares in any subset is less than $t$, the secret remains unrecoverable, and the shares divulge no information about $s$. This type of scheme is referred to as $(n, t)$-secret sharing or \secsh for brevity. 
In the case where $t=n$, there exists a highly efficient XOR-based secret sharing \cite{blakley1980one}. In this case, to share the secret $s$, the dealer first picks $n-1$ random bit strings $r_{\st 1}, ..., r_{\st n-1}$ of the same length as the secret. Then, it computes $r_{\st n} = r_{\st 1} \oplus, ..., \oplus  r_{\st n} \oplus s$. It considers each $r_{\st i}\in\{r_{\st 1},..,r_{\st n}\}$ as a share of the secret. To reconstruct the secret, one can easily compute $r_{\st 1}\oplus,..., \oplus r_{\st n}$. Any subset of fewer than $n$ shares reveals no information about the secret. We will use this scheme in this paper. A secret sharing scheme involves two main algorithms; namely, $\ses(1^{\st \lambda}, s, n, t)\rightarrow (r_{\st 1}, ..., r_{\st n})$: to share a secret and $\mathtt{RE}(r_{\st 1}, ..., r_{\st t}, n, t)\rightarrow s$ to reconstruct the secret. 

\vspace{-2mm}
\subsection{Additive Homomorphic Encryption}\label{sec::AHE}

Additive homomorphic encryption involves three algorithms: (1) key generation: $\keygen(1^{\st\lambda})\rightarrow (sk, pk)$, which takes a security parameter as input and outputs a secret and public keys pair, (2) encryption: $\enc(pk, m)\rightarrow c$, that takes public key $pk$ and a plaintext message $m$ as input and returns a ciphertext $c$, and (3) decryption: $\dec(sk, c)\rightarrow m$, which takes secret key $sk$ and ciphertext $c$ as input and returns plaintext message $m$.  It has the following properties: 
 \vspace{-1mm}
 \begin{itemize}
 \item [$\bullet$]  Given two ciphertexts $\enc(pk, m_{\st 1})$ and $\enc(pk, m_{\st 2})$, one can compute the encryption of the sum of related plaintexts: 
  $\dec(sk,\mathtt{Enc}(pk, m_{\st 1})\hadd \enc(pk, m_{\st 2}))= m_{\st 1}+m_{\st 2}$, where $\hadd$ denotes homomorphic addition.

  
 \item [$\bullet$] Given a ciphertext $\mathtt{Enc}(pk, m)$ and a plaintext message $c$, one can compute the encryption of the product of related plaintexts: 
 %
 %
  $\dec(sk, \enc(pk, m)\hmul c) = m\cdot c$, where $\hmul$ denotes homomorphic multiplication.
 \end{itemize}
We require that the encryption scheme satisfy indistinguishability against chosen-plaintext attacks (IND-CPA).  We refer readers to \cite{KatzLindell2014} for a formal definition. One such scheme that meets the above features is the Paillier public key cryptosystem~\cite{Paillier99}. We will use an additive homomorphic encryption only in Sections~\ref{sec::Delegated-Unknown-Query-OT-HF} and \ref{sec::the-compiler}.


\vspace{-2mm}
\section{Related Work}\label{sec::related-work}


The traditional $1$-out-of-$2$ OT (\ot) is a protocol that involves a sender \se and a receiver \re  \cite{Rabin-OT,EvenGL85}.  The sender \se has a pair of messages $(m_{\st 0}, m_{\st 1})$ and \re has an index $s$. Informally, \ot allows \re to obtain $m_{\st s}$, without revealing anything about $s$ to \se, and without allowing \re to learn anything about  $m_{\st 1-s}$. The \ot functionality is defined as $\mathcal{F}_{\st\ot}:((m_{\st 0}, m_{\st 1}), s) \rightarrow (\empt, m_{\st s})$. 

There exist many variants of OT. 
$1$-out-of-$n$ OTs~\cite{NaorP99,Tzeng02,LiuH19} allow \re to pick one entry out of $n$ entries held by \se, $k$-out-of-$n$ OTs~\cite{ChuT05,JareckiL09,ChenCH10} enable \re to pick $k$ entries out of $n$ entries held by \se, where $k\geq 1$. OT extensions~\cite{IshaiKNP03,Henecka013,Nielsen07,AsharovL0Z13} support efficient executions of OT (that mainly rely on symmetric-key operations) in the case where OT must be invoked many times. Distributed OTs~\cite{NaorP00,CorniauxG13,ZhaoSJMZX20} allow the database to be distributed among $m$ servers/senders. 
In the remainder of this section, we discuss several variants of OT  that have extended and enhanced the original OT~\cite{Rabin-OT}.

\subsection{Distributed OT} 

 Naor and Pinkas \cite{NaorP00} proposed several distributed OTs where the role of sender \se (in the original OT) is divided between several servers.  In these schemes, a receiver must contact a threshold of the servers to run the OT. 
 
 The proposed protocols are in the semi-honest model. They use symmetric-key primitives and do not involve any modular exponentiation that can lead to efficient implementations.  These protocols are based on various variants of polynomials (e.g., sparse and bivariate), polynomial evaluation, and pseudorandom functions. In these distributed OTs, the security against the servers holds as long as fewer than a predefined number of these servers collude. 
 Later, various distributed OTs have been proposed\footnote{Distributed OT has also been called proxy OT in \cite{YaoF06}.}. For instance, Corniaux and Ghodosi \cite{CorniauxG13} proposed a verifiable $1$-out-of-$n$ distributed OT that considers the case where a threshold of the servers is potentially active adversaries. The scheme is based on a sparse $n$-variate polynomial, verifiable secret sharing, and error-correcting codes. 
 
 Moreover, Zhao \et  \cite{ZhaoSJMZX20} has proposed a distributed version of OT extension that aims to preserve the efficiency of OT extension while delegating the role of \se to multiple servers, a threshold of which can be potentially semi-honest. The scheme is based on a hash function and an oblivious pseudorandom function.  
 
However, there exists no OT that supports secure delegation of the query computation to third-party servers. 


\subsection{Multi-Receiver OT}\label{sec::multi-rec-ot}

Camenisch \et~\cite{CamenischDN09} proposed a protocol for ``OT with access control''. It involves a set of receivers and a sender that maintains records of the receivers. It offers a set of interesting features; namely, (i) only authorized receivers can access certain records; (ii) the sender does not learn which record a receiver accesses, and (iii) the sender does not learn which roles (or security clearance) the receiver has when it accesses the records. 
In this scheme, during the setup, the sender encrypts all records (along with their field elements) and publishes the encrypted database for the receivers to download. Subsequently, researchers proposed various variants of OT with access control, as seen in \cite{CamenischDNZ11,CamenischDEN12,CamenischDN10}. 

Nevertheless, in all the aforementioned schemes, the size of the entire database is revealed to the receivers.


\subsection{OT with Constant Response Size}

Researchers have proposed several OTs~\cite{CamenischNS07,GreenH08,ZhangLWR13}  that enable a receiver to obtain a constant-size response to its query. To achieve this level of communication efficiency, these OTs require the receiver to locally store the encryption of the entire database in the initialization phase. During the transfer phase, the sender assists the receiver with locally decrypting the message that the receiver is interested in. 
The main limitation of these protocols is that a thin client with limited available storage space cannot use them, as it cannot locally store the encryption of the entire database.


\subsection{Information-Theoretic  OT}\label{sec::uncon-sec-OT} There have been efforts to design (both-sided) information-theoretic OTs. Some schemes~\cite{NaorP00,BlundoDSS07,CorniauxG13} rely on multiple servers/senders that maintain an identical copy of the database.  Other ones~\cite{CrepeauK88,CrepeauMW04,IshaiKOPSW11} rely on a specific network structure, i.e., a noisy channel, to achieve information-theoretic OT.   There is also a scheme~\cite{rivest1999unconditionally} that achieves information-theoretic OT using a fully trusted initializer.  
Hence, there exists no (efficient) information-theoretic OT that does not use noisy channels, multi-server, and a fully trusted initializer. 
%
%

\section{Delegated-Query OT}\label{sec:protocol}
In this section, we present the notion of Delegated-Query $1$-out-of-$2$ OT (\dqot) and a protocol that realizes it. 
 \dqot involves sender \se, receiver \re, and two proxy servers $\p_{\st 1}$ and $\p_{\st 2}$ that assist \re to compute the query.  
 \dqot enables \re to delegate (i) the computation of query and (ii) the interaction with \se to $\p_{\st 1}$ and $\p_{\st 2}$, who jointly compute \re's query and send it to \se.  
\dqot (in addition to offering the basic security of OT) ensures that \re's privacy is preserved from  $\p_{\st 1}$ and $\p_{\st 2}$, in the sense that  $\p_{\st 1}$ and $\p_{\st 2}$ do not learn anything about the actual index (i.e., $s\in\{0,1\}$) that \re is interested in, if they do not collude with each other.

\subsection{Functionality Definition}

Informally, the functionality that \dqot computes takes as input (i) a pair of messages $(m_{\st 0}, m_{\st 1})$ from \se, (ii) empty string  \empt from $\p_{\st 1}$, (iii)  empty string  \empt from $\p_{\st 2}$, and (iv) the index  $s$ (where $s\in \{0, 1\}$) from \re. It outputs an empty string $\empt$ to  \se, $\p_{\st 1}$, and $\p_{\st 2}$, and outputs the message with index $s$, i.e., $m_{\st s}$, to \re.  Formally, we define the functionality as: 
 $\mathcal{F}_{\scriptscriptstyle\dqot}:\big((m_{\st 0}, m_{\st 1}), \empt, \empt, s\big) \rightarrow (\empt, \empt, \empt, m_{\st s})$.

 \subsection{Security Definition}\label{sec::sec-def}
 
 Next, we present a formal definition of \dqot.

\begin{definition}[\dqot]\label{def::DQ-OT-sec-def} Let $\mathcal{F}_{\scriptscriptstyle\dqot}$ be the delegated-query OT functionality defined above. Protocol $\Gamma$ realizes $\mathcal{F}_{\scriptscriptstyle\dqot}$ in the presence of passive adversaries, if for  every non-uniform PPT adversary \adv in the real model, there exists a non-uniform PPT adversary (or simulator) \simm  in
the ideal model, such that:
%
\begin{equation}\label{equ::sender-sim-}
\begin{split}
\Big\{\simm_{\st\se}\big((m_{\st 0}, m_{\st 1}), \empt\big)\Big\}_{\st m_{\st 0}, m_{\st 1}, s}\stackrel{c}{\equiv} \Big\{\view_{\st\se}^{\st \Gamma}\big((m_{\st 0}, m_{\st 1}), \empt,  \empt, s\big) \Big\}_{ m_{\st 0}, m_{\st 1}, s}
\end{split}
\end{equation}
\begin{equation}\label{equ::server-sim-}
\begin{split}
\Big\{\simm_{\st\p_i}(\empt, \empt)\Big\}_{\st m_{\st 0}, m_{\st 1}, s}\stackrel{c}{\equiv}     \Big\{\view_{\st\p_i}^{\st \Gamma}\big((m_{\st 0}, m_{\st 1}), \empt,  \empt, s\big) \Big\}_{\st\st m_{\st 0}, m_{\st 1}, s}
\end{split}
\end{equation}
\begin{equation}\label{equ::reciever-sim-}
\begin{split}
&\Big\{\simm_{\st\re}\Big(s, \mathcal{F}_{\scriptscriptstyle\dqot}\big((m_{\st 0}, m_{\st 1}), \empt,  \empt, s\big)\Big)\Big\}_{\st m_{\st 0}, m_{\st 1}, s}\stackrel{c}{\equiv}\\  &\Big\{\view_{\st\re}^{\st \Gamma}\big((m_{\st 0}, m_{\st 1}), \empt,  \empt, s\big) \Big\}_{\st m_{\st 0}, m_{\st 1}, s}
\end{split}
\end{equation}
for all $i$,  $i\in \{1,2\}$.
\end{definition}
%
%

%
A \dqot scheme meets \textit{efficiency} and \textit{sender-push communication} (\spc). 
Efficiency states that the query generation of the receiver is faster compared to traditional (non-delegated) OT. 
\spc states that the sender sends responses to the receiver without requiring the receiver to directly initiate a query to the sender. 



\begin{definition}[Efficiency]\label{def::efficiency}
A \dqot scheme is efficient if the running time of the receiver-side request (or query) generation algorithm, denoted as $\mathtt{Request(1^{\st \lambda}, s, pk)}$, satisfies two conditions: 

\begin{enumerate}[label=$\bullet$,leftmargin=4.7mm]
    \item The running time is upper-bounded by $poly(|{m}|)$, where $poly$ is a fixed polynomial, $m$ is a tuple of messages that the sender holds, and $|{m}|$ represents the number of elements/messages in tuple $m$.

    \item The running time is asymptotically constant with respect to the security parameter $\lambda$, i.e., it is $O(1)$.  
\end{enumerate}
\end{definition}


\begin{definition}[Sender-push communication]\label{def::Server-push-comm} Let 
(a) $\mathtt{Action}_{\st \re}(t)$ represents the set of actions available to \re at time $t$ (these actions may include sending requests, receiving messages, or any other interactions \re can perform within the scheme's execution), 
(b) $\mathtt{Action}_{\st \se}(t)$ be the set of actions available to \se at time $t$,  
(c) $\mathtt{SendRequest}(\re,\se)$ be the action of \re sending a request to \se, and 
(d) $\mathtt{SendMessage}(\se,\re)$ represents the action of \se sending a message to \re. Then,  a \dqot scheme supports sender-push communication if it meets the following conditions:

\begin{enumerate}[label=$\bullet$,leftmargin=4.7mm]

\item \textit{Receiver-side restricted interaction}: 
For all $t$ in the communication timeline (i.e., within the scheme's execution), the set of actions $\mathtt{Action}_{\st \re}(t)$ available to the receiver \re is restricted such that it does not include direct requests to the sender \se. Formally,
$$\forall t: \mathtt{Action}_{\st \re}(t) \cap \{\mathtt{SendRequest}(\re,\se)\}=\emptyset$$

\item \textit{Sender-side non-restricted interaction}: 
For all $t$ in the communication timeline, the sender \se can push messages to the receiver \re without receiving an explicit request directly from \re. Formally, 
$$\forall t:
\mathtt{Action}_{\st \se}(t) \subseteq \{\mathtt{SendMessage}(\se,\re)\}
$$

\end{enumerate}

\end{definition}

Note that the above two definitions are valid for the variants of \dqot; namely, \duqot, \dqothf, and  \duqothf, with a minor difference being that the receiver-side request generation algorithm in these three variants is denoted as $\mathtt{\re.Request}$.

%

\hspace{-4mm}

\subsection{Protocol}\label{sec::DQ-OT}
Now, we present an efficient 1-out-of-2 OT protocol, called DQ-OT, that realizes \dqot. 
%
We build DQ-OT upon the \ot proposed by  Naor and Pinkas, presented in Figure~\ref{fig::Noar-OT}. Our motivation for this choice is primarily didactic. 

The DQ-OT relies on a key observation (which to our knowledge has not appeared in prior OTs): if the receiver secret-shares its index (see Figure~\ref{fig::Noar-OT}) into two random shares, and the proxies sequentially compute and order their partial queries according to theses shares, the final query pair sent to the sender has an identical structure to the receiver's query in the Naor–Pinkas OT, i.e., as if the receiver had generated both queries $(\beta_{\st 0}, \beta_{\st 1})$ itself.  
Below, we explain how  DQ-OT operates. 

Initially, \re splits its desired index into two binary shares,  $(s_{\st 1}, s_{\st 2})$. Then, it picks two random values, $(r_{\st 1}, r_{\st 2})$, and sends each pair $(s_{\st i}, r_{\st i})$ to each $\p_{\st i}$. 
To compute a partial query, $\p_{\st 2}$ treats $s_{\st 2}$ as the main index that \re is interested in and computes a partial query, $\delta_{\st s_2}=g^{\st r_2}$. Also, $\p_{\st 2}$  generates another query, $\delta_{\st 1-s_2}=\frac{C}{g^{\st r_2}}$, where $C$ is a random public parameter (as defined in \cite{Efficient-OT-Naor}). $\p_{\st 2}$ sorts the two queries in ascending order based on the value of $s_{\st 2}$ and sends the resulting $(\delta_{\st 0}, \delta_{\st 1})$ to $\p_{\st 1}$.

To compute its queries, $\p_{\st 1}$ treats $\delta_{\st 0}$ as the main index (that \re is interested) and computes $\beta_{\st s_1}=\delta_{\st 0}\cdot g^{\st r_1}$. Additionally, it generates another query $\beta_{\st 1-s_1}=\frac{\delta_{\st 1}}{g^{\st r_1}}$. Subsequently, $\p_{\st 1}$ sorts the two queries in ascending order based on the value of ${s_{\st 1}}$ and sends the resulting $(\beta_{\st 0}, \beta_{\st 1})$ to $\se$. 
Given the queries, \se computes the response in the same manner it does in the original OT in \cite{Efficient-OT-Naor} and sends the result to \re, who extracts from it the message that it asked for, with the help of $s_i$ and $r_i$ values.  
The detailed  DQ-OT  is presented in Figure~\ref{fig::DQ-OT}.

\begin{figure}[!htbp]
\setlength{\fboxsep}{.9pt}
\begin{center}
    \begin{tcolorbox}[enhanced,right=1mm,  
    drop fuzzy shadow southwest,
    colframe=black,colback=white]
\begin{enumerate}[leftmargin=2mm]

\item \underline{\textit{$\se$-side Initialization:}} 
$\mathtt{Init}(1^{\st \lambda})\rightarrow pk$
\begin{enumerate}

\item choose a sufficiently large prime number $p$.

\item select a random element $C\hspace{-.5mm} \stackrel{\st \$}\leftarrow\hspace{-.5mm} \mathbb{Z}_p$ and generator $g$.
\item publish $pk=(C, p, g)$. 

\end{enumerate}

\item \underline{\textit{\re-side Delegation:}}
$\mathtt{Request}(1^{\st \lambda}, s, pk)\hspace{-.6mm}\rightarrow\hspace{-.6mm} req=(req_{\st 1}, req_{\st 2})$

\begin{enumerate}

\item split  the private index $s$ into two shares $(s_{\st 1}, s_{\st 2})$ by calling  $\ses(1^{\st \lambda}, s, 2, 2)\rightarrow (s_{\st 1}, s_{\st 2})$.

\item pick two uniformly random values: $r_{\st 1}, r_{\st 2} \stackrel{\st\$}\leftarrow\mathbb{Z}_{\st p}$.

\item send $req_{\st 1}=(s_{\st 1}, r_{\st 1})$ to $\p_{\st 1}$ and $req_{\st 2}=(s_{\st 2}, r_{\st 2})$ to $\p_{\st 2}$.

\end{enumerate}

\item \underline{\textit{$\p_{\st 2}$-side Query Generation:}}
$\mathtt{\p_{\st 2}.GenQuery}(req_{\st 2}, ,pk)\rightarrow q_{\st 2}$

\begin{enumerate}

\item compute a pair of partial queries:
%
%
  $$\delta_{\st s_2}= g^{\st r_2},\ \  \delta_{\st 1-s_{\st 2}} = \frac{C}{g^{\st r_2}}$$
  
\item send $q_{\st 2}=(\delta_{\st 0}, \delta_{\st 1})$ to  $\p_{\st 1}$. 

\end{enumerate}

\item\underline{\textit{$\p_{\st 1}$-side Query Generation:}}
$\mathtt{\p_{\st 1}.GenQuery}(req_{\st 1}, q_{\st 2},pk)\rightarrow q_{\st 1}$

\begin{enumerate}

\item compute a pair of final queries as: 
%
%
$$\beta_{s_{\st 1}}=\delta_{\st 0}\cdot g^{\st r_{\st 1}},\ \ \beta_{\st 1-s_1}=\frac{\delta_{\st 1}} {g^{\st r_{\st 1}}}$$

\item send $q_{\st 1}=(\beta_{\st 0}, \beta_{\st 1})$ to  $\se$.

\end{enumerate}

\item\underline{\textit{\se-side Response Generation:}}  
$\mathtt{GenRes}(m_{\st 0}, m_{\st 1}, pk, q_{\st 1})\rightarrow res$
\begin{enumerate}

\item abort if  $C \neq \beta_{\st 0}\cdot \beta_{\st 1}$.
\item pick two uniformly random values: 
$y_{\st 0}, y_{\st 1}  \stackrel{\st\$}\leftarrow\mathbb{Z}_{\st p}$.
\item compute a response pair $(e_{\st 0}, e_{\st 1})$ as follows:
 $$e_{\st 0} := (e_{\st 0,0}, e_{\st 0,1}) = (g^{\st y_0}, \h(\beta_{\st 0}^{\st y_0}) \oplus m_{\st 0})$$
$$e_{\st 1} := (e_{\st 1,0}, e_{\st 1,1}) = (g^{\st y_1}, \h(\beta_{\st 1}^{\st y_1}) \oplus m_{\st 1})$$
\item send $res=(e_{\st 0}, e_{\st 1})$ to \re. 

\end{enumerate}

\item\underline{\textit{\re-side Message Extraction:}} 
$\mathtt{Retrieve}(res, req, pk)\hspace{-.6mm}\rightarrow\hspace{-.8mm} m_{\st s}$
\begin{enumerate}

\item set $x=r_{\st 2}+r_{\st 1}\cdot(-1)^{\st s_{\st 2}}$

\item retrieve the related message: 

$m_{\st s}=\h((e_{\st s, 0})^{\st x})\oplus e_{\st s, 1}$

\end{enumerate}

\end{enumerate}
\end{tcolorbox}
\end{center}
\vspace{-2mm}
\caption{DQ-OT: Our $1$-out-of-$2$ OT that supports query delegation. The input of \re is a private binary index $s$, and the input of \se is a pair of messages $(m_{\st 0}, m_{\st 1})$. Note, $\ses$ is the share-generation algorithm, $\h$ is a hash function, and $\$$ denotes sampling a value uniformly at random.} 
\vspace{-3mm}
\label{fig::DQ-OT}
\end{figure}

\begin{theorem}\label{theo::DQ-OT-sec}
Let $\mathcal{F}_{\scriptscriptstyle\dqot}$ be the functionality defined in Section~\ref{sec::sec-def}. If  
Discrete Logarithm (DL), Computational Diffie-Hellman (CDH), and Random Oracle (RO) assumptions hold, then DQ-OT (presented in Figure \ref{fig::DQ-OT}) securely computes $\mathcal{F}_{\scriptscriptstyle\dqot}$ in the presence of semi-honest adversaries,
w.r.t. Definition \ref{def::DQ-OT-sec-def}. 

\end{theorem}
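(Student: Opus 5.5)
The plan is to build one simulator per corrupted party, following Definition~\ref{def::DQ-OT-sec-def}, and to argue each case separately, since the adversaries do not collude. Correctness comes first, as the receiver simulator depends on it. I will check that $\beta_{\st 0}\cdot\beta_{\st 1}=\delta_{\st 0}\cdot\delta_{\st 1}=C$, so \se never aborts. Then I will do a four-way case analysis on $(s_{\st 1},s_{\st 2})$ to show that $\beta_{\st s}=g^{\st x}$ with $x=r_{\st 2}+r_{\st 1}\cdot(-1)^{\st s_{\st 2}}$. For example, if $s_{\st 2}=1$ and $s_{\st 1}=0$, then $s=1$, $\delta_{\st 1}=g^{\st r_2}$, and $\beta_{\st 1}=g^{\st r_2-r_1}$. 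Hence $\h((e_{\st s,0})^{\st x})=\h(\beta_{\st s}^{\st y_s})$, and \re recovers $m_{\st s}$.

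The views of \se, $\p_{\st 1}$ and $\p_{\st 2}$ should be simulatable perfectly.
\begin{itemize}
\item $\p_{\st 2}$ sees only $(s_{\st 2},r_{\st 2})$, which is a uniform bit and a uniform exponent independent of $s$. $\simm_{\st\p_2}$ samples these itself.
\item $\p_{\st 1}$ sees $(s_{\st 1},r_{\st 1},\delta_{\st 0},\delta_{\st 1})$. Here $s_{\st 1}$ and $r_{\st 1}$ are uniform and independent. Whichever value $s_{\st 2}$ takes, $\delta_{\st 0}$ is either $g^{\st r_2}$ or $C/g^{\st r_2}$, so it is a uniform group element, independent of $(s_{\st 1},r_{\st 1})$ and hence of $s$. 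Also $\delta_{\st 1}=C/\delta_{\st 0}$. $\simm_{\st\p_1}$ samples $s_{\st 1}$, $r_{\st 1}$ and a uniform $\delta_{\st 0}$, and sets $\delta_{\st 1}=C/\delta_{\st 0}$.
\item \se sees $(\beta_{\st 0},\beta_{\st 1})$. Since $r_{\st 1}$ is uniform and fresh, $\beta_{\st 0}$ is uniform and $\beta_{\st 1}=C/\beta_{\st 0}$, independently of $s$. $\simm_{\st\se}$ samples a uniform $\beta_{\st 0}$ and sets $\beta_{\st 1}=C/\beta_{\st 0}$, then runs the honest response computation on $(m_{\st 0},m_{\st 1})$.
\end{itemize}
The distributions are identical in all three cases, so no assumption is needed here.

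The receiver case is the substantive one. $\simm_{\st\re}$ gets $s$ and $m_{\st s}$ and proceeds as follows.
\begin{itemize}
\item It samples $(s_{\st 1},s_{\st 2},r_{\st 1},r_{\st 2})$ exactly as \re does, and computes $x$ and $\beta_{\st 0},\beta_{\st 1}$.
\item It sets $e_{\st s}$ honestly using $m_{\st s}$.
\item It sets $e_{\st 1-s}=(g^{\st y},\rho)$, with $y$ uniform and $\rho$ a uniform $\sigma$-bit string.
\end{itemize}
The real and simulated views differ only in $e_{\st 1-s,1}$. In the random oracle model the two are identically distributed unless \re queries \h on $\beta_{\st 1-s}^{\st y}$. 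By correctness, $\beta_{\st 1-s}=C/g^{\st x}$, so $\beta_{\st 1-s}^{\st y}=C^{\st y}/(g^{\st y})^{\st x}$. Since \re knows $x$, such a query yields $C^{\st y}$. This gives a reduction: given a CDH instance $(g^{\st a},g^{\st b})$, embed $C=g^{\st a}$ and $e_{\st 1-s,0}=g^{\st b}$, run \re's view, and output $Q\cdot g^{\st bx}$ for a randomly chosen oracle query $Q$. This succeeds with non-negligible probability if \re distinguishes. The DL assumption enters only to justify that \re does not know $a=\log_{\st g}C$, as in Figure~\ref{fig::Noar-OT}; it is implied by CDH.

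I expect this reduction to be the main obstacle. Two points need care. The embedding must remain distributed correctly: $C$ is uniform in the real protocol, and $r_{\st 1},r_{\st 2}$ are chosen by the reduction itself. The reduction must also guess among the polynomially many random-oracle queries, which costs only a polynomial factor in the success probability.
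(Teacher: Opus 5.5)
Your proposal is correct and follows the same route as the paper's proof in Appendix~\ref{sec::DQ-OT-proof}: one simulator per non-colluding party, and in the receiver case the unchosen pad $\h(\beta_{\st 1-s}^{\st y_{1-s}})\oplus m_{\st 1-s}$ is replaced by a uniform string using CDH in the random-oracle model, where your explicit reduction (outputting $Q\cdot(g^{\st b})^{\st x}$ for a guessed oracle query) makes the paper's informal step precise. Your simulators for $\p_{\st 1}$ and \se are more careful than the paper's: you reuse the real $C$ and set $\delta_{\st 1}=C/\delta_{\st 0}$ (resp.\ $\beta_{\st 1}=C/\beta_{\st 0}$), which gives perfect simulation with no assumption, whereas the appendix invokes DL and samples $\delta'_{\st 0},\delta'_{\st 1}$ (and a fresh $C'$) independently, which would violate the publicly checkable relation $\delta_{\st 0}\cdot\delta_{\st 1}=C$.
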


%
\begin{proof}[Proof sketch]
We argue security via simulation for each party in~Definition \ref{def::DQ-OT-sec-def}. 
Each proxy receives only a one-time-pad share $s_{\st i}$ of $s$ (uniform on
$\{0,1\}$) and random $r_{\st i}$, and its remaining view consists of group elements whose
exponents are hidden under DL, so its view can be simulated without knowledge of $s$. 
Moreover, \se sees only a random $\beta_{\st 0}$ and $\beta_1=C/\beta_{\st 0}$, so its view is independent of $s$ and is simulatable.   For \re, the ciphertext component for $m_{\st 1-s}$ is a one-time pad under key $\h(\beta_{\st 1-s}^{\st y_{\st 1-s}})$, and computing $\beta_{\st 1-s}^{\st y_{\st 1-s}}$ from $(\beta_{\st 1-s}, g^{\st y_{\st 1-s}})$ breaks CDH; as $\h$ is modelled as a RO, this pad is indistinguishable from random, so the transcript is simulatable given only $m_{\st s}$.
\end{proof}
Appendices \ref{sec::DQ-OT-proof} and \ref{sec::DQ-OT-proof-of-correctness} present the full proof of Theorem \ref{theo::DQ-OT-sec} and the proof of DQ-OT's correctness, respectively. 

Note that in the protocol, we send $\beta_{\st 1}$ to maintain uniformity in the query format across our OT protocols. Equivalently, $\p_{\st 1}$ can send only $\beta_{\st 0}$ and $S$ can reconstruct $\beta_{\st 1}$.

\noindent\textbf{Naïve Approach.} One might instead let the receiver generate a standard OT query, encrypt it, and send the ciphertext through a proxy. Under semantic security, the proxy learns nothing from such a ciphertext, so this naïve design offers no security disadvantage relative to DQ-OT. However, it does not achieve delegation: the receiver must construct the OT query locally, and the proxy performs no part of the query-generation task. In contrast, DQ-OT (and its variants) shifts all heavy operations in the query-generation phase to the proxies and leaves the receiver with only basic operations.



\section{Delegated-Unknown-Query OT}\label{sec::DUQ-OT}

%

In certain cases, the receiver itself may not know the value of query $s$. Instead, the query is issued by a third-party query issuer (\tp).  
In this section, we present a new variant of \dqot, called Delegated-Unknown-Query 1-out-of-2 OT (\duqot). It enables \tp to issue the query while (a) preserving the security of  \dqot and (b) preserving the privacy of query $s$ from \re.

%
%
%
%
\subsection{Security Definition}\label{sec::DUQ-OT-definition}

The functionality that \duqot computes takes as input (a) a pair of messages $(m_{\st 0}, m_{\st 1})$ from \se, (b) empty strings \empt from $\p_{\st 1}$, (c)  \empt from $\p_{\st 2}$, (d)   \empt from $\re$, and (e) the index $s$ (where $s\in \{0, 1\}$) from \tp. It outputs an empty string $\empt$ to  \se, \tp, $\p_{\st 1}$, and $\p_{\st 2}$, and outputs the message with index $s$, i.e., $m_{\st s}$, to \re. More formally, we define the functionality as: 
 $\mathcal{F}_{\scriptscriptstyle\duqot}:\big((m_{\st 0}, m_{\st 1}), \empt, \empt, \empt, s\big) \rightarrow (\empt, \empt, \empt, m_{\st s}, \empt)$. 
Next, we present a formal definition of \duqot.

\begin{definition}[\duqot]\label{def::DUQ-OT-sec-def} Let $\mathcal{F}_{\scriptscriptstyle\duqot}$ be the functionality defined above. A protocol $\Gamma$ realizes $\mathcal{F}_{\scriptscriptstyle\duqot}$ in the presence of passive adversaries, if for  every PPT adversary \adv in the real model, there exists a non-uniform PPT simulator \simm  in the ideal model, such that:
%
\begin{equation}\label{equ::DUQ-OT-sender-sim-}
\begin{split}
\Big\{\simm_{\st\se}\big((m_{\st 0}, m_{\st 1}), \empt\big)\Big\}_{\st m_{\st 0}, m_{\st 1}, s}\stackrel{c}{\equiv}  \Big\{\view_{\st \se}^{\st \Gamma}\big((m_{\st 0}, m_{\st 1}), \empt,  \empt, \empt, s\big) \Big\}_{\st m_{\st 0}, m_{\st 1}, s}
\end{split}
\end{equation}
\begin{equation}\label{equ::DUQ-OT-server-sim-}
\begin{split}
\Big\{\simm_{\st\p_i}(\empt, \empt)\Big\}_{\st m_{\st 0}, m_{\st 1}, s}\stackrel{c}{\equiv} 
\Big\{\view_{\st\p_i}^{\st \Gamma}\big((m_{\st 0}, m_{\st 1}), \empt,  \empt, \empt, s\big) \Big\}_{\st m_0, m_1, s}
\end{split}
\end{equation}
\begin{equation}\label{equ::DUQ-OT-t-sim-}
\begin{split}
\Big\{\simm_{\st\tp}(s, \empt)\Big\}_{\st m_0, m_1, s}\stackrel{c}{\equiv}  \Big\{\view_{\st\tp}^{\st \Gamma}\big((m_{\st 0}, m_{\st 1}), \empt,  \empt, \empt, s\big) \Big\}_{\st m_0, m_1, s}
\end{split}
\end{equation}
\begin{equation}\label{equ::DUQ-OT-reciever-sim-}
\begin{split}
&\Big\{\simm_{\st\re}\Big(\empt, \mathcal{F}_{\scriptscriptstyle\duqot}\big((m_{\st 0}, m_{\st 1}), \empt,  \empt,\empt, s\big)\Big)\Big\}_{\st m_0, m_1, s}\stackrel{c}{\equiv}\\  &\Big\{\view_{\st\re}^{\st \Gamma}\big((m_{\st 0}, m_{\st 1}), \empt,  \empt, \empt, s\big) \Big\}_{\st m_0, m_1, s}
\end{split}
\end{equation}

for all $i$,  $i\in \{1,2\}$. Since \duqot is a variant of \dqot, it also supports efficiency and \spc, as discussed in Section \ref{sec::sec-def}.

\end{definition}


\subsection{Protocol}\label{sec::DUQ-OT-protocol}
In this section, we present DUQ-OT that realizes \duqot.  

\subsubsection{Main Challenge to Overcome} 
One of the primary differences between DUQ-OT and previous OTs in the literature (and DQ-OT) is that in DUQ-OT, \re does not know the secret index $s$. The knowledge of $s$ would help \re pick the suitable element from \se's response; for instance, in the DQ-OT, it picks  $e_s$ from $(e_{\st 0}, e_{\st 1})$. Then, it can extract the message from the chosen element. In DUQ-OT, to enable \re to extract the desirable message from \se's response without the knowledge of $s$, we rely on the following observation and technique. 
We know that (in any OT) after decrypting $e_{\st s-1}$, \re obtains a value indistinguishable from a random value (otherwise, it would learn extra information about $m_{\st s-1}$). 
Thus, if \se imposes a certain publicly known structure on messages $(m_{\st 0}, m_{\st 1})$, then after decrypting \se's response, only $m_{\st s}$ would preserve the same structure. 

In DUQ-OT, \se imposes a publicly known structure to $(m_{\st 0}, m_{\st 1})$  and then computes the response. Given the response, \re tries to decrypt \emph{every} message it received from \se and accepts only the result that has the structure.

\subsubsection{An Overview} 
DUQ-OT operates as follows. First, \re picks two random values and sends each to a $\p_{\st i}$. Also, \tp splits the secret index $s$ into two shares and sends each share to a $\p_{\st i}$. Moreover, \tp selects a random value $r_{\st 3}$ and sends it to \re and \se. Given the messages received from \re and \tp, each $\p_{\st i}$ generates queries the same way they do in DQ-OT. 
Given the final query pair and $r_{\st 3}$,  \se first appends $r_{\st 3}$ to $m_{\st 0}$ and $m_{\st 1}$ and then computes the response the same way it does in DQ-OT, with the difference that it also randomly permutes the elements of the response pair.  Given the response pair and $r_{\st 3}$, \re decrypts each element in the pair and accepts the result that contains $r_{\st 3}$. 
Figure \ref{fig::DQ-OT-with-unknown-query} presents DUQ-OT in more detail. 

\begin{figure}[!htbp]
\setlength{\fboxsep}{.9pt}
\begin{center}
    \begin{tcolorbox}[enhanced, right=1mm,
    drop fuzzy shadow southwest,
    colframe=black,colback=white]
\begin{enumerate}[leftmargin=1mm]

\item \underline{\textit{$\se$-side Initialization:}} 
$\mathtt{Init}(1^{\st \lambda})\rightarrow pk$
\begin{enumerate}

\item choose a sufficiently large prime number $p$.

\item select a random element
$C \stackrel{\st \$}\leftarrow \mathbb{Z}_p$ and generator $g$.
\item publish $pk=(C, p, g)$. 

\end{enumerate}

\item \underline{\textit{\re-side Delegation:}} $\mathtt{\re.Request}( pk)\hspace{-.6mm}\rightarrow\hspace{-.6mm} req=(req_{\st 1}, req_{\st 2})$
\begin{enumerate}

\item pick two uniformly random values: 
$r_{\st 1}, r_{\st 2} \stackrel{\st\$}\leftarrow\mathbb{Z}_{\st p}$.

\item send $req_{\st 1}= r_{\st 1}$ to $\p_{\st 1}$ and $req_{\st 2}=r_{\st 2}$ to $\p_{\st 2}$.

\end{enumerate}

\item \underline{\textit{$\tp$-side Query Generation:}}
$\mathtt{\tp.Request}(1^{\st \lambda}, s, pk)\hspace{-.7mm}\rightarrow\hspace{-.7mm} (req'_{\st 1}, req'_{\st 2}, sp_{\st \se})$

\begin{enumerate}

\item split  the private index $s$ into two shares $(s_{\st 1}, s_{\st 2})$ by calling  $\ses(1^{\st\lambda}, s, 2, 2)\rightarrow (s_{\st 1}, s_{\st 2})$.

\item pick a uniformly random value: $r_{\st 3} \stackrel{\st\$}\leftarrow\{0,1\}^{\st\lambda}$.

\item send $req'_{\st 1} =s_{\st 1}$ to  $\p_{\st 1}$, $req'_{\st 2}=s_{\st 2}$ to  $\p_{\st 2}$. It also sends secret parameter $sp_{\st \se}=r_{\st 3}$ to \se and $sp_{\st \re} = (req'_{\st 2}, sp_{\st \se})$ to \re.

\end{enumerate}

\item \underline{\textit{$\p_{\st 2}$-side Query Generation:}}
$\mathtt{\p_{\st 2}.GenQuery}(req_{\st 2}, req'_{\st 2}, pk)\rightarrow q_{\st 2}$

\begin{enumerate}

\item compute a pair of partial queries:
%
%
  $$\delta_{\st s_2}= g^{\st r_2},\ \  \delta_{\st 1-s_2} = \frac{C}{g^{\st r_2}}$$
  
\item send $q_{\st 2}=(\delta_{\st 0}, \delta_{\st 1})$ to  $\p_{\st 1}$. 

\end{enumerate}

\item\underline{\textit{$\p_{\st 1}$-side Query Generation:}}
$\mathtt{\p_{\st 1}.GenQuery}(req_{\st 1},req'_{\st 1}, q_{\st 2},pk)\rightarrow q_{\st 1}$

\begin{enumerate}

\item compute a pair of final queries as: 
%
%
$$\beta_{\st s_1}=\delta_{\st 0}\cdot g^{\st r_1},\ \ \beta_{\st 1-s_1}=\frac{\delta_{\st 1}} {g^{\st r_1}}$$

\item send $q_{\st 1} =(\beta_{\st 0}, \beta_{\st 1})$ to  $\se$.

\end{enumerate}

\item\underline{\textit{\se-side Response Generation:}} 
$\mathtt{GenRes}(m_{\st 0}, m_{\st 1}, pk, q_{\st 1}, sp_{\st \se})\rightarrow res$

\begin{enumerate}

\item abort if  $C \neq \beta_{\st 0}\cdot \beta_{\st 1}$.
\item pick two uniformly random values: 
$y_{\st 0}, y_{\st 1}  \stackrel{\st\$}\leftarrow\mathbb{Z}_{\st p}$.
    
\item compute a response pair $(e_{\st 0}, e_{\st 1})$ as follows:
$$e_{\st 0} := (e_{\st 0,0}, e_{\st 0,1}) = (g^{\st y_0}, \g(\beta_{\st 0}^{\st y_0}) \oplus (m_{\st 0}||r_{\st 3}))$$
$$e_{\st 1} := (e_{\st 1,0}, e_{\st 1,1}) = (g^{\st y_1}, \g(\beta_{\st 1}^{\st y_1}) \oplus (m_{\st 1}||r_{\st 3}))$$

\item randomly permute the elements of the pair $(e_{\st 0}, e_{\st 1})$ as follows: $\pi(e_{\st 0}, e_{\st 1})\rightarrow ({e}'_{\st 0}, {e}'_{\st 1})$.

\item send $res=(e'_{\st 0}, e'_{\st 1})$ to \re. 

\end{enumerate}

\item\underline{\textit{\re-side Message Extraction:}} 
$\mathtt{Retrieve}(res, req, pk, sp_{\st \re})\hspace{-.6mm}\rightarrow\hspace{-.8mm} m_{\st s}$

\begin{enumerate}

\item set $x=r_{\st 2}+r_{\st 1}\cdot(-1)^{\st s_2}$. 

\item retrieve message $m_{\st s}$ as follows. 
 $\forall i, 0\leq i\leq1:$

\begin{enumerate}

\item set $y=\g(({e}'_{\st i, 0})^{\st x})\oplus {e}'_{\st i, 1}$.

\item  call $\parse(\gamma, y)\rightarrow (\ux, \uy)$.

\item set  $m_{\st s}=\ux$, if $\uy=r_{\st 3}$.

\end{enumerate}

\end{enumerate}

\end{enumerate}
\end{tcolorbox}
\end{center}
\vspace{-2.5mm}
    \caption{DUQ-OT: Our $1$-out-of-$2$ OT that supports query delegation while preserving the privacy of query from \re. In the protocol, $\g$ is a hash function, $\pi$ is a random permutation, and $\$$ denotes picking a value uniformly at random.  
    }
    \label{fig::DQ-OT-with-unknown-query}
    \vspace{-3mm}
\end{figure}

\begin{theorem}\label{theo::DUQ-OT-sec}
%
If  
DL, CDH, and RO assumptions hold and random permutation $\pi$  is secure, DUQ-OT (presented in Figure \ref{fig::DQ-OT-with-unknown-query}) securely computes $\mathcal{F}_{\scriptscriptstyle\duqot}$ (defined in Section \ref{sec::DUQ-OT-definition}) in the presence of semi-honest adversaries, 
%
%
w.r.t. Definition~\ref{def::DUQ-OT-sec-def}. 

\end{theorem}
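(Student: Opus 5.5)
The proof constructs one simulator for each corrupted party, as Definition~\ref{def::DUQ-OT-sec-def} requires, and reuses the arguments behind Theorem~\ref{theo::DQ-OT-sec} wherever the protocol coincides with DQ-OT. Correctness is checked first. With $x=r_{\st 2}+r_{\st 1}\cdot(-1)^{\st s_2}$ one has $\beta_{\st s}=g^{\st x}$, exactly as in DQ-OT. Hence decrypting the entry of $res$ that came from $e_{\st s}$ yields $m_{\st s}\|r_{\st 3}$. Decrypting the other entry yields $\g(\cdot)\oplus(m_{\st 1-s}\|r_{\st 3})$, which looks random, so its last $\lambda$ bits equal $r_{\st 3}$ only with probability about $2^{-\lambda}$.

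\textbf{Corrupted $\p_{\st 1}$ or $\p_{\st 2}$.} The view of $\p_{\st i}$ is $(r_{\st i},s_{\st i})$, plus $(\delta_{\st 0},\delta_{\st 1})$ in the case of $\p_{\st 1}$.
\begin{itemize}
\item $s_{\st i}$ is a uniform one-time-pad share of $s$, and $r_{\st i}$ is uniform.
\item For $\p_{\st 1}$, the pair $(\delta_{\st 0},\delta_{\st 1})$ is a uniform group element together with $C$ divided by it, regardless of $s_{\st 2}$, because $r_{\st 2}$ is uniform.
\end{itemize}
The simulator therefore samples a random bit, a random exponent and, for $\p_{\st 1}$, a random $\delta_{\st 0}$ with $\delta_{\st 1}=C/\delta_{\st 0}$. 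This gives an identical distribution, as in Theorem~\ref{theo::DQ-OT-sec}.

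\textbf{Corrupted $\tp$.} Its view contains only its own input $s$ and its own coins $(s_{\st 1},r_{\st 3})$. The simulator picks these coins itself, so the simulation is trivial.

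\textbf{Corrupted $\se$.} Its view is $(m_{\st 0},m_{\st 1},r_{\st 3},\beta_{\st 0},\beta_{\st 1})$ together with its own coins. Here $\beta_{\st 0}$ is uniform, since it contains the fresh exponent $r_{\st 1}$ or $r_{\st 2}$, and $\beta_{\st 1}=C/\beta_{\st 0}$. The simulator samples $r_{\st 3}$ and $\beta_{\st 0}$ uniformly.

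\textbf{Corrupted $\re$.} This is the main obstacle. The view of $\re$ is $(r_{\st 1},r_{\st 2},s_{\st 2},r_{\st 3},e'_{\st 0},e'_{\st 1})$, and two things must be shown: the view reveals nothing about $m_{\st 1-s}$, and it reveals nothing about $s$.

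The simulator receives only $m_{\st s}$. It proceeds as follows.
\begin{enumerate}
\item Sample $r_{\st 1},r_{\st 2}$, a uniform bit $s_{\st 2}$ and $r_{\st 3}$, and set $x$ as in the protocol.
\item Build one ciphertext honestly: $(g^{\st y},\g(g^{\st xy})\oplus(m_{\st s}\|r_{\st 3}))$.
\item Build the other ciphertext as $(g^{\st y'},u)$ with $u$ uniform.
\item Place the two ciphertexts in a uniformly random order.
\end{enumerate}

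Indistinguishability follows by a hybrid argument.
\begin{itemize}
\item \emph{Hybrid 1: real to random pad.} In the real protocol, the second ciphertext uses the pad $\g(\beta_{\st 1-s}^{\st y_{\st 1-s}})$. Here $\beta_{\st 1-s}=C/g^{\st x}$ and $\re$ does not know $\log_g C$. So $\re$ can query $\g$ on $\beta_{\st 1-s}^{\st y_{\st 1-s}}$ only by computing $g^{\st (a-x)y}$, and from that $g^{\st ay}$, given $(C,g^{\st y})$. That would break CDH, with the DL assumption ensuring $a$ stays hidden. In the RO model the pad is therefore uniform except with negligible probability, and the simulated $u$ is distributed correctly.
\item \emph{Hybrid 2: permutation.} The order produced by $\pi$ is uniform and independent of $s$, by the security of the random permutation. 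This is exactly what hides $s$ from $\re$. One point needs care: $s_{\st 2}$, which $\re$ sees, is independent of $s$ because $s_{\st 1}$ is a uniform pad.
\end{itemize}
Combining these steps gives computational indistinguishability for every party, which proves Theorem~\ref{theo::DUQ-OT-sec}.
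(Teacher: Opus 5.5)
Your proposal is correct and follows essentially the same route as the paper. It gives one simulator per party, and the receiver simulator matches the paper's: it builds one honest ciphertext for $m_s\|r_3$, replaces the other pad by a uniform string (justified via DL/CDH in the random-oracle model), and uses the random permutation to hide $s$. Your $P_1$ simulator, which outputs $(\delta_0, C/\delta_0)$ as a perfect simulation, is in fact tighter than the paper's, which samples $\delta'_0,\delta'_1$ independently and could be distinguished by checking $\delta_0\delta_1=C$.
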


%
\begin{proof}[Proof sketch]

When \se is corrupted, the received queries are distributed independently of $s$ (under DL); therefore, a simulator can sample a matching transcript without knowing $s$. 
When \re is corrupted, learning anything about $m_{\st 1-s}$ would require computing the
unchosen DH key (CDH); modelling $\g$ as a random oracle, the unchosen pad is
indistinguishable from uniform; moreover, the pad $r_{\st 3}$ is chosen uniformly at random and independent of the index $s$; thus, the view can be simulated given only $m_{\st s}$. If a proxy is corrupted, it sees only uniformly random shares and fresh randomness, plus group elements with hidden exponents; thus, its view is simulatable under DL.  The view of \tp can be easily simulated; it has input $s$; however, it receives no messages from its counterparts (other than the public parameter $pk$) and receives no output from the protocol. 
\end{proof}

Appendix \ref{sec::DUQ-OT-Security-Proof} presents full proof of Theorem \ref{theo::DUQ-OT-sec}.

\section{Delegated-Query Multi-Receiver Oblivious Transfers}\label{sec::Multi-Receiver-OT}

In this section, we present two new variants of \dqot; namely, (1) Delegated-Query Multi-Receiver OT (\dqothf) and\\ (2) Delegated-Unknown-Query Multi-Receiver OT (\duqothf). They are suitable for the \emph{multi-receiver} setting in which the sender maintains a (large) database containing $z$ pairs of messages $\bm m=[(m_{\st 0, 0},m_{\st 1, 0}),...,$ $ (m_{\st 0, z-1},$ $m_{\st 1, z-1})]$. 

In this setting, each pair, say $v$-th pair $(m_{\st 0, v},$ $m_{\st 1, v})\in \bm m$ is related to  a receiver,  $\re$, where $0\leq v\leq z-1$. Both variants (in addition to offering the efficiency, \spc, and security guarantee of \dqot) ensure that (i) a receiver learns nothing about the total number of receivers/pairs (i.e., $z$) and (ii) the sender learns nothing about which receiver is sending the query, i.e., a message pair's index for which a query was generated.  In the remainder of this section, we discuss these new variants.

\vspace{-2mm}
\subsection{Delegated-Query Multi-Receiver OT}\label{sec::DQ-OT-HF}
The first variant \dqothf considers the setting where server $\p_{\st 1}$ or $\p_{\st 2}$ knows a client's related pair's index in the sender's database.


 \subsubsection{Security Definition}\label{sec::DQOT-HF}

The functionality that \dqothf computes takes as input (i)  a vector of messages $\bm{m}=[(m_{\st 0, 0},m_{\st 1, 0}),...,$ $ (m_{\st 0, z-1},$ $m_{\st 1, z-1})]$ from \se, (ii) an index $v$ of a pair in $\bm{m}$ from $\p_{\st 1}$, (iii)  empty string  \empt from $\p_{\st 2}$, and (iv) the index $s$ (where $s\in \{0, 1\}$) from \re. It outputs an empty string $\empt$ to  \se, $z$ to $\p_{\st 1}$, $\empt$ to $\p_{\st 2}$, and outputs to \re $s$-th message from $v$-th pair in the vector, i.e., $m_{\st s, v}$. Formally, we define the functionality as: 
 $\mathcal{F}_{\scriptscriptstyle\dqothf}:\big([(m_{\st 0, 0},m_{\st 1, 0}),...,$ $ (m_{\st 0, z-1},$ $m_{\st 1, z-1})], v, \empt, s\big) \rightarrow (\empt, z, \empt, m_{\st s, v})$, where $v\in\{0,..., z-1\}$. Next, we present a formal definition of \dqothf.

\begin{definition}[\dqothf]\label{def::DQ-OT-HF-sec-def} Let $\mathcal{F}_{\scriptscriptstyle\dqothf}$ be the functionality defined above. We say that protocol $\Gamma$ realizes $\mathcal{F}_{\scriptscriptstyle\dqothf}$ in the presence of passive adversaries, if for  every non-uniform PPT adversary \adv in the real model, there exists a non-uniform PPT simulator \simm  in
the ideal model, such that:
%
\begin{equation}\label{equ::sender-sim-DQ-OT-HF}
\Big\{\simm_{\st\se}\big(\bm{m}, \empt\big)\Big\}_{\st \bm{m}, v, s}\stackrel{c}{\equiv} \Big\{\view_{\st\se}^{\st \Gamma}\big(\bm{m}, v,  \empt, s\big) \Big\}_{\st \bm{m}, v, s}
\end{equation}
\begin{equation}\label{equ::server1-sim-DQ-OT-HF}
\Big\{\simm_{\st\p_1}(v, z)\Big\}_{\st \bm{m}, v, s}\stackrel{c}{\equiv} \Big\{\view_{\st\p_1}^{\st \Gamma}\big(\bm{m}, v,  \empt, s\big) \Big\}_{\st \bm{m}, v, s}
\end{equation}
\begin{equation}\label{equ::server2-sim-DQ-OT-HF}
\Big\{\simm_{\st\p_2}(\empt, \empt)\Big\}_{\st \bm{m}, v, s}\stackrel{c}{\equiv} \Big\{\view_{\st\p_2}^{ \st\Gamma}\big(\bm{m}, v,  \empt, s\big) \Big\}_{\st \bm{m}, v, s}
\end{equation}
\begin{equation}\label{equ::reciever-sim-DQ-OT-HF}
\begin{split}
\Big\{\simm_{\st\re}\Big(s, \mathcal{F}_{\scriptscriptstyle\dqothf}\big(\bm{m}, v,  \empt, s\big)\Big)\Big\}_{\st \bm{m}, v,  s} \hspace{-2mm}
\stackrel{c}{\equiv}    \Big\{\view_{\st\re}^{ \st\Gamma}\big(\bm{m}, v,  \empt, s\big) \Big\}_{\st \bm{m}, v, s}
\end{split}
\end{equation}
where $\bm{m}=[(m_{\st 0, 0},m_{\st 1, 0}),..., (m_{\st 0, z-1}, m_{\st 1, z-1})]$.

\end{definition}


\subsubsection{Strawman Approaches}\label{sec::OT-HFstrawman-aprroach}

One may consider using one of the following ideas in the multi-receiver setting: 

\begin{enumerate}[leftmargin=5mm]
    \item \underline{\textit{Using an existing single-receiver OT, e.g., in \cite{IshaiKNP03}}}, employing one of the following approaches:

    \begin{itemize}
        \item \textit{Approach 1}: receiver $\re$ sends a standard OT query to \se which computes the response for all $z$ pairs of messages. Subsequently, \se sends $z$ pair of responses to receiver $\re$ which discards all pairs from the response except for $v$-th pair. $\re$ extracts its message $m_{\st v}$ from the selected pair, similar to a regular $1$-out-of-$2$ OT.  However, this approach results in the leakage of the entire database size to $\re$.

        \item \textit{Approach 2}: $\re$ sends a standard OT query to \se, along with the index $v$ of its record.  This can be perceived as if \se holds a single record/pair. Accordingly, \se generates a response in the same manner as it does in regular $1$-out-of-$2$ OT. Nevertheless, Approach 2 leaks to $\se$ the index $v$ of the record that $\re$ is interested.
    \end{itemize}
    \item \underline{\textit{Using an existing multi-receiver OT, e.g., in \cite{CamenischDN09}}}. This will also come with a privacy cost. The existing multi-receiver OTs reveal the entire database's size to each receiver (as discussed in Section \ref{sec::multi-rec-ot}). In this scenario, a receiver can learn the number of private records other companies have in the same database. This type of leakage is particularly significant, especially when coupled with specific auxiliary information. 
\end{enumerate}
 Hence, a fully private multi-receiver OT is necessary to ensure user privacy in real-world cloud settings.

\subsubsection{Protocol}\label{sec::rdqothf}
We present \rdqothf that realizes \dqothf. We build \rdqothf upon DQ-OT (presented in Figure \ref{fig::DQ-OT}). \rdqothf relies on our observation that in DQ-OT, given the response of \se, $\p_{\st 1}$ cannot learn anything, e.g., about the plaintext messages $m_{\st i}$ of  \se. Below, we formally state it.

\begin{lemma}\label{lemma::two-pairs-indis-}
Let $g$ be a generator of a group $\mathbb{G}$ (defined in Section~\ref{sec::dh-assumption}) whose order is a prime number $p$ and $\log_{\st 2}(p)=\lambda$ is a security parameter. Also, let $(r_{\st 1}, r_{\st 2}, y_{\st 1}, y_{\st 2})$ be elements of $\mathbb{G}$ picked uniformly at random, $C=g^{\st a}$ be a random public value whose discrete logarithm is unknown, $(m_{\st 0}, m_{\st 1})$ be two arbitrary messages, and $\h$ be a hash function modelled as a RO (as defined in Section \ref{sec::notations}), where its output size is $\delta$-bit.  Let $\gamma=\stackrel{\st+}-r_{\st 1} \stackrel{\st+}-r_{\st 2}$, $\beta_{\st 0}=g^{\st a+\gamma}$, and $\beta_{\st 1}=g^{\st a-\gamma}$.  If DL, RO, and CDH assumptions hold, then given $r_{\st 1}, C, g^{\st r_2}$, and $\frac{C}{g^{\st r_2}}$, a PPT distinguisher cannot distinguish (i) $g^{\st y_0}$ and $g^{\st y_1}$ form random elements of $\mathbb{G}$ and (ii) $ \h(\beta_{\st 0}^{\st y_0}) \oplus m_{\st 0}$ and $ \h(\beta_{\st 1}^{\st y_1}) \oplus m_{\st 1}$  from random elements from $\{0, 1\}^{\st\sigma}$, except for a negligible probability  $\mu(\lambda)$.



\end{lemma}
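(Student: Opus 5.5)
Part (i) is immediate and I would dispose of it first. The values $y_{\st 0}, y_{\st 1}$ are sampled by \se uniformly from $\mathbb{Z}_{\st p}$, independently of everything the distinguisher is given, namely $r_{\st 1}, C, g^{\st r_2}, C/g^{\st r_2}$. Since $\mathbb{G}$ has prime order, $g^{\st y_0}$ and $g^{\st y_1}$ are therefore exactly uniform and independent group elements, even conditioned on that auxiliary input. This part is perfect rather than merely computational.

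For (ii), the plan is to reduce to CDH in the random oracle model. I first rewrite the key exponents in terms of the distinguisher's knowledge. The distinguisher knows $r_{\st 1}$, but $r_{\st 2}$ is hidden; under DL it cannot be extracted from $g^{\st r_2}$. From the given values it can compute $\beta_{\st 0}$ and $\beta_{\st 1}$ (up to the sign pattern encoded in $\gamma$), since these are products of $C^{\pm 1}$, $g^{\st \pm r_1}$ and $g^{\st \pm r_2}$. Call them $B_{\st 0}, B_{\st 1}$. So $B_{\st 0}, B_{\st 1}$ should be treated as known group elements whose discrete logarithms $a\pm\gamma$ are unknown, because they involve $a$ and $r_{\st 2}$. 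Each pad $\h(B_{\st j}^{\st y_j})\oplus m_{\st j}$ is uniform in $\{0,1\}^{\st\sigma}$ unless the distinguisher queries $\h$ on the exact point $B_{\st j}^{\st y_j}$. Here I rely on $\h$ being a random oracle with $\sigma$-bit output; the statement's $\delta$ should equal $\sigma$.

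Next comes a hybrid argument in which each pad is replaced by a uniform string. The hybrids differ only on the event $Q_{\st j}$ that the distinguisher queries $\h$ at $B_{\st j}^{\st y_j}$, so it suffices to bound $\Pr[Q_{\st j}]$. To do so, I build a CDH adversary as follows. On a challenge $(X,Y)=(g^{\st u},g^{\st w})$, it plants $X$ as $B_{\st j}$. Concretely, it embeds $X$ into $C$ or into $g^{\st r_2}$: it picks $r_{\st 1}$ itself and sets $g^{\st r_2}$ so that $B_{\st j}=X$, or it sets $C$ accordingly, keeping the other term with a known exponent so that all the given values remain consistent and correctly distributed. It then plants $Y$ as $g^{\st y_j}$. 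The other index is simulated honestly with a known $y_{\st 1-j}$. The adversary runs the distinguisher, and at the end it outputs a uniformly chosen RO query, which hits $g^{\st uw}$ with probability at least $\Pr[Q_{\st j}]/q_{\st H}$. By CDH this quantity is negligible.

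The main obstacle is making this embedding consistent. All four given values must be distributed correctly while only one of them (through $C$ or $g^{\st r_2}$) carries the unknown exponent. The relation $\beta_{\st 0}\beta_{\st 1}=C^{\st 2}$, or equivalently $C$, ties $B_{\st 0}$ and $B_{\st 1}$ together, so the reduction must handle one index at a time. The first thing I would try is to set $g^{\st r_2}:=X\cdot(\text{known})$ and choose $C=g^{\st a}$ with $a$ known. This gives up hiding $a$, but CDH on $B_{\st j}$ still holds because $r_{\st 2}$ is unknown. Finally, summing the two hybrid steps gives a negligible total advantage $\mu(\lambda)$.
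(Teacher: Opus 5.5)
Your proposal is correct and follows the same two-step route as the paper's proof. Part (i) comes from the uniformity of $y_{\st 0}, y_{\st 1}$, and part (ii) from CDH (no PPT adversary can compute $\beta_{\st j}^{\st y_j}$ from the given values) combined with $\h$ being a random oracle. The difference is one of rigour: the paper only asserts the second step, while your hybrid over the bad events $Q_{\st j}$ makes it precise. Your explicit CDH embedding into $g^{\st r_2}$ (with $a$ and $r_{\st 1}$ chosen by the reduction, at a $1/q_{\st H}$ guessing loss) completes it, and also shows that the separate DL assumption on $C$ is not needed beyond CDH.
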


Appendix \ref{sec::proof-of-thorem} presents the proof of Lemma \ref{lemma::two-pairs-indis-}. 
The main idea behind the design of \rdqothf is as follows. Given a message pair from $\p_{\st 1}$,  \se needs to compute the response for all of the receivers and sends the result to  $\p_{\st 1}$, which picks and sends only one pair in the response to the specific receiver who sent the query and discards the rest of the pairs it received from \se. Therefore, \re receives a single pair (so it cannot learn the total number of receivers or the database size), and the server cannot know which receiver sent the query, as it generates the response for all of them. As we will prove, $\p_{\st 1}$ itself cannot learn the actual query of \re, too.

Consider the case where one of the receivers, say  $\re$, wants to send a query. In this case, within \rdqothf, messages $(s_{\st 1}, r_{\st 1})$, $(s_{\st 2}, r_{\st 2})$ and $(\beta_{\st 0}, \beta_{\st 1})$ are generated the same way as they are computed in DQ-OT.  However, given $(\beta_{\st 0}, \beta_{\st 1})$, \se generates $z$ pairs and sends them to $\p_{\st 1}$ who forwards only $v$-th pair to $\re$ and discards the rest.  Given the pair, $\re$ computes the result the same way a receiver does in DQ-OT. Figure \ref{fig::DQHT-OT} in Appendix \ref{sec::DQ-HF-OT-detailed-protocol} presents \rdqothf in detail.


\subsection{Delegated-Unknown-Query Multi-Receiver OT}\label{sec::Delegated-Unknown-Query-OT-HF}

The second variant    \duqothf can be considered as a variant of  \duqot. It is suitable for the setting where servers $\p_{\st 1}$ and $\p_{\st 2}$ do not (and must not) know a client's related index in the sender's database (as well as the index $s$ of the message that the client is interested in).

\subsubsection{Security Definition}\label{sec::DUQOT-HD-def}

The functionality that \duqothf computes takes as input (i)  a vector of messages $\bm{m}=[(m_{\st 0, 0},m_{\st 1, 0}),...,$ $ (m_{\st 0, z-1},$ $m_{\st 1, z-1})]$ from \se, (ii) an index $v$ of a pair in $\bm{m}$ from $\tp$, (iii) the index $s$ of a message in a pair (where $s\in \{0, 1\}$) from \tp, (iv) the total number $z$ of message pairs from $\tp$, (v)  empty string  \empt from $\p_{\st 1}$, (vi)   \empt from $\p_{\st 2}$, and (vii)  \empt from $\re$. It outputs an empty string $\empt$ to  \se and $\tp$, $z$ to $\p_{\st 1}$, $\empt$ to $\p_{\st 2}$, and outputs to \re $s$-th message from $v$-th pair in $\bm{m}$, i.e., $m_{\st s, v}$. Formally, we define the functionality as: $\mathcal{F}_{\scriptscriptstyle\duqothf}:\big([(m_{\st 0, 0},m_{\st 1, 0}),...,$ $ (m_{\st 0, z-1},$ $m_{\st 1, z-1})], (v, s, z), \empt, \empt, \empt\big) \rightarrow ( \empt, \empt, z, \empt, m_{\st s, v})$, where $v\in\{0,..., z-1\}$. Next, we present a formal definition of \duqothf.


\begin{definition}[\duqothf]\label{def::DUQ-OT-HF-sec-def} Let $\mathcal{F}_{\scriptscriptstyle\duqothf}$ be the functionality defined above. We assert that protocol $\Gamma$ realizes $\mathcal{F}_{\scriptscriptstyle\duqothf}$ in the presence of passive adversaries, if for  every non-uniform PPT adversary \adv
in the real model, there exists a non-uniform PPT simulator \simm  in
the ideal model, such that:
\begin{equation}\label{equ::DUQ-OT-HF-sender-sim-}
\begin{split}
\Big\{\simm_{\st\se}\big(\bm{m}, \empt\big)\Big\}_{\st \bm{m}, s}\stackrel{c}{\equiv}  \Big\{\view_{\st\se}^{\st \Gamma}\big(\bm{m}, (v, s, z), \empt,  \empt, \empt \big) \Big\}_{\st \bm{m}, s}
\end{split}
\end{equation}
\begin{equation}\label{equ::DUQ-OT-HF-server-sim-}
\begin{split}
\Big\{\simm_{\st\p_i}(\empt, out_{\st i})\Big\}_{\st \bm{m}, s}\stackrel{c}{\equiv}  \Big\{\view_{\st\p_i}^{ \st\Gamma}\big(\bm{m}, (v, s, z), \empt,  \empt, \empt \big) \Big\}_{\st \bm{m}, s}
\end{split}
\end{equation}
\begin{equation}\label{equ::DUQ-OT-HF-t-sim-}
\begin{split}
\Big\{\simm_{\st\tp}\big((v, s, z), \empt\big)\Big\}_{\st \bm{m}, s}\stackrel{c}{\equiv}  \Big\{\view_{\st\tp}^{ \st\Gamma}\big(\bm{m}, (v, s, z), \empt,  \empt, \empt \big) \Big\}_{\st \bm{m}, s}
\end{split}
\end{equation}
\begin{equation}\label{equ::DUQ-OT-HF-reciever-sim-}
\begin{split}
&\Big\{\simm_{\st\re}\Big(\empt, \mathcal{F}_{\scriptscriptstyle\duqothf}\big(\bm{m}, (v, s, z), \empt,  \empt, \empt \big)\big)\Big)\Big\}_{\st \bm{m}, s}\stackrel{c}{\equiv}\\  &\Big\{\view_{\st\re}^{ \st\Gamma}\big(\bm{m}, (v, s, z), \empt,  \empt, \empt \big) \Big\}_{\st \bm{m}, s}
\end{split}
\end{equation}
where $\bm{m}=[(m_{\st 0, 0},m_{\st 1, 0}),..., (m_{\st 0, z-1}, m_{\st 1, z-1})]$, $out_{\st 1}=z$, $out_{\st 2}=\empt$, and $\forall i$,  $i\in \{1,2\}$. 
\end{definition}

\subsubsection{Protocol}\label{sec::DUQ-OT-HF} We proceed to present \rduqothf that realizes \duqothf. We build \rduqothf upon protocol DUQ-OT (presented in Figure \ref{fig::DQ-OT-with-unknown-query}). \rduqothf mainly relies on Lemma \ref{lemma::two-pairs-indis-} and the following technique.


To fetch a record $m_{\st v}$ ``securely'' from a semi-honest \se that holds a database of the form $\bm{a}=[m_{\st 0}, m_{\st 1}, ..., m_{\st z-1}]^{\st T}$ where $T$ denotes transpose, 
%
%
without revealing which plaintext record we want to fetch, we can perform as follows: 

\begin{enumerate}
\item construct vector $\bm{b}=[b_{\st 0},..., b_{\st z-1}]$, where all $b_{\st i}$s are set to zero except for $v$-th element $b_{\st v}$ which is set to $1$.

\item encrypt each element of $\bm{b}$ using additively homomorphic encryption, e.g., Paillier encryption. Let $\bm{b}'$ be the vector of the encrypted elements. 

 \item send $\bm{b}'$ to the database holder which performs $\bm{b}'\times \bm{a}$ homomorphically, and sends us the single result $res$.
 
 \item decrypt $res$ to discover $m_{\st v}$.\footnote{Such a technique was previously used by Devet \textit{et al.} \cite{DevetGH12} in the ``private information retrieval'' research line.} 
 \end{enumerate}

 In \rduqothf,  $\bm{b}'$ is not sent for each query to  \se. Instead,  it is stored once in one of the servers, for example, $\p_{\st 1}$. Any time \se computes a vector of responses, say $\bm{a}$, to an OT query, it sends $\bm{a}$ to  $\p_{\st 1}$ which computes $\bm{b}'\times \bm{a}$ homomorphically and sends the result to \re which decrypts its and finds the message it was interested.   Thus, $\p_{\st 1}$ \emph{obliviously filters out} all other records of field elements that do not belong to $\re$ and sends to $\re$ only the messages that $\re$ is allowed to fetch. 
 Figures  \ref{fig::DQ-OT-with-unknown-query-and-HF-first-eight-phases} and \ref{fig::DQ-OT-with-unknown-query-and-HF-last-three-phases} present \rduqothf in detail.

%
\begin{figure}[!htbp]
\setlength{\fboxsep}{.9pt}
\begin{center}
    \begin{tcolorbox}[enhanced,width=86mm, 
    drop fuzzy shadow southwest,
    colframe=black,colback=white]

\begin{enumerate}[leftmargin=4.5mm]

\item\label{phase::s-init} \underline{\textit{$\se$-side Initialization:}} 
$\mathtt{Init}(1^{\st \lambda})\rightarrow pk$

It chooses a large random prime number $p$, random element
$C \stackrel{\st \$}\leftarrow \mathbb{Z}_p$, and generator $g$. It publishes $pk=(C, p, g)$.

\item\label{DUQOT-HT::gen-key} \underline{\textit{$\re$-side One-off Setup:}}
$\mathtt{\re.Setup}(1^{\st \lambda})\rightarrow (pk_{\st j}, sk_{\st j})$
%


It generates a key pair for the  homomorphic encryption, by calling $\keygen(1^{\st\lambda})\rightarrow(sk_{\st j}, pk_{\st j})$. It send $pk_{\st j}$  to \tp and $\se$.

\item  \underline{\textit{\tp-side One-off Setup:}}
$\mathtt{\tp.Setup}(z, pk_{\st j})\rightarrow \bm{w}_{\st j}$

\begin{enumerate}

\item initialize an empty vector $\bm{w}_{\st j}=[]$ of size $z$.

\item create a compressing vector, by setting $v$-th position of $\bm{w}_{\st j}$ to encrypted $1$ and setting the rest of $z-1$ positions to encrypted $0$. $ \forall t, 0\leq t\leq z-1:$

\begin{enumerate}

\item  set  $d=1$, if $t=v$; set $d=0$, otherwise.

\item  append $\enc(pk_{\st j}, d)$ to  $\bm{w}_{\st j}$.

\end{enumerate}

\item send $\bm{w}_{\st j}$ to $\p_{\st 1}$.

\end{enumerate}

\item \underline{\textit{$\re$-side Delegation:}}
$\mathtt{\re.Request}( pk)\hspace{-.6mm}\rightarrow\hspace{-.6mm} req=(req_{\st 1}, req_{\st 2})$

\begin{enumerate} 

\item pick random values: $r_{\st 1}, r_{\st 2} \stackrel{\st\$}\leftarrow\mathbb{Z}_{\st p}$.

\item send $req_{\st 1} = r_{\st 1}$ to $\p_{\st 1}$ and $req_{\st 2}= r_{\st 2}$ to $\p_{\st 2}$.

\end{enumerate}

\item \underline{\textit{$\tp$-side Query Generation:}}
$\mathtt{\tp.Request}(1^{\st \lambda}, s,$ $ pk)\hspace{-1mm}\rightarrow\hspace{-1mm} (req'_{\st 1}, $ $req'_{\st 2},$ $ sp_{\st \se})$

\begin{enumerate}

\item split  the private index $s$ into two shares $(s_{\st 1}, s_{\st 2})$ by calling  $\ses(1^{\st \lambda}, s, 2, 2)\rightarrow (s_{\st 1}, s_{\st 2})$.

\item pick a uniformly random value: $r_{\st 3} \stackrel{\st\$}\leftarrow\{0,1\}^{\st\lambda}$.

\item send $req'_{\st 1} =s_{\st 1}$ to  $\p_{\st 1}$, $req'_{\st 2}=s_{\st 2}$ to  $\p_{\st 2}$. It sends secret parameter $sp_{\st \se}=r_{\st 3}$ to \se and $sp_{\st \re} = (req'_{\st 2}, sp_{\st \se})$ to \re.

\end{enumerate}
\item \underline{\textit{$\p_{\st 2}$-side Query Generation:}}
$\mathtt{\p_{\st 2}.GenQuery}(req_{\st 2}, req'_{\st 2}, pk)\rightarrow q_{\st 2}$

\begin{enumerate}

\item compute queries:
  $\delta_{\st s_2}= g^{\st r_2},\ \  \delta_{\st 1-s_2} = \frac{C}{g^{\st r_{\st 2}}}$.
\item send $q_{\st 2}=(\delta_{\st 0}, \delta_{\st 1})$ to  $\p_{\st 1}$. 

\end{enumerate}

\item\label{DUQOT-HT::gen-P1-side-q-gen}\underline{\textit{$\p_{\st 1}$-side Query Generation:}}
$\mathtt{\p_{\st 1}.GenQuery}(req_{\st 1},req'_{\st 1}, q_{\st 2},pk)\hspace{-1.1mm}\rightarrow\hspace{-1.1mm} q_{\st 1}$

\begin{enumerate}

\item compute  queries as: 
%
%
$\beta_{\st s_{\st 1}}=\delta_{\st 0}\cdot g^{\st r_1}, \beta_{\st 1-s_1}=\frac{\delta_{\st 1}} {g^{\st r_1}}$ 

\item send $q_{\st 1}=(\beta_{\st 0}, \beta_{\st 1})$ to  $\se$.
\end{enumerate}


\item\label{DUQOT-HT::gen-res}\underline{\textit{\se-side Response Generation:}} 
$\mathtt{GenRes}(m_{\st 0,0}, m_{\st 1, 0},\ldots, m_{\st 0,z-1},$ $ m_{\st 1, z-1}, pk, q_{\st 1}, sp_{\st \se})\rightarrow res$

\begin{enumerate}

\item abort if  $C \neq \beta_{\st 0}\cdot \beta_{\st 1}$.

\item compute a response as follows. $\forall t, 0\leq t \leq z-1:$
\begin{enumerate}[leftmargin=1.2mm]
\item   pick two random values $y_{\st 0, t}, y_{\st 1, t}  \stackrel{\$}\leftarrow\mathbb{Z}_{\st p}$.  
  
\item  compute  response:

 $ e_{\st 0, t} := (e_{\st 0, 0, t}, e_{\st 0, 1, t}) = (g^{\st y_{\st 0, t}}, \g(\beta_{\st 0}^{\st y_{0, t}}) \oplus (m_{\st 0, t}||r_{\st 3}))$
      
$e_{\st 1, t} := (e_{\st 1, 0, t}, e_{\st 1, 1, t}) = (g^{\st y_{1, t}}, \g(\beta_{\st 1}^{\st y_{\st 1, t}}) \oplus (m_{\st 1, t}||r_{\st 3})) $ 
   
\item   randomly permute the elements of each pair $(e_{\st 0, t}, e_{\st 1, t})$ as $\pi(e_{\st 0, t}, e_{\st 1, t})\rightarrow ({e}'_{\st 0, t}, {e}'_{\st 1, t})$.
   
\end{enumerate}

\item     send $res=(e'_{\st 0, 0}, e'_{\st 1, 0}),\ldots, (e'_{\st 0, z-1}, e'_{\st 1, z-1}) \text{ to } \p_{\st 1}$.
\end{enumerate}



    
 















\end{enumerate}
\vspace{-1mm}
\end{tcolorbox}
\end{center}
\vspace{-4mm}
    \caption{Phases \ref{phase::s-init}--\ref{DUQOT-HT::gen-res} of \rduqothf. 
    }
    \label{fig::DQ-OT-with-unknown-query-and-HF-first-eight-phases}
    \vspace{-3mm}
\end{figure}

%
\begin{figure}[!htbp]
\setlength{\fboxsep}{.9pt}
\begin{center}
    \begin{tcolorbox}[enhanced,right=1mm,  
    drop fuzzy shadow southwest,
    colframe=black,colback=white]

\begin{enumerate}[leftmargin=4mm,start=9]

\item\label{DUQOT-HT::oblivius-filter}\underline{\textit{$\p_{\st 1}$-side Oblivious Filtering:}} 
$\mathtt{OblFilter}(res, pk_{\st j}, \bm{w}_{\st j})\hspace{-.7mm}\rightarrow\hspace{-.6mm} res'$

\begin{enumerate}

\item compress \se's response using vector  $\bm{w}_{\st j}$ as follows. $\forall i,i', 0\leq i,i'\leq 1:$
    
$o_{\st i, i'}= (e'_{\st i, i',0}\hmul \bm{w}_{\st j}[0])\hadd...\hadd  (e'_{\st i,i', z-1}\hmul \bm{w}_{\st j}[z-1])$.
 
\item  send $res'=(o_{\st 0, 0}, o_{\st 0, 1}), (o_{\st 1, 0}, o_{\st 1, 1}) \text{ to } \re$.

\end{enumerate}

\item\label{DUQOT-HT::message-ext} \underline{\textit{\re-side Message Extraction:}}  
$\mathtt{Retrieve}(res', req,  sk_{\st j}, pk,$ $ sp_{\st \re})\hspace{-.6mm}\rightarrow\hspace{-.8mm} m_{\st s}$

\begin{enumerate}

\item decrypt the response from $\p_{\st 1}$ as follows: 

 $\forall i,i', 0\leq i,i'\leq 1:$
$ \dec(sk_{\st j}, o_{\st i,i'})\rightarrow o'_{\st i,i'}$.

\item set $x=r_{\st 2}+r_{\st 1}\cdot(-1)^{\st s_2}$.

\item retrieve message $m_{\st s, v}$ as follows. 
 $\forall i, 0\leq i\leq1:$

\begin{enumerate}

\item set $y=\g(({o}'_{\st i, 0})^{\st x})\oplus {o}'_{\st i, 1}$.

\item call $\parse(\gamma, y)\rightarrow (\ux, \uy)$.

\item  set $m_{\st s, v}=\ux$, if $\uy=r_{\st 3}$.

\end{enumerate}

\end{enumerate}


\end{enumerate}
\vspace{-1mm}
\end{tcolorbox}
\end{center}
\vspace{-3mm}
    \caption{Phases \ref{DUQOT-HT::oblivius-filter} and \ref{DUQOT-HT::message-ext} of \rduqothf.    
    }
    \label{fig::DQ-OT-with-unknown-query-and-HF-last-three-phases}
    \vspace{-3mm}
\end{figure}




\begin{theorem}\label{theo::DUQ-OTHF-2-sec}
Let $\mathcal{F}_{\scriptscriptstyle\duqothf}$ be the functionality defined in Section \ref{sec::DUQOT-HD-def}. If  
DL, CDH, and RO assumptions hold, and additive homomorphic encryption satisfies IND-CPA, then \rduqothf (presented in Figures \ref{fig::DQ-OT-with-unknown-query-and-HF-first-eight-phases} and \ref{fig::DQ-OT-with-unknown-query-and-HF-last-three-phases}) securely computes $\mathcal{F}_{\scriptscriptstyle\duqothf}$ in the presence of semi-honest adversaries, 
%
%
w.r.t. Definition \ref{def::DUQ-OT-HF-sec-def}. 
\end{theorem}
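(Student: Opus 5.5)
We prove the theorem by building a simulator for each corrupted party in Definition~\ref{def::DUQ-OT-HF-sec-def}, one party at a time. The parties are assumed semi-honest and non-colluding. Because \rduqothf reuses the query-generation phases of DUQ-OT unchanged, we will import the simulators from the proof of Theorem~\ref{theo::DUQ-OT-sec} wherever possible. The genuinely new material is confined to three places: the one-off homomorphic setup, the $z$-fold response, and the oblivious filtering. For each of these, the new work is to argue that the extra messages are simulatable from the allowed inputs and outputs.

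\textbf{Sender, $\p_{\st 2}$ and \tp.}
\begin{itemize}
\item \se sees $pk_{\st j}$, $r_{\st 3}$ and $(\beta_{\st 0},\beta_{\st 1})$. Exactly as in DUQ-OT, $\beta_{\st 0}=g^{\st a\pm\gamma}$ is a uniformly random group element, because $r_{\st 1},r_{\st 2}$ are uniform and $s_{\st 1}$ is a one-time-pad share. Hence $\simm_{\st\se}$ samples a random $\beta_{\st 0}$, sets $\beta_{\st 1}=C/\beta_{\st 0}$, and samples its own fresh $r_{\st 3}$ and an honest key pair.
\item $\p_{\st 2}$ sees only $(r_{\st 2},s_{\st 2})$, both uniform, so its view is simulated perfectly.
\item \tp sees only $pk_{\st j}$ and $pk$; its own input $(v,s,z)$ suffices to simulate it.
\end{itemize}

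\textbf{Proxy $\p_{\st 1}$.} Its view is $(r_{\st 1},s_{\st 1},\delta_{\st 0},\delta_{\st 1},\bm{w}_{\st j},res)$, and its output is $z$. The simulator proceeds in three steps.
\begin{itemize}
\item It simulates $(r_{\st 1},s_{\st 1},\delta_{\st 0},\delta_{\st 1})$ as in DUQ-OT.
\item It replaces $\bm{w}_{\st j}$ by $z$ encryptions of $0$. We argue this via a hybrid over the $z$ positions, using IND-CPA of the additive homomorphic scheme; $\p_{\st 1}$ never holds $sk_{\st j}$, and it knows $z$, so the vector length is correct.
\item It replaces $res$ by $z$ pairs of uniformly random group elements and uniformly random strings of length $\sigma+\lambda$. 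This step uses Lemma~\ref{lemma::two-pairs-indis-} applied to each index $t$; the lemma is stated exactly for a distinguisher that knows $r_{\st 1}$, $C$, $g^{\st r_2}$ and $C/g^{\st r_2}$. We use a hybrid over $t$ and note that the $y_{\st i,t}$ are fresh for each $t$. The per-pair random permutation $\pi$ is applied to elements that are already indistinguishable from random, so it adds nothing.
\end{itemize}

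\textbf{Receiver.} \re holds $sk_{\st j}$, $r_{\st 1},r_{\st 2}$, $s_{\st 2}$ and $r_{\st 3}$, and it receives $res'$. The filtering is deterministic homomorphic evaluation, so decrypting $res'$ yields exactly the $v$-th permuted pair $(e'_{\st 0,v},e'_{\st 1,v})$. The ciphertexts $o_{\st i,i'}$ themselves can be produced by the simulator by encrypting those plaintexts under $pk_{\st j}$. This assumes the scheme is rerandomized or that the distribution of the evaluated ciphertexts is determined by their plaintext; we flag this as a point requiring care, and would otherwise add rerandomization by $\p_{\st 1}$.

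Given only $m_{\st s,v}$, $\simm_{\st\re}$ works as follows:
\begin{itemize}
\item It samples $r_{\st 1},r_{\st 2},s_{\st 2},r_{\st 3}$ and computes $x$.
\item It builds the ``good'' element $(g^{\st y},\g(g^{\st yx})\oplus(m_{\st s,v}\|r_{\st 3}))$.
\item It builds the ``bad'' element as a random group element paired with a random string.
\item It places the two elements in random order, justified by the security of $\pi$.
\end{itemize}
Indistinguishability for the bad element follows from CDH in the RO model, exactly as in DUQ-OT: computing $\beta_{\st 1-s}^{\st y_{\st 1-s,v}}$ from $\beta_{\st 1-s}$ and $g^{\st y_{\st 1-s,v}}$ breaks CDH. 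Finally, \re learns nothing about $z$ or the other records, since $res'$ has constant size.

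The main obstacle is the $\p_{\st 1}$ case. Here two reductions must be composed: the IND-CPA hybrid on $\bm{w}_{\st j}$ and the Lemma~\ref{lemma::two-pairs-indis-} hybrid over the $z$ response pairs. In addition, we must verify that $\p_{\st 1}$'s knowledge of $s_{\st 1}$ and $r_{\st 1}$, together with its computation of the $\beta$'s itself, leaves the condition of Lemma~\ref{lemma::two-pairs-indis-} satisfied. We expect to handle this by ordering the hybrids: first swap the responses while $\bm{w}_{\st j}$ is real, then swap $\bm{w}_{\st j}$, which is legitimate because the response simulation never uses $sk_{\st j}$.
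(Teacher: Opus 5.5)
Your decomposition matches the paper's. You build one simulator per corrupted party. The \se, $\p_{\st 2}$ and \tp cases are carried over from DUQ-OT. The $\p_{\st 1}$ case is handled by Lemma~\ref{lemma::two-pairs-indis-} on the $z$ response pairs together with IND-CPA on $\bm{w}_{\st j}$. \re is simulated by encrypting a good element and a dummy element in random order.

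Where you deviate, you are more careful than the paper. Replacing $\bm{w}_{\st j}$ by encryptions of $0$ through a hybrid is exactly what IND-CPA gives. The paper instead samples ``random elements of the ciphertext range'', which IND-CPA alone does not justify. One caution when you import the DUQ-OT simulation of $(\delta_{\st 0},\delta_{\st 1})$: sample $\delta'_{\st 0}$ and set $\delta'_{\st 1}=C/\delta'_{\st 0}$. The real view always satisfies $\delta_{\st 0}\delta_{\st 1}=C$, and the paper's simulator samples both values (and $C'$) independently, which is trivially distinguishable.

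The point you flag for \re is a genuine gap, and the paper's proof does not close it. The paper asserts that the $o_{\st i,i'}$ are distributed like fresh encryptions ``due to IND-CPA''. But IND-CPA says nothing to a party that holds $sk_{\st j}$.

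Under the stated hypotheses, this gap cannot be closed. Instantiate with Paillier, as the paper itself suggests. Let $\rho_{\st t}$ be the randomness of $\bm{w}_{\st j}[t]$ and write $e_{\st t}:=e'_{\st i,i',t}$. Then $o_{\st i,i'}=(1+N)^{e_{\st v}}\big(\prod_{t}\rho_{\st t}^{e_{\st t}}\big)^{N} \bmod N^2$, and with $sk_{\st j}$ the receiver can recover $\rho^{*}=\prod_t\rho_{\st t}^{e_{\st t}} \bmod N$.
\begin{itemize}
\item If $z=1$, then $\rho^{*}=\rho_{\st 0}^{e_{\st v}}$ with $e_{\st v}$ known to \re. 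So whenever $e_{\st v}$ is even, $\rho^{*}$ is always a quadratic residue.
\item If $z\geq 2$, $\rho^{*}$ is a non-residue with constant probability, namely whenever some $e_{\st t}$ with $t\neq v$ is odd.
\item The four ciphertexts $o_{\st i,i'}$ share the same $\rho_{\st t}$, which permits even sharper consistency tests.
\end{itemize}
Hence \re distinguishes $z=1$ from $z=2$ with noticeable advantage. No simulator given only $m_{\st s,v}$ can reproduce this, so the theorem as stated fails for this IND-CPA instantiation.

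The fix you propose is exactly what is needed: $\p_{\st 1}$ multiplies each $o_{\st i,i'}$ by a fresh $\enc(pk_{\st j},0)$. Alternatively, one can assume a circuit-private scheme. With either in place, the evaluated ciphertexts are fresh encryptions of the $v$-th pair and your simulator for \re goes through.
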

We refer readers to Appendix \ref{sec::proof-of-DUQ-OT-HF} for Theorem \ref{theo::DUQ-OTHF-2-sec}'s proof.

\section{A Compiler for Generic OT with Constant Size Response}\label{sec::the-compiler}

In this section, we present a compiler that transforms  \emph{any} $1$-out-of-$n$ OT that requires \re to receive $n$ messages into a $1$-out-of-$n$ OT that enables \re to receive only a \emph{constant} number of messages.

 The main technique we rely on is the encrypted binary vector that we used in Section \ref{sec::Delegated-Unknown-Query-OT-HF}.  The high-level idea is as follows.  During query computation, \re (along with its vector that encodes its index $s\in\{0, n-1\}$) computes a binary vector of size $n$, where all elements of the vector are set to $0$ except for the $s$-th element, which is set to $1$. \re encrypts each element of the vector and sends the result, as well as its query to \se. Subsequently, \se computes a response vector (in the same manner it does in regular OT), homomorphically multiplies each element of the response by the element of the encrypted vector (component-wise), and then homomorphically sums all the products. It sends the result (which is now constant with regard to $n$) to \re, which decrypts the response and retrieves the result $m_{\st s}$. 
 
 Next, we will present a generic OT's syntax, and introduce the generic compiler using the syntax. 
 

\subsection{Syntax of a Conventional OT}\label{sec::OT-syntax}


Since we aim to treat any  OT in a block-box manner, we first present the syntax of an OT. A conventional (or non-delegated) $1$-out-of-$n$ OT (\onenot) have the following algorithms:

\begin{itemize}[leftmargin=4.5mm]

\item[$\bullet$] $\mathtt{\se.Init}(1^{\st\lambda})\rightarrow pk$: a probabilistic algorithm run by \se. It takes as input security parameter $1^{\st\lambda}$ and returns a public key $pk$. 




\item[$\bullet$] $\mathtt{\re.GenQuery}(pk, n, s)\rightarrow (q, sp)$: a probabilistic algorithm run by \re. It takes as input $pk$,  $n$, and a secret index $s$. It returns a query (vector) $q$ and a secret parameter $sp$.  

 \item[$\bullet$] $\mathtt{\se.GenRes}(m_{\st 0},\ldots,m_{\st n-1}, pk, q)\rightarrow res$: a probabilistic algorithm run by \se. It takes as input $pk$ and $q$. It generates an encoded response (vector) $res$.

  \item[$\bullet$] $\mathtt{\re.Retrieve}(res, q, sp, pk, s)\rightarrow m_{\st s}$: a deterministic algorithm run by \se. It takes as input $res$,  $q$, $sp$,  $pk$, and $s$. It returns message $m_{\st s}$. 
 
\end{itemize}








 

The functionality that a $1$-out-of-$n$ OT  computes can be defined as: 
 $\mathcal{F}_{\scriptscriptstyle\onenot}:\big((m_{\st 0}, ..., m_{\st n-1}), s\big) \rightarrow (\empt, m_{\st s})$. Informally, the security of $1$-out-of-$n$ OT states that (1) \re's view can be simulated given its input query $s$ and output message $m_{\st s}$ and (2) \se's view can be simulated given its input messages $(m_{\st 0}, ..., m_{\st n-1})$. We refer readers to \cite{DBLP:books/cu/Goldreich2004} for further discussion on $1$-out-of-$n$ OT.

\subsection{The Compiler}\label{sec::compiler}
We present the compiler in detail in Figure \ref{fig::generic-short-res-HE}.  We highlight that in the case where each $e_i \in res$ contains more than one value, e.g., $e_{\st i}=[e_{\st 0,i},..., e_{\st w-1,i}]$ (due to a certain protocol design), then each element of $e_i$ is separately multiplied and added by the element of vector $\bm{b}'$, e.g., the $j$-st element of the response is $e_{\st j, 0}\hmul \bm{b}'[0]\hadd...\hadd e_{\st j, n-1}\hmul \bm{b}'[n-1]$, for all $j$, $0\leq j\leq w-1$. In this case, only $w$ elements are sent to \re.

\begin{figure}[!htbp]
\vspace{-2mm}
\setlength{\fboxsep}{.9pt}
\begin{center}
    \begin{tcolorbox}[enhanced,width=81mm, 
    drop fuzzy shadow southwest,
    colframe=black,colback=white]
\vspace{-1mm}
\begin{enumerate}[leftmargin=5.2mm]
\item\underline{\textit{$\se$-side Initialization:}}
$\mathtt{Init}(1^{\st\lambda})\rightarrow pk$

This phase involves \se. 
\begin{enumerate}
    \item call $\mathtt{\se.Init}(1^{\st\lambda})\rightarrow pk$. 
    \item publish $pk$.
\end{enumerate}

\item\underline{\re-side Setup:} 
$\mathtt{Setup}(1^{\st\lambda})\rightarrow (sk_{\st \re}, pk_{\st \re})$ 

This phase involves \re. 
\begin{enumerate}
 \item call 
 $\keygen(1^{\st\lambda})\rightarrow(sk_{\st \re}, pk_{\st \re})$. 

%
\item publish $pk_{\st \re}$.
 \end{enumerate}
\item\underline{\re-side Query Generation:} 
$\mathtt{GenQuery}(pk_{\st \re}, n, s)\rightarrow (q, sp, \bm{b}')$

This phase involves \re. 
\begin{enumerate}
\item call $\mathtt{\re.GenQuery}(pk,n, s)\rightarrow (q, sp)$. 
\item construct a vector $\bm{b}=[b_{\st 0},...,b_{\st n-1}]$, as: 

\begin{enumerate}
\item set every element $b_{\st i}$ to zero except for $s$-th element $b_{\st s}$ which is set to $1$.
\item encrypt each element of $\bm{b}$ using additive homomorphic encryption, $\forall 0\leq i\leq n-1: b'_{\st i}=\enc(pk_{\st \re}, b_{\st i})$. 
Let $\bm{b}'$ be the vector of the encrypted elements. 
\end{enumerate}
 \item send $q$ and $\bm{b}'$ to \se and locally store $sp$. 
\end{enumerate}
\item\underline{\se-side Response Generation:}
$\mathtt{GenRes}(m_{\st 0},\ldots,m_{\st n-1}, pk, pk_{\st \re}, q, \bm{b}')\rightarrow res$

This phase involves \se. 
\begin{enumerate}
\item call $\mathtt{\se.GenRes}(m_{\st 0},\ldots,m_{\st n-1}, pk, q)\rightarrow res$. Let $res=[e_{\st 0},..., e_{\st n-1}]$. 
\item compress the response using vector  $\bm{b}'$ as follows. $\forall i, 0\leq i\leq n-1:$ 
    %
$$e= (e_{0}\hmul \bm{b}'[0])\hadd...\hadd  (e_{\st n-1}\hmul \bm{b}'[n-1])$$
 
\item  send $res=e$  to \re.
\end{enumerate}

\item\underline{\re-side Message Extraction.} 
$\mathtt{Retrieve}(res, q, sp, pk,sk_{\st \re}, s)\rightarrow m_{\st s}$

This phase involves \re. 

\begin{enumerate}

\item call $\dec(sk_{\st \re}, res)\rightarrow res'$. 

\item call $\mathtt{\re.Retrieve}(res', q, sp, pk, s)\rightarrow m_{\st s}$. 
\end{enumerate}
\end{enumerate}
\end{tcolorbox}
\end{center}
\vspace{-3mm}
    \caption{A compiler that turns a $1$-out-of-$n$ OT with response size $O(n)$ to  a $1$-out-of-$n$ OT with response size $O(1)$.}
    \label{fig::generic-short-res-HE}
    \vspace{-2mm}
\end{figure}

\begin{theorem}\label{theo::one-out-of-n-OT}
Let $\mathcal{F}_{\scriptscriptstyle\onenot}$ be the functionality defined above. If  
$\onenot$ is secure and additive homomorphic encryption meets IND-CPA, generic OT with constant size response (presented in Figure \ref{fig::generic-short-res-HE}) (i) securely computes $\mathcal{F}_{\scriptscriptstyle\onenot}$ in the presence of semi-honest adversaries and (ii) offers $O(1)$ response size, regarding $n$. 
\end{theorem}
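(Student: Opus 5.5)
We prove the theorem by simulation, treating the underlying \onenot as a black box and building each simulator from the corresponding simulator of \onenot. Correctness comes first. By the two homomorphic properties of Section~\ref{sec::AHE}, $\dec(sk_{\st \re}, e)=\sum_{i} b_{\st i}\cdot e_{\st i}=e_{\st s}$ componentwise, since $b_{\st s}=1$ and every other $b_{\st i}=0$. So $\mathtt{\re.Retrieve}$ receives the $s$-th response component, which is all the underlying OT's retrieval uses. One caveat: if retrieval formally takes the whole vector $res$, we feed it a vector with $e_{\st s}$ in position $s$ and dummy entries elsewhere. We also need each $e_{\st i}$ (or each of its $w$ components) to be encoded in the plaintext space of the encryption scheme, e.g.\ $\mathbb{Z}_N$ for Paillier.

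The main step is the corrupted sender. Its view is the view of the base OT (randomness, $pk$, query $q$) together with the ciphertext vector $\bm{b}'$. The simulator $\simm_{\st\se}$ runs the base-OT sender simulator on $(m_{\st 0},\ldots,m_{\st n-1})$ to obtain $q$. It then generates its own key pair and outputs $pk_{\st \re}$ and $n$ fresh encryptions of $0$. Indistinguishability follows from a two-step hybrid. First, replace the real $\bm{b}'$ with encryptions of the all-zero vector; this step is IND-CPA (a multi-message hybrid over the $n$ positions, giving a loss of at most $n$ times a negligible quantity). Second, replace the real $q$ with the simulated one; this step is the sender-security of \onenot. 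The order matters: $\bm{b}$ is correlated with $s$, and so is $q$, but $\bm{b}$ is encrypted under a key independent of $q$. We therefore perform the IND-CPA step first, while $q$ is still real. A reduction can embed the IND-CPA challenge and still generate $q$ itself, because it knows $s$.

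For a corrupted receiver, the view consists of its input $s$, its randomness (including $sk_{\st \re}$), and the single ciphertext $e$. The simulator $\simm_{\st\re}(s,m_{\st s})$ runs the base-OT receiver simulator on $(s,m_{\st s})$, which yields the randomness and the full response vector $res=[e_{\st 0},\ldots,e_{\st n-1}]$. It then outputs a fresh encryption of $e_{\st s}$. The argument here is the main obstacle, and it needs some care. Because $\re$ holds $sk_{\st \re}$, IND-CPA is of no use. Instead, the real $e$ must be distributed as a fresh encryption of $e_{\st s}$, or at least must reveal nothing beyond that plaintext. This is exactly where $e$ differs from the underlying OT view: $e$ reveals strictly less than $res$, since it is a function of $e_{\st s}$ plus encryption randomness. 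With Paillier, the homomorphic combination produces randomness $\prod_i \rho_{\st i}^{e_{\st i}}$. This can leak information about the other $e_{\st i}$, so $\se$ must rerandomize $e$, by multiplying it by $\enc(pk_{\st \re},0)$, before sending it. We will note that this rerandomization step is needed, i.e.\ assume circuit privacy. Given that, the real view equals (real base-OT view restricted to $e_{\st s}$, fresh encryption), which is indistinguishable from (simulated base-OT view restricted to $e_{\st s}$, fresh encryption) by receiver security of \onenot, because restriction is an efficient post-processing.

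Finally, part (ii) is immediate. The message sent to $\re$ is a single ciphertext, or $w$ ciphertexts when each $e_{\st i}$ has $w$ components, with $w$ fixed by the base protocol. Its length is therefore independent of $n$, so the response size is $O(1)$.
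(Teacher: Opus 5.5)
Your sender case and part (ii) are sound and follow the paper's argument. Using fresh encryptions of $0$ under a simulator-generated key is, if anything, more robust than the paper's ``random elements of the ciphertext range'', since IND-CPA alone does not make ciphertexts look uniform. The hybrid order you insist on is immaterial: for fixed inputs $(m_{\st 0},\ldots,m_{\st n-1},s)$, the query $q$ and $\bm{b}'$ are independent.

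\textbf{The gap is in the corrupted-receiver case.} As written, you prove the theorem only for a modified protocol in which $\se$ rerandomizes $e$, or under a circuit-privacy assumption. Neither appears in Figure~\ref{fig::generic-short-res-HE} or in the theorem's hypotheses. You are right that a simulator outputting a \emph{fresh} encryption of $e_{\st s}$ (which is also what the paper's sketch does) does not match the evaluated ciphertext's distribution, because $\re$ knows the $\rho_{\st i}$ and $sk_{\st\re}$. But the remedy is to change the simulator, not the protocol. The premise that the real $e$ ``must reveal nothing beyond'' $e_{\st s}$ is false. Receiver security of the base \onenot already says that $\re$'s \emph{entire} base view, including the full vector $res=[e_{\st 0},\ldots,e_{\st n-1}]$, is simulatable from $(s,m_{\st s})$. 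Hence anything computable from $res$, such as $\prod_i\rho_{\st i}^{e_i}$, may leak harmlessly.

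\textbf{The fix} is not to restrict the simulated base view to $e_{\st s}$. Let $\simm_{\st\re}$ proceed as follows:
\begin{enumerate}
\item run the base receiver simulator on $(s,m_{\st s})$ to obtain simulated coins, $pk$, and a full simulated $res$;
\item sample $(sk_{\st\re},pk_{\st\re})$ and the $\rho_{\st i}$, and form $\bm{b}'$ from $s$;
\item compute $e=(e_{\st 0}\hmul\bm{b}'[0])\hadd\cdots\hadd(e_{\st n-1}\hmul\bm{b}'[n-1])$ exactly as $\se$ does, which is possible because $\hmul$ and $\hadd$ are public operations.
\end{enumerate}
In the real compiled execution, $\mathtt{\se.GenRes}$ is unchanged, and the homomorphic-encryption coins are independent of the base-OT coins. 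So $\re$'s real view is exactly this same efficient post-processing applied to its real base view. Indistinguishability then follows from base-OT receiver security alone, with no IND-CPA, rerandomization, or circuit privacy needed on this side. This is precisely the content of your own remark that $e$ reveals strictly less than $res$, and of the paper's ``\re learns less information than in the original OT''. You just need to use it directly rather than routing through ``fresh encryption of $e_{\st s}$''.
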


Appendix \ref{theo::compiler-sec--} presents the proof of Theorem \ref{theo::one-out-of-n-OT}. 
This compiler is most beneficial in the generic database setting, where each item is a record with multiple attributes. Let the sender hold $n$ records $\{m_i\}_{i=0}^{n-1}$, where $m_i=(m_{i,1},\ldots,m_{i,u})$ has $u$ attributes. A conventional $1$-out-of-$n$ OT requires the receiver to download all $n$ encrypted records, i.e., $O(n\cdot u)$ ciphertext per query. Our compiler instead returns only the selected record, reducing the receiver download to $O(u)$. For $u>1$, this also reduces overall communication from $O(|q|+n\cdot u)$ to $O(|q|+n+u)$, since the extra $O(n)$ upload (the encrypted one-hot selection vector) is independent of $u$.

\section{Supersonic OT}\label{sec::supersonice-OT}



In this section, we introduce a $1$-out-of-$2$ OT, called ``Supersonic OT'',  which (i)
operates at high speed by eliminating the need for public-key-based cryptography, (ii) delivers a response of size $O(1)$ to the recipient, \re, (iii) ensures information-theoretic security, making it post-quantum secure, and (iv) is simple (but elegant), thus facilitating a simple analysis of its security and implementation.

\subsection{Security Definition}\label{sec::Ultra-OT-definition}

Supersonic OT involves a sender \se, a receiver \re, and a server \p. Each party might be corrupted by an independent passive adversary. The functionality $\mathcal{F}_{\scriptscriptstyle\ot}$ that Supersonic OT computes is similar to that of conventional OT, with the difference that now an additional party $\p$ is introduced, having no input and receiving no output. The functionality is defined as:    $\mathcal{F}_{\scriptscriptstyle\ot}:\big((m_{\st 0}, m_{\st 1}), \empt, s\big) \rightarrow (\empt, \empt, m_{\st s})$. 
 Next, we present a formal definition of \ot.

\begin{definition}[\ot]\label{def::ultra-OT-sec-def} Let $\mathcal{F}_{\scriptscriptstyle\ot}$ be the   OT functionality defined above. We assert that protocol $\Gamma$ realizes $\mathcal{F}_{\scriptscriptstyle\ot}$ in the presence of passive adversaries, if for  every non-uniform PPT adversary \adv
in the real model, there is a non-uniform PPT  simulator \simm  in
the ideal model, where:
%
\begin{equation}\label{equ::ultra-ot-sender-sim-}
\begin{split}
\Big\{\simm_{\st\se}\big((m_{\st 0}, m_{\st 1}), \empt\big)\Big\}_{\st m_{0}, m_{1}, s}{\equiv} \Big\{\view_{\st\se}^{ \st\Gamma}\big((m_{\st 0}, m_{\st 1}), \empt,  s\big) \Big\}_{\st m_0, m_1, s}
\end{split}
\end{equation}
\begin{equation}\label{equ::ultra-ot-server-sim-}
\begin{split}
\Big\{\simm_{\st\p}(\empt, \empt)\Big\}_{\st m_0, m_1, s}{\equiv}  \Big\{\view_{\st\p}^{\st \Gamma}\big((m_{\st 0}, m_{\st 1}), \empt, s\big) \Big\}_{\st m_0, m_1, s}
\end{split}
\end{equation}
\begin{equation}\label{equ::ultra-ot-reciever-sim-}
\begin{split}
&\Big\{\simm_{\st\re}\Big(s, \mathcal{F}_{\scriptscriptstyle\ot}\big((m_{\st 0}, m_{\st 1}), \empt,  s\big)\Big)\Big\}_{\st m_0, m_1, s}{\equiv}\\  &\Big\{\view_{\st\re}^{\st \Gamma}\big((m_{\st 0}, m_{\st 1}), \empt,   s\big) \Big\}_{\st m_0, m_1, s}
\end{split}
\end{equation}

\end{definition}

\subsection{The Protocol}
Figure \ref{fig::Ultrasonic-OT} presents Supersonic OT in detail. 
At a high level, the protocol works as follows. Initially, \re and \se agree on a pair of keys. In the query generation phase, \re splits its private index into two binary shares. It sends one share to \se and the other to \p. Given the share/query, \se encrypts every message $m_{\st i}$ (using a one-time pad) under one of the keys it agreed with \re.  
\se permutes the encrypted messages using  $\cper$ and its share. It sends the resulting pair to \p, which permutes the received pair using  $\cper$ and its share. \p sends only the first element of the resulting pair (which is a ciphertext) to \re and discards the second element of the pair. Next, \re decrypts the ciphertext and learns the message it was interested in. %


\begin{figure}[!h]
\vspace{-3mm}
\setlength{\fboxsep}{.9pt}
\begin{center}
    \begin{tcolorbox}[enhanced,  right=1mm,
    drop fuzzy shadow southwest,
    colframe=black,colback=white]
\vspace{-2mm}
\begin{enumerate}[leftmargin=1.1mm]

\item \underline{\textit{\re-side Setup:}}
$\mathtt{Setup}(1^{\st\lambda})\rightarrow (k_{\st 0}, k_{\st 1})$ 

\begin{itemize}


\item  \re picks two random keys $(k_{\st 0}, k_{\st 1}) \stackrel{\st\$}\leftarrow\{0, 1\}^{\st\sigma}$ and sends them to \se.

\end{itemize}

\item \underline{\textit{$\re$-side Query Generation:}} 
$\mathtt{GenQuery}(1^{\st \lambda}, s)\rightarrow q=(q_{\st 1}, q_{\st 2})$

\begin{enumerate}

\item split  the private index $s$ into two shares $(s_{\st 1}, s_{\st 2})$ by calling  $\ses(1^{\st \lambda}, s, 2, 2)\rightarrow (s_{\st 1}, s_{\st 2})$. 

\item send $q_{\st 1}=s_{\st 1}$ to \se and $q_{\st 2}=s_{\st 2}$ to $\p$. 

\end{enumerate}

\item \underline{\textit{$\se$-side Response Generation:}} 
$\mathtt{GenRes}(m_{\st 0}, m_{\st 1}, k_{\st 1},k_{\st 2}, q_{\st 1})\rightarrow res$

\begin{enumerate}

\item encrypt each message as follows. 
\vspace{-1.4mm}
$$\forall i, 0\leq i\leq 1: m'_{\st i}=m_{\st i}\oplus k_{\st i}$$

Let $e=(m'_{\st 0}, m'_{\st 1})$ contain the encrypted messages. 

\item permute the elements of $e$ as: $\cper(s_{\st 1}, e)\rightarrow e'$.

\item send $res=e'$ to $\p$. 

\end{enumerate}

\item \underline{\textit{$\p$-side Oblivious Filtering:}} 
$\mathtt{OblFilter}(res, q_{\st 2})\rightarrow res'$

\begin{enumerate}

\item permute the elements of $e'$ as: $\cper(s_{\st 2}, e')\rightarrow e''$.

\item\label{ultra-ot::e-double-prime} send (always) the first element in $e''$, say $res'=e''_{\st 0}$, to \re and discard the second element in $e''$. 

\end{enumerate}

\item\underline{\textit{\re-side Message Extraction:}} 
$\mathtt{Retrieve}(res',  k_{\st s})\rightarrow m_{\st s}$

\begin{itemize}
\item retrieve the final related message $m_{\st s}$ by decrypting $e''_{\st 0}$ as: $m_{\st s}=e''_{\st 0}\oplus k_{\st s}$.
\end{itemize}

\end{enumerate}
\end{tcolorbox}
\end{center}
\vspace{-4mm}
\caption{Supersonic OT.} 
\vspace{-4mm}
\label{fig::Ultrasonic-OT}
\end{figure}



\begin{theorem}\label{theo::ultra-OT}
Let $\mathcal{F}_{\scriptscriptstyle\ot}$ be the functionality defined in Section \ref{sec::Ultra-OT-definition}.  Then, Supersonic OT (presented in Figure \ref{fig::Ultrasonic-OT}) securely computes $\mathcal{F}_{\scriptscriptstyle\ot}$ in the presence of semi-honest adversaries, 
%
%
w.r.t. Definition \ref{def::ultra-OT-sec-def}. 
\end{theorem}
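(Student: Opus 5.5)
We prove Theorem~\ref{theo::ultra-OT} by giving, for each of the three parties of Definition~\ref{def::ultra-OT-sec-def}, a simulator whose output is \emph{identically} distributed to that party's real view. We first check correctness: after the two controlled swaps, the pair is permuted by $s_{\st 1}\oplus s_{\st 2}=s$. Hence $e''_{\st 0}=m'_{\st s}=m_{\st s}\oplus k_{\st s}$, and \re outputs $m_{\st s}$. All arguments are information-theoretic and rest on two facts: each XOR share $s_{\st i}$ alone is a uniform bit independent of $s$ (Section~\ref{sec::secret-haring}), and a one-time pad under a uniform key yields a uniform string.

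\emph{Corrupted \se.} The real view of \se is its input $(m_{\st 0},m_{\st 1})$, the keys $(k_{\st 0},k_{\st 1})$ sent by \re, and the share $s_{\st 1}$. The simulator $\simm_{\st\se}$ samples $k_{\st 0},k_{\st 1}\stackrel{\st\$}\leftarrow\{0,1\}^{\st\sigma}$ and $s_{\st 1}\stackrel{\st\$}\leftarrow\{0,1\}$ independently. In the real protocol these values are also uniform and mutually independent, and independent of $s$, since $s_{\st 2}$ is hidden from \se. The two distributions are therefore identical.

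\emph{Corrupted \p.} The real view of \p is $s_{\st 2}$ together with $e'=\cper(s_{\st 1},(m_{\st 0}\oplus k_{\st 0},m_{\st 1}\oplus k_{\st 1}))$. The simulator $\simm_{\st\p}$ outputs a uniform bit and two independent uniform $\sigma$-bit strings. This is the step I expect to need the most care, because $s_{\st 2}$ and the ordering in $e'$ are jointly correlated with $s$ and the keys. The key observation is that \p never sees either key, so the two entries of $e'$ are independent one-time pads, uniform regardless of $s_{\st 1}$ and of the messages. Concretely, I would fix $(s_{\st 1},s_{\st 2})$ and note that $e'$ is then a bijective image of the uniform pair $(k_{\st 0},k_{\st 1})$. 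It follows that $e'$ is uniform and independent of $(s_{\st 1},s_{\st 2})$, and $s_{\st 2}$ alone is uniform.

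\emph{Corrupted \re.} The real view of \re is $s$, its own randomness $(k_{\st 0},k_{\st 1},s_{\st 1},s_{\st 2})$, and the message $res'=e''_{\st 0}$. Given $s$ and $m_{\st s}$, the simulator $\simm_{\st\re}$ samples keys uniformly, splits $s$ into uniform shares, and sets $res'=m_{\st s}\oplus k_{\st s}$. By the correctness computation above, this equals the real $e''_{\st 0}$ exactly. Since \p discards $e''_{\st 1}$, nothing about $m_{\st 1-s}$ reaches \re, and the two views coincide.

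Since the parties are non-colluding, these three single-party simulations cover every case, which establishes perfect security, a strictly stronger notion than the computational indistinguishability the definition requires.
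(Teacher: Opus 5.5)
Your proposal is correct and follows essentially the same route as the paper's proof. It uses the same three per-party simulators: uniform keys and a uniform share for \se, a uniform bit and two uniform $\sigma$-bit strings for \p, and $m_{\st s}\oplus k$ under a fresh key for \re. Your argument for \p, which conditions on $(s_{\st 1},s_{\st 2})$ and uses the bijection on the keys, is a tighter version of the paper's one-time-pad-plus-hidden-swap reasoning.

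One small correction: Definition~\ref{def::ultra-OT-sec-def} is stated with $\equiv$ (identical distributions), not $\stackrel{c}{\equiv}$. So perfect simulation is exactly what the definition requires, not something stronger than it, and your argument does deliver it.
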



%
\begin{proof}[Proof sketch]
If $\re$ is corrupted, its view contains $e''_{\st 0}=m_{\st s}\oplus k_s$ where $k_s\gets\{0,1\}^{\st \sigma}$ is uniform, so a simulator picks $k\gets\{0,1\}^{\st \sigma}$ and outputs $e=m_{\st s}\oplus k$. 
If $\se$ is corrupted, its view is $(s_{\st 1}, k_{\st 0}, k_{\st 1})$; by security of $(2, 2)$ secret sharing, $s_{\st 1}$ is indistinguishable from a uniform bit, and $(k_{\st 0}, k_{\st 1})$ are uniform, so the simulator samples fresh independent values.  
If $\p$ is corrupted, its view is $(s_{\st 2}, e')$; $s_{\st 2}$ is uniform, and $e'$ is a pair of one-time-pad ciphertexts, hence indistinguishable from two uniform $\sigma$-bit strings. Also, the ordering is hidden from $P$ because the permutation depends on the unknown share $s_{\st 1}$.
Therefore, the real and ideal views are indistinguishable.
\end{proof}

We refer readers to Appendices \ref{sec::ultrasonic-ot-proof} and \ref{sec::Ultra-OT-Proof-of-Correctness} for full proof of Theorem~\ref{theo::ultra-OT} and proof of Supersonic OT's correctness, respectively.   Note that Supersonic OT is independent of the other protocols proposed in this paper; however, it targets the same proxy-mediated thin-client theme.

\section{Evaluation}\label{sec::supersonic-OT}

We have implemented Supersonic OT in C++ and evaluated its concrete runtime. The source code for the implementation is publicly available in \cite{supersonic-code}. 
We used two platforms for the experiments: (1) a MacBook Pro with an Apple M3 Pro CPU and 36 GB of RAM,
and (2) a Raspberry Pi 4 with 4 GB RAM. 
%
%
We did not use parallelization or any other optimization. Each experiment was repeated 50 times and we report the mean. We utilized the GMP library \cite{gmp} for big-integer arithmetic. 

\subsubsection{Supersonic OT Runtime}

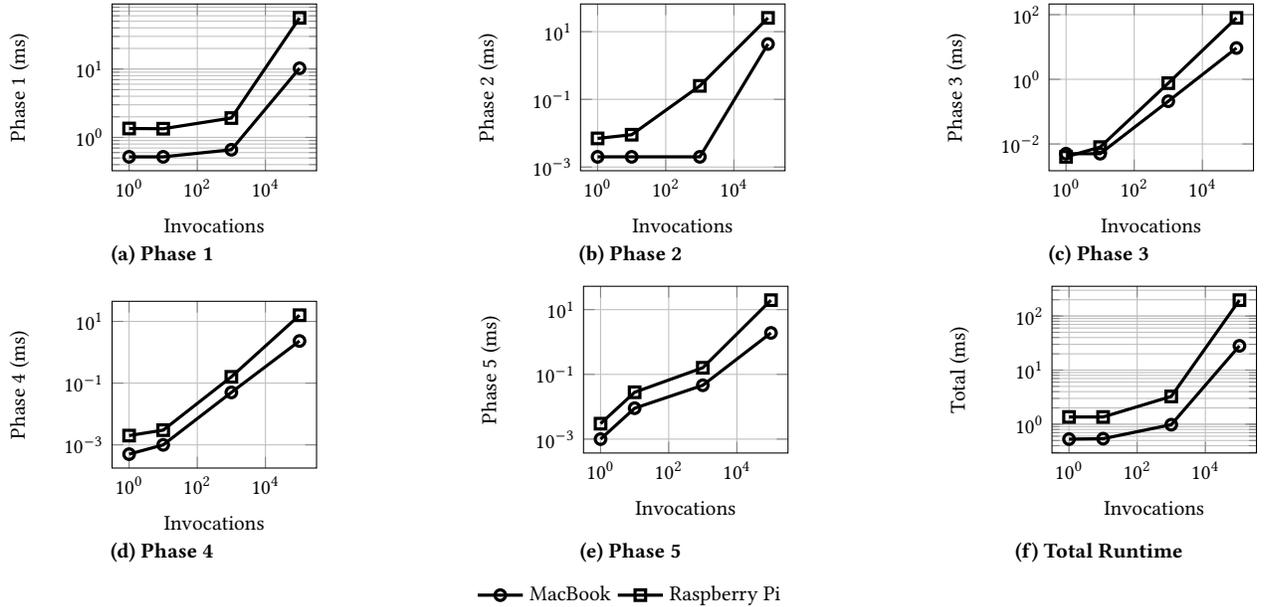
\begin{figure*}[!t]
\centering

\begin{subfigure}{0.30\textwidth}
\centering
\begin{tikzpicture}
\begin{loglogaxis}[
    width=4.3cm, height=3.8cm,
    xlabel={Invocations}, ylabel={Phase 1 (ms)},
    tick label style={font=\small},
    label style={font=\small},
    grid=both,
    legend to name=phasesLegend,
    legend columns=2,
    legend style={draw=none, font=\small},
]
\addplot[mark=o, very thick] coordinates{(1,0.52) (10,0.52) (1000,0.66) (100000,10.26)};
\addlegendentry{MacBook}
\addplot[mark=square, very thick] coordinates{(1,1.35) (10,1.34) (1000,1.92) (100000,55.96)};
\addlegendentry{Raspberry Pi}
\end{loglogaxis}
\end{tikzpicture}
\vspace{-2mm}
\caption{Phase 1}
\end{subfigure}
\hfill
\begin{subfigure}{0.30\textwidth}
\centering
\begin{tikzpicture}
\begin{loglogaxis}[
    width=4.3cm, height=3.8cm,
    xlabel={Invocations}, ylabel={Phase 2 (ms)},
    tick label style={font=\small},
    label style={font=\small},
    grid=both,
]
\addplot[mark=o, very thick] coordinates{(1,0.002) (10,0.002) (1000,0.002) (100000,4.32)};
\addplot[mark=square, very thick] coordinates{(1,0.007) (10,0.009) (1000,0.25) (100000,25.52)};
\end{loglogaxis}
\end{tikzpicture}
\vspace{-2mm}
\caption{Phase 2}
\end{subfigure}
\hfill
\begin{subfigure}{0.30\textwidth}
\centering
\begin{tikzpicture}
\begin{loglogaxis}[
    width=4.3cm, height=3.8cm,
    xlabel={Invocations}, ylabel={Phase 3 (ms)},
    tick label style={font=\small},
    label style={font=\small},
    grid=both,
]
\addplot[mark=o, very thick] coordinates{(1,0.005) (10,0.005) (1000,0.21) (100000,9.29)};
\addplot[mark=square, very thick] coordinates{(1,0.004) (10,0.008) (1000,0.76) (100000,79.43)};
\end{loglogaxis}
\end{tikzpicture}
\vspace{-2mm}
\caption{Phase 3}
\end{subfigure}

\vspace{2mm}

\begin{subfigure}{0.30\textwidth}
\centering
\begin{tikzpicture}
\begin{loglogaxis}[
    width=4.3cm, height=3.8cm,
    xlabel={Invocations}, ylabel={Phase 4 (ms)},
    tick label style={font=\small},
    label style={font=\small},
    grid=both,
]
\addplot[mark=o, very thick] coordinates{(1,0.0005) (10,0.001) (1000,0.05) (100000,2.32)};
\addplot[mark=square, very thick] coordinates{(1,0.002) (10,0.003) (1000,0.16) (100000,15.89)};
\end{loglogaxis}
\end{tikzpicture}
\vspace{-2mm}
\caption{Phase 4}
\end{subfigure}
\hfill
\begin{subfigure}{0.30\textwidth}
\centering
\begin{tikzpicture}
\begin{loglogaxis}[
    width=4.3cm, height=3.8cm,
    xlabel={Invocations}, ylabel={Phase 5 (ms)},
    tick label style={font=\small},
    label style={font=\small},
    grid=both,
]
\addplot[mark=o, very thick] coordinates{(1,0.001) (10,0.009) (1000,0.046) (100000,1.91)};
\addplot[mark=square, very thick] coordinates{(1,0.003) (10,0.028) (1000,0.16) (100000,19.58)};
\end{loglogaxis}
\end{tikzpicture}
\caption{Phase 5}
\end{subfigure}
\hfill
\begin{subfigure}{0.30\textwidth}
\centering
\begin{tikzpicture}
\begin{loglogaxis}[
    width=4.3cm, height=3.8cm,
    xlabel={Invocations}, ylabel={Total (ms)},
    tick label style={font=\small},
    label style={font=\small},
    grid=both,
]
\addplot[mark=o, very thick] coordinates{(1,0.53) (10,0.54) (1000,0.98) (100000,28.11)};
\addplot[mark=square, very thick] coordinates{(1,1.36) (10,1.36) (1000,3.26) (100000,196.39)};
\end{loglogaxis}
\end{tikzpicture}
\caption{Total Runtime}
\end{subfigure}

\vspace{1mm}
\ref{phasesLegend}

\vspace{-3mm}
\caption{Supersonic OT runtime breakdown across phases on MacBook and Raspberry Pi.}\label{fig::phases-Macbook-vs-Pi}
\vspace{-3mm}
\end{figure*}
We analyzed the runtime of various phases of Supersonic OT across different invocation frequencies ($1, 10, 10^{\st 3}$, and $10^{\st 5}$ times) and different platforms. We successfully ran the entire end-to-end Supersonic OT on the Raspberry Pi. 
Table~\ref{table-supersonic-runtime-phases} shows the performance of Supersonic OT. 


\begin{table}[!htbp]
\caption{Supersonic OT's runtime (in ms), categorized by various phases and execution environment.}\label{table-supersonic-runtime-phases}
 \vspace{-2mm}
\scalebox{1}{
\renewcommand{\arraystretch}{1}

\begin{tabular}{|c|c|c|c|c|c|c|c|c|c|c|c|} 
   \hline

&\multirow{2}{*} {  Phases}&

 \multicolumn{4}{c|}{  Number of OT Invocations}\\

            \cline{3-6}
 & &\cellcolor{white!20} $1$&\cellcolor{white!20}  $10$&\cellcolor{white!20} $10^{\st 3}$&\cellcolor{white!20} $10^{\st 5}$\\
            
    \cline{1-2}

    \hline

\multirow{6}{*} {  \rotatebox[origin=c]{90}{MacBook}} &\cellcolor{white!20}  Phase 1&
   0.52   &
      0.52  &
        0.66      &
           10.26              \\
    
       \cline{2-6}

&\cellcolor{white!20}  Phase 2&    
0.0020&    
0.0020&    
 0.0020&    
4.32       \\ 

   \cline{2-6}
     
&\cellcolor{white!20}  Phase 3&    
 0.0050&     
 0.0050&     
 0.21 &    
   9.29    
    \\ 

       \cline{    2-6}
          
&\cellcolor{white!20}  Phase 4&    
0.0005&    
0.0010&     
0.05&    
2.32 
\\ 

       \cline{    2-6}

&\cellcolor{white!20}  Phase 5&     
0.0010&     
0.0090&     
 0.046&    
  1.91     
   \\ 

       \cline{    2-6}
        \cline{    2-6}

&\cellcolor{white!20}  Total&    
0.53&    
0.54&    
 0.98&   
    28.11  
  \\ 

    \hline
    
    
    \multirow{6}{*} {  \rotatebox[origin=c]{90}{Raspberry Pi}} &\cellcolor{white!20}  Phase 1&
   1.35   &
     1.34   &
       1.92       &
        55.96                  \\
    
       \cline{2-6}

&\cellcolor{white!20}  Phase 2&    
 0.0070&    
 0.0090&    
 0.25&    
   25.52    \\ 

   \cline{2-6}
     
&\cellcolor{white!20}  Phase 3&    
 0.0040&     
 0.0080&     
 0.76  &    
   79.43       
    \\ 

       \cline{    2-6}
          
&\cellcolor{white!20}  Phase 4&    
 0.0020&    
 0.0030&     
 0.16&    
   15.89
\\ 

       \cline{    2-6}

&\cellcolor{white!20}  Phase 5&     
 0.0030&     
 0.028&     
  0.16&    
         19.58
   \\ 

       \cline{    2-6}
        \cline{    2-6}

&\cellcolor{white!20}  Total&    
1.36 &    
1.36 &    
 3.26 &   
   196.39    
  \\ 

    \hline

\end{tabular}
}
\vspace{-3mm}
\end{table}

A single invocation completes in 0.53 ms on the MacBook and 1.36 ms on the Raspberry Pi. Even under large batches of $10^{\st 5}$ invocations, the runtime remains modest: 28.11~ms on the MacBook and 196.39~ms on the Raspberry Pi. These results suggest that the protocol remains lightweight even on constrained hardware.

Figure~\ref{fig::phases-Macbook-vs-Pi}  shows the scaling behaviour of the five phases and the total runtime. The plots reveal a consistent pattern across both platforms. Across the measured invocation counts, the Raspberry Pi shows a slowdown of about $3\text{--}4\times$ compared to the MacBook; however, the relative behaviour of all phases remains consistent on both devices. 
The dominant cost on both platforms is Phase 1, which performs the key-related setup and share operations. The remaining phases incur minimal overhead, and their contribution stays small even at $10^{\st 5}$ invocations. 
%



\subsubsection{Runtime Comparison} We compared Supersonic OT's runtime to the runtimes of efficient: a base OT (i.e., Simplest OT)~\cite{ChouO15} and an OT extension (i.e., IKPN OT)~\cite{IshaiKNP03}. We report only the MacBook runtimes because we were unable to successfully build the LibOTe library~\cite{libote} (which contains the implementations of~\cite{ChouO15,IshaiKNP03}) on the Raspberry Pi despite extensive installation attempts. 
The source code implementaiton of ~\cite{IshaiKNP03} is available at~\cite{Literature-OT-code}.

\noindent\underline{\textit{Supersonic Versus Base OTs.}}  
For completeness, in addition to the runtime of Simplest OT, we include the reported runtimes of previous base OT protocols in~\cite{Efficient-OT-Naor,AsharovL0Z13}; these protocols'  source code was available but could not be successfully executed in our MacBook. For the runtime of STD--OT in  \cite{AsharovL0Z13}, STD--OT in \cite{Efficient-OT-Naor}, and RO--OT in \cite{Efficient-OT-Naor}, we derived the reported figures from \cite{AsharovL0Z13}, specifically from Table~3, where the GMP library was employed. 
Table \ref{fig::supersonic-vs-base-OTs} summarizes the results. 
As shown, Supersonic OT achieves a substantial speedup, running at least $92.8\times$  faster (than Simplest OT of  \cite{ChouO15}). 


\begin{table}[!h]
\vspace{-1mm}
\caption{ Comparing the runtime (in ms) of Supersonic OT with the base: standard OT (STD--OT)~\cite{AsharovL0Z13}, STD--OT~\cite{Efficient-OT-Naor}, and the random oracle OT (RO--OT)~\cite{Efficient-OT-Naor}. The runtime is based on 128 calls of each scheme.  The speedup factor is the improvement that Supersonic OT offers compared to
each scheme. The protocols tagged with \textsuperscript{\ddag} were run on the same machine. Other values are reported from the related original papers.}  \label{fig::supersonic-cost-breakdown-by-phases}  
\label{fig::supersonic-vs-base-OTs}
 \vspace{-1.5mm}
\begin{center}
\scalebox{1}{
\renewcommand{\arraystretch}{1}

\begin{tabular}{|c|c|c|c|c|c|c|c|c|c|c|c|} 
    \cline{1-3}

   Scheme&   Runtime (ms)&   Speedup Factor\\

       \hline

\cellcolor{white!20}  STD--OT in \cite{AsharovL0Z13}&   1,217  &     1,622\\
    
        \hline

\cellcolor{white!20}  STD--OT in \cite{Efficient-OT-Naor}&     1,681&      2,241\\ 

        \hline
     
\cellcolor{white!20}  RO--OT in \cite{Efficient-OT-Naor}&   288 &    384\\

       \hline

        Simplest OT\textsuperscript{\ddag}~\cite{ChouO15}&    69.6 &92.8   \\

       \hline

\cellcolor{white!20}  Supersonic OT\textsuperscript{\ddag}&     \textcolor{blue}{\textbf{0.75}}&      1\\ 

    \hline

\end{tabular}
}
\end{center}
 \vspace{-2mm}
\end{table}

\noindent\underline{\textit{Supersonic OT Versus OT Extensions.}} OT extensions are efficient when they are invoked many times. To determine how Supersonic OT performs compared to the  OT extensions, we compared its runtime with one of the most efficient OT extensions:  IKPN OT~\cite{IshaiKNP03}. We instantiated it using the Simplest OT as a base. We invoked the Supersonic OT and IKPN OT from 200 to 100,000 times. The source code implementaiton of~\cite{IshaiKNP03} is available at~\cite{Literature-OT-code}. 
Our implementation executes Supersonic OT sequentially without employing parallelization techniques. This contrasts with optimized libraries like libOTe, which leverage parallel execution (particularly for base OT operations) to achieve higher throughput. Thus, our sequential approach provides a conservative baseline for performance comparison. Table~\ref{table::supersonic-vs-IKNP-OT} and Figure~\ref{chart::supersonic-vs-IKPN} summarize the results.

Supersonic OT achieves notable speedups across all invocation counts, ranging from $106\times$ faster for 200 invocations to $2.6\times$ faster for 100,000 invocations. As shown in Figure~\ref{chart::supersonic-vs-IKPN}, the IKNP-based OT extension exhibits the expected near-constant runtime, yet Supersonic OT retains better performance across the entire tested range. The converging trend suggests that for a large number of invocations beyond those tested, the IKNP OT's minimal marginal cost may allow it to outperform our Supersonic OT implementation. Our experiment results suggest a crossover beyond $10^{\st 5}$ (i.e., $\approx5\times 10^{\st 5}$), after which IKNP OT outperforms. 


\begin{table}[!h]
\vspace{-1.5mm}
\caption{ End-to-end runtime of Supersonic OT versus the IKNP OT extension across different invocation counts. It shows the speedup achieved by Supersonic OT.}\label{table::supersonic-vs-IKNP-OT}
 \vspace{-2mm}
\begin{center}
\scalebox{1}{
\renewcommand{\arraystretch}{1}

\begin{tabular}{|c|c|c|c|c|c|c|c|c|c|c|c|} 
       \hline

\multirow{-1}{*}{Number of}&\multirow{-1}{*} {Supersonic} &\multirow{-1}{*}{IKNP OT} &\multirow{-1}{*}{Speedup}\\

Invocations& OT (ms)&  Extension (ms) &Factor \\

       \hline

200
     &  0.65
     &    70.05
     &\textcolor{blue}{\textbf{106}}
   \\
    
        \hline

   1,000	
     &  1.08
     &   70.46
     &65
     \\

       \hline

  4,500	
     &  2.24
     &   70.59
     &31
      \\

       \hline

20,000
     &  8.50
     &   71.89
     &8.4
      \\ 
      
       \hline
   100,000
          &  28.11
     &   74.41
     &2.6
      \\ 
      
             \hline
   500,000
          &  83.25 
     &   74.59 
     & -1.1
      \\

    \hline

\end{tabular}
}
\end{center}
 \vspace{-2mm}
\end{table}




\begin{figure}[t]
\centering
\begin{tikzpicture}

\begin{loglogaxis}[
    width=4.5cm,
    height=4cm,
    xlabel={Number of invocations},
    ylabel={Runtime (ms)},
    legend style={
        at={(0.5,1.15)},
        anchor=south,
        cells={anchor=west},
        draw=none,
        fill=none
    },
    tick label style={font=\small},
    label style={font=\small},
    grid=both,
]

\addplot[
    mark=o,
    very thick,
]
coordinates{
    (200, 0.658)
    (1000, 1.089)
    (4500, 2.247)
    (20000, 8.507)
    (100000, 28.119)
       (500000, 83.257)
};
\addlegendentry{Supersonic OT}

\addplot[
    mark=square,
    very thick,
]
coordinates{
    (200, 70.05)
    (1000, 70.4624)
    (4500, 70.5912)
    (20000, 71.89)
    (100000, 74.4135)
    (500000, 74.59)
};
\addlegendentry{IKNP OT extension~\cite{IshaiKNP03}}

\end{loglogaxis}
\end{tikzpicture}
\vspace{-3mm}
\caption{Runtime comparison between Supersonic OT and an IKNP-based OT extension across different invocation counts.}\label{chart::supersonic-vs-IKPN}
\vspace{-3mm}
\end{figure}
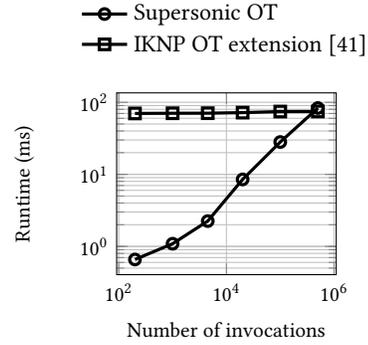

%
%
%
%


\subsubsection{Features Comparison} 

For the base OTs or OT extensions to achieve information-theoretical security, they typically require multiple replicas of the database, a noisy channel, or the involvement of a fully trusted initializer, all of which contribute to increased deployment costs or stronger trust assumptions. In contrast, Supersonic OT attains information-theoretic security without relying on such settings or assumptions. Unlike base OTs or OT extensions that typically only involve the sender and receiver, Supersonic OT involves a proxy  that might be corrupted by a semi-honest adversary who does not collude with other parties. Table~\ref{fig::features-comparison}  compares the features of Supersonic OT with several state-of-the-art OTs.


\begin{table}[!htp]
\caption{Feature comparison of 1-out-of-2 OTs.  I.T. stands for information-theoretic, and P.Q. stands for post-quantum.}  
\label{fig::features-comparison}
\vspace{-2mm}

\begin{center}
\scalebox{0.8}{
\renewcommand{\arraystretch}{1}

\begin{tabular}{|c|c|c|c|c|c|c|c|c|c|c|c|} 
    \hline


     \multirow{2}{*} {Protocol}&\multirow{-1}{*} {I.T.}&\multirow{-1}{*}{P.Q.}&\multirow{-1}{*}{No database}&\multirow{-1}{*}{No noisy}&\multirow{-1}{*}{No trusted} \\
     & security&security&replications&channel& party\\

\cline{3-4}

       \hline

\cellcolor{white!20}    \cite{AsharovL0Z13}
&  \textcolor{red}{$\times$}    
&    \textcolor{red}{$\times$}
&    \textcolor{  black}{$\checkmark$}
&    \textcolor{  black}{$\checkmark$}
&    \textcolor{  black}{$\checkmark$} \\
    
        \hline

 \cite{Efficient-OT-Naor}
 & \textcolor{red}{$\times$}    
 &   \textcolor{red}{$\times$}
 &   \textcolor{  black}{$\checkmark$}
 &   \textcolor{  black}{$\checkmark$}&   \textcolor{  black}{$\checkmark$}  \\ 

     

   \hline
   
 \cite{ChouO15}  
 & \textcolor{red}{$\times$}    
 &   \textcolor{red}{$\times$}
 &   \textcolor{  black}{$\checkmark$}
 &   \textcolor{  black}{$\checkmark$}
 &   \textcolor{  black}{$\checkmark$}\\ 
   
      \hline
   

    \cellcolor{white!20} \cite{NaorP00}   
    & \textcolor{  black}{$\checkmark$}    
    &   \textcolor{  black}{$\checkmark$}
    &   \textcolor{red}{$\times$} 
    &   \textcolor{  black}{$\checkmark$}
    &   \textcolor{  black}{$\checkmark$}\\ 

 \hline


        \cellcolor{white!20}   \cite{CrepeauK88} 
        & \textcolor{  black}{$\checkmark$}    
        &   \textcolor{  black}{$\checkmark$}
        &   \textcolor{  black}{$\checkmark$} 
        &   \textcolor{red}{$\times$}
        &   \textcolor{  black}{$\checkmark$}\\

    \hline

     \cellcolor{white!20}  \cite{CrepeauMW04}  
     &  \textcolor{  black}{$\checkmark$}    
     &    \textcolor{  black}{$\checkmark$}
     &    \textcolor{  black}{$\checkmark$} 
     &    \textcolor{red}{$\times$}
     &    \textcolor{  black}{$\checkmark$}\\

    \hline

     \cellcolor{white!20}\cite{IshaiKOPSW11}
     & \textcolor{  black}{$\checkmark$}    
     &   \textcolor{  black}{$\checkmark$}
     &   \textcolor{  black}{$\checkmark$} 
     &   \textcolor{red}{$\times$}
     &   \textcolor{  black}{$\checkmark$}\\

    \hline

         \cellcolor{white!20}\cite{rivest1999unconditionally}
         &  \textcolor{  black}{$\checkmark$}    
         &    \textcolor{  black}{$\checkmark$}
         &    \textcolor{  black}{$\checkmark$} 
         &    \textcolor{  black}{$\checkmark$}
         &    \textcolor{red}{$\times$}\\

    \hline





        \cellcolor{white!20}\cite{PeikertVW08}
        &  \textcolor{red}{$\times$}    
        &    \textcolor{  black}{$\checkmark$}
        &    \textcolor{  black}{$\checkmark$} 
        &    \textcolor{  black}{$\checkmark$}
        &    \textcolor{  black}{$\checkmark$}\\

      \hline
        \cellcolor{white!20}\cite{kundu20201}
        & \textcolor{red}{$\times$}    
        &   \textcolor{  black}{$\checkmark$}
        &   \textcolor{  black}{$\checkmark$} 
        &   \textcolor{  black}{$\checkmark$}
        &   \textcolor{  black}{$\checkmark$} \\

       \hline
       \cellcolor{white!20}\cite{DowsleyGMN12}
       &  \textcolor{red}{$\times$}    
       &    \textcolor{  black}{$\checkmark$}
       &    \textcolor{  black}{$\checkmark$} 
       &    \textcolor{  black}{$\checkmark$}
       &    \textcolor{  black}{$\checkmark$}\\

      \hline
        \cellcolor{white!20}\cite{BarretoOB18}
        & \textcolor{red}{$\times$}    
        &   \textcolor{  black}{$\checkmark$}
        &   \textcolor{  black}{$\checkmark$} 
        &   \textcolor{  black}{$\checkmark$}
        &   \textcolor{  black}{$\checkmark$}\\



    \hline
    
      Supersonic OT
      &  \textcolor{  black}{$\checkmark$}    
      &    \textcolor{  black}{$\checkmark$}
      &    \textcolor{  black}{$\checkmark$}
      &    \textcolor{  black}{$\checkmark$}
      &    \textcolor{  black}{$\checkmark$}\\ 


\hline

\end{tabular}
}
\end{center}
\vspace{-1mm}
\end{table}


\section{Conclusion and Future Work}

This work introduced Oblivis, a framework of Oblivious Transfer protocols that address gaps in delegated, metadata-hiding, and lightweight retrieval settings. We presented new OT variants that support secure query outsourcing, hidden-query execution, and multi-receiver privacy in merged cloud datasets, tasks that cannot be handled by classical OT. We further developed a generic compiler for achieving constant-size responses without imposing large local storage requirements. 
Our performance-oriented component, Supersonic OT, shows that information-theoretic OT can be both practical and fast. Our implementation achieves notable speedups over state-of-the-art base OTs and OT extensions. It operates smoothly even on constrained hardware such as a Raspberry Pi 4. Because it depends only on the GMP library, the implementation is lightweight and portable, which enables straightforward deployment across a wide range of environments.

Multiple directions remain open. Extending Supersonic OT to richer query classes (e.g., range predicates) is an important next step. Another direction is exploring browser-based deployments using lightweight client-side cryptography. Combining our delegated OT mechanisms with federated or multi-service architectures, or with edge-computing platforms, also appears promising.

\begin{acks}
Aydin Abadi was supported in part by (a) UKRI grant: EP/V011189/1 and Kuwait Foundation for the Advancement of Sciences (KFAS) under project code: PA24-6TE-2493. Yvo Desmedt thanks the Jonsson Endowment.  AI-based writing assistants were used only for language and presentation editing. 

\end{acks}

\bibliographystyle{ACM-Reference-Format}
\bibliography{ref}

\appendix



\section{Security Model}\label{sec::sec-model}

In this paper, we rely on the simulation-based model of secure multi-party computation \cite{GoldwasserMR85,DBLP:books/cu/Goldreich2004} to define and prove the proposed protocols. Below, we restate the formal security definition within this model.

 \subsubsection{Two-party Computation} A two-party protocol $\Gamma$ problem is captured by specifying a random process that maps pairs of inputs to pairs of outputs, one for each party. Such process is referred to as a functionality denoted by  $f:\{0,1\}^{\st  *}\times\{0,1\}^{\st  *}\rightarrow\{0,1\}^{ \st *}\times\{0,1\}^{ \st *}$, where $f:=(f_{\st  1},f_{\st  2})$. For every input pair $(x,y)$, the output pair is a random variable $(f_{\st  1} (x,y), f_{\st  2} (x,y))$, such that the party with input $x$ wishes to obtain $f_{\st  1} (x,y)$ while the party with input $y$ wishes to receive $f_{\st  2} (x,y)$. 
%
 In the setting where $f$ is asymmetric and only one party (say the first one) receives the result, $f$ is defined as $f:=(f_{\st  1}(x,y), \empt)$.

 \subsubsection{Security in the Presence of Passive Adversaries}  In the passive adversarial model, the party corrupted by such an adversary correctly follows the protocol specification. Nonetheless, the adversary obtains the internal state of the corrupted party, including the transcript of all the messages received, and tries to use this to learn information that should remain private. Loosely speaking, a protocol is secure if whatever can be computed by a party in the protocol can be computed using its input and output only. In the simulation-based model, it is required that a party’s view in a protocol's 
 execution can be simulated given only its input and output. This implies that the parties learn nothing from the protocol's execution. More formally, party $i$’s view (during the execution of $\Gamma$) on input pair  $(x, y)$ is denoted by $\mathsf{View}_{\st  i}^{\st  \Gamma}(x,y)$ and equals $(w, r_{\st  i}, m_{\st  1}^{\st  i}, ..., m_{\st  t}^{\st  i})$, where $w\in\{x,y\}$ is the input of $i^{\st  th}$ party, $r_{\st  i}$ is the outcome of this party's internal random coin tosses, and $m_{\st  j}^{\st  i}$ represents the $j^{\st  th}$ message this party receives.  The output of the $i^{\st  th}$ party during the execution of $\Gamma$ on $(x, y)$ is denoted by $\mathsf{Output}_{\st  i}^{\st  \Gamma}(x,y)$ and can be generated from its own view of the execution.  
\vspace{-1mm}
\begin{definition}
Let $f$ be the deterministic functionality defined above. Protocol $\Gamma$ securely computes $f$ in the presence of a  passive adversary if there exist polynomial-time algorithms $(\mathsf {Sim}_{\st  1}, \mathsf {Sim}_{\st  2})$ such that:
\end{definition}
\vspace{-1mm}
  \begin{equation*}
  \{\mathsf {Sim}_{\st 1}(x,f_{\st 1}(x,y))\}_{\st x,y}\stackrel{c}{\equiv} \{\mathsf{View}_{\st 1}^{\st \Gamma}(x,y) \}_{\st x,y}
  \end{equation*}
  \begin{equation*}
    \{\mathsf{Sim}_{\st 2}(y,f_{\st 2}(x,y))\}_{ \st x,y}\stackrel{c}{\equiv} \{\mathsf{View}_{\st 2}^{\st \Gamma}(x,y) \}_{\st x,y}
  \end{equation*}



\section{DQ-OT's Security Proof}\label{sec::DQ-OT-proof}
Below, we prove  DQ-OT's security, i.e., Theorem \ref{theo::DQ-OT-sec}.

\begin{proof}
We consider the case where each party is corrupt.

\subsubsection{Corrupt Receiver \re} In the real execution, \re's view is: 

$\view_{\re}^{\st DQ\text{-}OT}\big(m_{\st 0}, m_{\st 1}, \empt,$ $  \empt, s\big) = \{r_{\st\re}, g, C, p, e_{\st 0}, e_{\st 1}, m_{\st s} \}$, where $g$ is a random generator, $C=g^{\st a}$ is a random public parameter, $a$ is a random value, $p$ is a large random prime number, and $r_{\st \re}$ is the outcome of the internal random coin of  \re and is used to generate $(r_{\st 1}, r_{\st 2})$.  
 Below, we construct an idea-model simulator  $\simm_{\st\re}$ which receives $(s, m_{\st s})$ from \re. 
 
 \begin{enumerate}
 \item initiates an empty view and appends uniformly random coin $r'_{\st\re}$ to it, where $r'_{\st\re}$ will be used to generate \re-side randomness. 
It chooses a large random prime number $p$ and a random generator $g$.


 %
 \item sets $(e'_{\st 0}, e'_{\st 1})$ as follows: 
 \begin{itemize}
 \item splits $s$ into two shares: $\ses(1^{\st\lambda}, s, 2, 2)\rightarrow (s'_{\st 1}, s'_{\st 2})$.
 \item picks uniformly random values: $C', r'_{\st 1}, r'_{\st 2},y'_{\st 0}, y'_{\st 1}\stackrel{\st\$}\leftarrow\mathbb{Z}_{\st p}$.
 %
 %
 \item sets $\beta'_s=g^{\st x}$, where $x$ is set as follows: 
 \begin{itemize}
 \item[$*$] $x = r'_{\st 2} + r'_{\st 1}$, if $(s = s_{\st 1} = s_{\st 2} = 0)$ or $(s = s_{\st 1} = 1\wedge s_{\st 2} = 0)$.
 \item[$*$] $x = r'_{\st 2} - r'_{\st 1}$, if $(s =0 \wedge s_{\st 1} = s_{\st 2} = 1)$ or $(s = s_{\st 2} = 1\wedge s_{\st 1} = 0)$.
 \end{itemize}
 \item picks a uniformly random value $u\stackrel{\st\$}\leftarrow\mathbb{Z}_{p}$ and then sets $e'_{\st s} = (g^{\st y'_s}, \h(\beta'^{\st y'_s}_{\st s})\oplus m_{\st s})$ and $e'_{\st 1-s} = (g^{\st y'_{\st 1-s}}, u)$. 
 \end{itemize}
 \item  appends $(g, C', p, r'_{\st 1}, r'_{\st 2},  e'_{\st 0}, e'_{\st 1}, m_{\st s})$ to the view and outputs the view.
 \end{enumerate}
 
 Now we discuss why the two views in the ideal and real models are indistinguishable. Since we are in the semi-honest model, the adversary picks its randomness according to the protocol description; thus, $r_{\st \re}$ and $r'_{\st \re}$ model have identical distributions, so do values $(r_{\st 1}, r_{\st 2})$ in the real model and $(r'_{\st 1}, r'_{\st 2})$ in the ideal model. Also, $C$ and $C'$ have been picked uniformly at random and have identical distributions. The same applies to values $g$ and $p$ in the real and ideal models.

 Next, we argue that  $e_{\st 1-s}$ in the real model and $e'_{\st 1-s}$ in the ideal model are indistinguishable. In the real model, it holds that $e_{1-s}=(g^{\st y_{\st 1-s}}, \h(\beta^{y_{1-s}}_{\st 1-s})\oplus m_{\st 1-s})$, where $\beta^{\st y_{\st 1-s}}_{\st 1-s}=\frac{C}{g^{\st x}}=g^{\st a-x}$. Since $y_{\st 1-s}$ in the real model and $y'_{\st 1-s}$ in the ideal model have been picked uniformly at random and unknown to the adversary/distinguisher, $g^{\st y_{\st 1-s}}$ and $g^{\st y'_{\st 1-s}}$  have identical distributions.

 Furthermore, in the real model, given $C=g^{\st a}$, due to the DL problem, $a$ cannot be computed by a PPT adversary. Also, due to CDH assumption, \re cannot compute  $\beta^{y_{1-s}}_{1-s}$ (i.e., the input of $\h$), given $g^{\st y_{\st 1-s}}$ and $g^{\st a-x}$. We also know that $\h$ is modelled as a random oracle and its output is indistinguishable from a random value. Thus,  $\h(\beta^{\st y_{\st 1-s}}_{\st 1-s})\oplus m_{\st 1-s}$ in the real model and $u$ in the ideal model are indistinguishable. This means that $e_{\st 1-s}$ and $e'_{\st 1-s}$ are indistinguishable too, due to DL, CDH, and RO assumptions. Also, in the real and ideal models, $e_{\st s}$ and $e'_{\st s}$ have been defined over $\mathbb{Z}_{\st p}$ and their decryption always result in the same value $m_{\st s}$. Thus, $e_{\st s}$ and $e'_{\st s}$ have identical distributions too. 
Also, $m_{\st s}$ has an identical distribution in both models. 

We conclude that the two views are computationally indistinguishable, i.e., Relation \ref{equ::reciever-sim-} (in Section \ref{sec::sec-def}) holds.

\subsubsection{Corrupt Sender \se} In the real model, \se's view is: 

$\view_{\se}^{\st DQ\text{-}OT}\big((m_{\st 0}, m_{\st 1}), \empt,  \empt, $ $s\big)= \{r_{\st\se}, C, \beta_{\st 0}, \beta_{\st 1}\}$, where $r_{\st \se}$ is the outcome of the internal random coin of  \se.  Next, we construct an idea-model simulator  $\simm_{\st\se}$ which receives $(m_{\st 0}, m_{\st 1})$ from \se. 

\begin{enumerate}
\item constructs an empty view and appends uniformly random coin $r'_{\st\se}$ to it, where $r'_{\st\se}$ will be used to generate random values for \se.
%

\item picks random values $C', r'\stackrel{\st\$}\leftarrow\mathbb{Z}_{\st p}$.

\item sets $\beta'_{\st 0}=g^{\st r'}$ and $\beta'_{\st 1}=\frac{C'}{g^{\st r'}}$. 

\item appends $\beta'_{\st 0}$ and $\beta'_{\st 1}$ to the view and outputs the view. 

\end{enumerate}

Next, we explain why the two views in the ideal and real models are indistinguishable. Recall, in the real model, $(\beta_{\st s}, \beta_{\st 1-s})$ have the following form: $\beta_{s}=g^{\st x}$ and  $\beta_{\st 1-s}=g^{\st a-x}$, where $a=DL(C)$ and $C=g^{\st a}$. In this model, because $a$ and $x$ have been picked uniformly at random and unknown to the adversary, due to the DL assumption,  $\beta_{\st s}$ and $\beta_{\st 1-s}$ have identical distributions and are indistinguishable. In the ideal model, $r'$ has been picked uniformly at random and we know that $a'$ in $C' = g^{\st a'}$ is a uniformly random value, unknown to the adversary; therefore, due to DL assumption, $\beta'_{\st 0}$ and $\beta'_{\st 1}$ have identical distributions too. Moreover, values $\beta_{\st s}, \beta_{\st 1-s}, \beta'_{\st 0}$, and $\beta'_{\st 1}$  have been defined over the same field, $\mathbb{Z}_{\st p}$. Thus, they have identical distributions and are indistinguishable. 

Therefore, the two views are computationally indistinguishable, i.e., Relation \ref{equ::sender-sim-} (in Section \ref{sec::sec-def}) holds.

\subsubsection{Corrupt Server $\p_{\st 2}$}  
 
 In the real execution, $\p_{\st 2}$'s view is: 
 
$\view_{\p_2}^{\st DQ\text{-}OT}\big((m_{\st 0}, m_{\st 1}), \empt,$ $ \empt, s\big)=\{g, C, p, s_{\st 2}, r_{\st 2}\}$. Below, we show how an ideal-model simulator $\simm_{\st \p_{\st 2}}$ works. 

\begin{enumerate}
\item initiates an empty view. It selects a random generator $g$ and a large random prime number $p$.
\item picks two uniformly random values $s'_{\st 2}\stackrel{\st\$}\leftarrow\mathbb U$ and $C', r'_{\st 2}\stackrel{\st\$}\leftarrow\mathbb{Z}_{\st p}$, where $\mathbb{U}$ is the output range of $\ses$. 
\item appends $s'_{\st 2}, C'$ and $r'_{\st 2}$ to the view and outputs the view. 
\end{enumerate}

Next, we explain why the views in the ideal and real models are indistinguishable.

Since values $g$ and $p$ have been picked uniformly at random in both models, they have identical distributions in the real and ideal models. 
Since $r_{\st 2}$ and $r'_{\st 2}$ have been picked uniformly at random from  $\mathbb{Z}_{\st p}$, they have identical distributions. Also, due to the security of $\ses$ each share $s_{\st 2}$ is indistinguishable from a random value $s'_{\st 2}$, where $s'_{\st 2}\in \mathbb{U}$. Also, both $C$ and $C'$ have been picked uniformly at random from $\mathbb{Z}_{\st p}$; therefore, they have identical distribution.

Thus, the two views are computationally indistinguishable, i.e., Relation \ref{equ::server-sim-} w.r.t. $\p_{\st 2}$ (in Section \ref{sec::sec-def}) holds.

\subsubsection{Corrupt Server $\p_{\st 1}$}

In the real execution, $\p_{\st 1}$'s view is: 
 
$\view_{\st\p_1}^{\st DQ\text{-}OT}\big((m_{\st 0}, m_{\st 1}), \empt,$ $ \empt, s\big)=\{ g, C, p, s_{\st 1}, r_{\st 1}, \delta_{\st 0}, \delta_{\st 1}\}$. Ideal-model $\simm_{\st\p_{1}}$ works as follows.

\begin{enumerate}
\item initiates an empty view. It chooses a random generator $g$ and a large random prime number $p$. 
\item picks two random values $\delta'_{\st 0}, \delta'_{\st 1}\stackrel{\st\$}\leftarrow\mathbb{Z}_{\st p}$. 
\item picks two uniformly random values $s'_{\st 1}\stackrel{\st\$}\leftarrow\mathbb U$ and \\  $C', r'_{\st 1}\stackrel{\st\$}\leftarrow\mathbb{Z}_{\st p}$, where $\mathbb{U}$ is the output range of $\ses$. 
\item appends  $s'_{\st 1}, C', r'_{\st 1}, \delta'_{\st 0}, \delta'_{\st 1}$ to the view and outputs the view. 
\end{enumerate}

Now, we explain why the views in the ideal and real models are indistinguishable. Values $g$ and $p$ have been picked uniformly at random in both models. Hence, $g$ and $p$ in the real and ideal models have identical distributions (pair-wise).

Recall, in the real model, $\p_{\st 1}$ receives $\delta_{\st s_{\st 2}}=g^{\st r_{\st 2}}$ and  $\delta_{\st 1- s_2}=g^{\st a- r_2}$ from  $\p_{\st 2}$. 
Since $a$ and $r_{\st 2}$ have been picked uniformly at random and unknown to the adversary due to DL assumption,   $\delta_{\st s_2}$ and $\delta_{\st 1-{s_2}}$ (or $\delta_{\st 0}$ and $\delta_{\st 1}$) have identical distributions and are indistinguishable from random values (of the same field). 

In the ideal model, $\delta'_{\st 0}$ and  $\delta'_{\st 1}$ have been picked uniformly at random; therefore, they have identical distributions too. Moreover,  $\delta_{\st s}, \delta_{\st 1-s}, \delta'_{\st 0}$, and $\delta'_{\st 1}$  have been defined over the same field, $\mathbb{Z}_{\st p}$. So, they have identical distributions and are indistinguishable. 
Due to the security of $\ses$ each share $s_{\st 1}$ is indistinguishable from a random value $s'_{\st 1}$, where $s'_{\st 1}\in \mathbb{U}$. Also, $(r_{\st 1}, C)$ and $(r'_{\st 1}, C')$ have identical distributions, as they are picked uniformly at random from $\mathbb{Z}_{\st p}$.

Hence, the two views are computationally indistinguishable, i.e., Relation \ref{equ::server-sim-} w.r.t. $\p_{\st 1}$ (in Section \ref{sec::sec-def}) holds.
%
\end{proof}

\section{Proof of Correctness}\label{sec::DQ-OT-proof-of-correctness}

In this section, we discuss why the correctness of DQ-OT always holds. Recall, in the original OT of Naor and Pinkas \cite{Efficient-OT-Naor}, the random value $a$ (i.e., the discrete logarithm of random value $C$) is inserted by receiver \re into the query $\beta_{\st 1-s}$ while the other query $\beta_{\st s}$ is free from value $a$. As we will explain below, in our DQ-OT, the same applies to the final queries that are sent to \se.  
Briefly, in DQ-OT, when:

\begin{itemize}

\item[$\bullet$]  $s=s_{\st 1}\oplus s_{\st 2}=1$ (i.e., when $s_{\st 1}\neq s_{\st 2}$), then $a$ will always appear in $\beta_{\st 1-s}=\beta_{\st 0}$; however, $a$ will not appear in $\beta_{\st 1}$. 

\item[$\bullet$] $s=s_{\st 1}\oplus s_{\st 2} = 0$ (i.e., when $s_{\st 1}=s_{\st 2}$), then $a$ will always appear in $\beta_{\st 1-s}=\beta_{\st 1}$; but $a$ will not appear in $\beta_{\st 0}$.

\end{itemize}

This is evident in Table~\ref{tab:analysis} which shows what
$\delta_{\st i}$ and $\beta_{\st j}$ are for the different values of $s_{\st 1}$ and $s_{\st 2}$. Therefore, the query pair ($\beta_{\st 0}, \beta_{\st 1}$) has the same structure as it has in \cite{Efficient-OT-Naor}. 


\begin{table}[h]
\begin{center}
\scalebox{.76}{
  \begin{tabular}{|l||p{4.5cm}|p{4.5cm}|}
\hline
    &\multicolumn{1}{c|}{$s_{\st 2}=0$}&\multicolumn{1}{c|}{$s_{\st 2}=1$}\\
\hline
\hline
&\cellcolor{gray!20}$\delta_{\st 0} = g^{\st r_2}$, \hspace{3mm} $\delta_{\st 1} = g^{\st a-r_2}$&\cellcolor{gray!20} $\delta_{\st 0}=g^{\st a-r_2}$,\hspace{3mm} $\delta_{\st 1}=g^{\st r_2}$\\
\multirow{-2}{*}{$s_{\st 1}=0$}&\cellcolor{gray!20}$\beta_{\st 0}=g^{\st r_2+r_1}$, \hspace{3mm} $\beta_{\st 1}=g^{\st a-r_2-r_1}$&\cellcolor{gray!20}$\beta_{\st 0}=g^{\st a-r_2+r_1}$, \hspace{3mm} $\beta_{\st 1}=g^{\st r_2-r_1}$\\
\hline
&\cellcolor{gray!20}$\delta_{\st 0}=g^{\st r_2}$, \hspace{3mm} $\delta_{\st 1}=g^{\st a-r_2}$&\cellcolor{gray!20}$\delta_{\st 0}=g^{\st a-r_2}$, \hspace{3mm} $\delta_{\st 1}=g^{\st r_2}$\\
\multirow{-2}{*}{$s_{\st 1}=1$}&\cellcolor{gray!20}$\beta_{\st 0}=g^{\st a-r_2-r_1}$, \hspace{3mm} $\beta_{\st 1}=g^{\st r_2+r_1}$&\cellcolor{gray!20}$\beta_{\st 0}=g^{\st r_2-r_1}$, \hspace{3mm} $\beta_{\st 1}=g^{\st a-r_2+r_1}$\\
\hline
\end{tabular}
}
\end{center}
\caption{$\delta_{\st i}$ and $\beta_{\st j}$ are for the different values of $s_{\st 1}$ and $s_{\st 2}$.
We express each value as a power of $g$.}
\label{tab:analysis}
\end{table}

Next, we show why, in DQ-OT, \re can extract the correct message, i.e., $m_s$. Given \se's reply pair $(e_{\st 0}, e_{\st 1})$ and its original index $s$, \re knows which element to pick from the response pair, i.e., it picks $e_{\st s}$.

Moreover, given $g^{\st y_s}\in e_{\st s}$, \re can  recompute $\h(g^{\st y_s})^{\st x}$, as it knows the value of $s, s_{\st 1}$, and $s_{\st 2}$.  Specifically, as Table~\ref{tab:analysis} indicates, when: 

\begin{itemize}
\item[$\bullet$]  $\overbrace{(s= s_{\st 1}=s_{\st 2}=0)}^{\st \text{Case 
 1}}$ or $\overbrace{(s=s_{\st 1}=1 \wedge s_{\st 2}=0)}^{\st \text{Case 
 2}}$, then \re can set $x = r_{\st 2} + r_{\st 1}$.

\begin{itemize}

 \item In Case 1, it holds $\h((g^{\st y_0})^{\st x})= \h((g^{\st y_0})^{\st r_2 +r_1})=q$. Also, $e_{\st 0}=\h(\beta_{\st 0}^{\st y_0})\oplus m_{\st 0}=\h((g^{\st r_2+r_1})^{\st y_0})\oplus m_{\st 0}$. Thus, $q\oplus e_{\st 0}=m_{\st 0}$.
 
 \item In Case 2, it holds $\h((g^{\st y_1})^{\st x})= \h((g^{\st y_1})^{\st r_2 +r_1})=q$. Moreover, $e_{\st 1}=\h(\beta_{\st 1}^{\st y_1})\oplus m_{\st 1}=\h((g^{\st r_2+r_1})^{\st y_1})\oplus m_{\st 1}$. Hence, $q\oplus e_{\st 1}=m_{\st 1}$.

 \end{itemize}

 \vspace{.6mm} 

\item[$\bullet$]  $\overbrace{(s=0 \wedge s_{\st 1}=s_{\st 2}=1)}^{\st \text{Case 
 3}}$ or $\overbrace{(s=s_{\st 2}=1 \wedge s_{\st 1}=0)}^{\st \text{Case 
 4}}$, then \re can set $x = r_{\st 2} - r_{\st 1}$.

\begin{itemize}

 \item In Case 3, it holds $\h((g^{\st y_0})^{\st x})= \h((g^{\st y_0})^{\st r_2 - r_1})=q$. On the other hand, $e_{\st 0}=\h(\beta_{\st 0}^{\st y_0})\oplus m_{\st 0}=\h((g^{\st r_2 - r_1})^{\st y_0})\oplus m_{\st 0}$. Therefore, $q\oplus e_{\st 0}=m_{\st 0}$.
 
 \item In Case 4, it holds $\h((g^{\st y_1})^{\st x})= \h((g^{\st y_1})^{\st r_2 -r_1})=q$. Also, $e_{\st 1}=\h(\beta_{\st 1}^{\st y_1})\oplus m_{\st 1}=\h((g^{\st r_2-r_1})^{\st y_1})\oplus m_{\st 1}$. Hence, $q\oplus e_{\st 1}=m_{\st 1}$.
 
 \end{itemize}

\end{itemize}

We conclude that DQ-OT always allows honest \re to recover the message of its interest, i.e., $m_s$.

\section{DUQ-OT's Security Proof}\label{sec::DUQ-OT-Security-Proof}
Below, we prove  DUQ-OT's security theorem, i.e., Theorem \ref{theo::DUQ-OT-sec}.  Even though the proofs of DUQ-OT and DQ-OT have similarities, they have significant differences too; thus, for the sake of completeness, we present a complete proof for DUQ-OT.

\begin{proof}
We consider the case where each party is corrupt at a time.

\subsubsection{Corrupt Receiver \re} In the real execution, \re's view is: 

$$\view_{\re}^{\st DUQ\text{-}OT}\big(m_{\st 0}, m_{\st 1}, \empt,  \empt, \empt, s\big) = \{r_{\st\re}, g, C, p,  r_{\st 3}, s_{\st 2}, e'_{\st 0}, e'_{\st 1}, m_{\st s} \}$$

where $r_{\st \re}$ is the outcome of the internal random coin of  \re and is used to generate $(r_{\st 1}, r_{\st 2})$.  
 Below, we construct an idea-model simulator  $\simm_{\st\re}$ which receives $m_{\st s}$ from \re. 
 
 \begin{enumerate}
 \item initiate an empty view and appends uniformly random coin $r'_{\st\re}$ to it, where $r'_{\st\re}$ will be used to generate \re-side randomness, i.e., $(r'_{\st 1}, r'_{\st 2})$. 

 \item selects a random generator $g$ and a large random prime number $p$.
 \item sets response $(\bar{e}'_{\st 0}, \bar{e}'_{\st 1})$ as follows: 
 \begin{itemize}
 %
 %
 \item picks random values: $C', r'_{\st 1}, r'_{\st 2}, y'_{\st 0}, y'_{\st 1}\stackrel{\st\$}\leftarrow\mathbb{Z}_{\st p}$, $ r'_{\st 3}\stackrel{\st\$}\leftarrow\{0, 1\}^{\st\lambda}$, $s' \stackrel{\st\$}\leftarrow\{0,1\}$, and $u\stackrel{\st\$}\leftarrow\{0, 1\}^{\st\sigma+\lambda}$. 
 
 \item sets $x=r'_{\st 2}+r'_{\st1}\cdot (-1)^{\st s'}$ and $\beta'_{\st 0}=g^{\st x}$. 
 \item sets $\bar{e}_{\st 0} = (g^{\st y'_0}, \g(\beta'^{\st y'_0}_{\st 0})\oplus (m_{\st s}|| r'_{\st 3}))$ and $\bar{e}_{\st 1} = (g^{\st y'_{1}}, u)$. 

 \item randomly permutes the element of  pair $(\bar{e}_{\st 0}, \bar{e}_{\st 1})$. Let $(\bar{e}'_{\st 0}, \bar{e}'_{\st 1})$ be the result. 

 \end{itemize}
 \item  appends $(g, C', p, r'_{\st 3}, s', \bar{e}'_{\st 0}, \bar{e}'_{\st 1}, m_{\st s})$ to the view and outputs the view.
 \end{enumerate}
 
 Next, we argue that the views in the ideal and real models are indistinguishable. 
 
 As we are in the semi-honest model, the adversary picks its randomness according to the protocol description; therefore, $r_{\st \re}$ and $r'_{\st \re}$ model have identical distributions, the same holds for values $(r_{\st 3}, s_{\st 2})$ in the real model and $(r'_{\st 3}, s')$ in the ideal model, component-wise.  
Furthermore, because values $g$ and $p$ have been selected uniformly at random in both models, they have identical distributions in the real and ideal models.

 For the sake of simplicity, in the ideal mode let $\bar{e}'_{\st j}=\bar{e}_{\st 1} = (g^{\st y'_{\st 1}}, u)$ and in the real model let $e'_{\st i}=e_{\st 1-s}=(g^{\st y_{1-s}}, \g(\beta^{\st y_{1-s}}_{\st 1-s})\oplus (m_{\st 1-s}||r_{\st 3}))$, where $i, j\in\{0,1 \}$.  We will explain that  $e'_{\st i}$ in the real model and $\bar{e}'_{\st j}$ in the ideal model are indistinguishable. 
 
 In the real model, it holds that $e_{\st 1-s}=(g^{\st y_{1-s}}, \g(\beta^{\st y_{1-s}}_{\st 1-s})\oplus (m_{\st 1-s}||r_{\st 3}))$, where $\beta^{\st y_{1-s}}_{\st 1-s}=\frac{C}{g^{\st x}}=g^{\st a-x}$. Since $y_{\st 1-s}$ in the real model and $y'_{\st 1}$ in the ideal model have been picked uniformly at random and unknown to the adversary, $g^{\st y_{1-s}}$ and $g^{\st y'_{1}}$  have identical distributions.

 Moreover, in the real model, given $C=g^{\st a}$, because of the DL problem, $a$ cannot be computed by a PPT adversary. 
 Furthermore, due to CDH assumption, \re cannot compute  $\beta^{\st y_{1-s}}_{\st 1-s}$ (i.e., the input of $\g$), given $g^{\st y_{\st 1-s}}$ and $g^{\st a-x}$. We know that $\g$ is considered a random oracle and its output is indistinguishable from a random value. Therefore,  $\g(\beta^{\st y_{\st 1-s}}_{\st 1-s})\oplus (m_{\st 1-s} || r_{\st 3})$ in the real model and $u$ in the ideal model are indistinguishable. This means that $e_{\st 1-s}$ and $\bar{e}'_{\st j}$ are indistinguishable too, due to DL, CDH, and RO assumptions.

 Moreover, since (i) $y_{\st s}$ in the real model and $y'_{\st 0}$ in the ideal model have picked uniformly at random and (ii) the decryption of both $e'_{\st 1-i}$ and $\bar{e}'_{\st 1-j}$ contain $m_{\st s}$, $e'_{\st 1-i}$ and $\bar{e}'_{\st 1-j}$ have identical distributions. $m_{\st s}$ also has identical distribution in both models.  Both $C$ and $C'$ have also been picked uniformly at random from $\mathbb{Z}_{\st p}$; therefore, they have identical distribution. 
 
 In the ideal model, $\bar{e}_{\st 0}$ always contains an encryption of the actual message $m_{\st s}$ while $\bar{e}_1$ always contains a dummy value $u$. However, in the ideal model the elements of pair  $(\bar{e}_{\st 0}, \bar{e}_{\st 1})$ and in the real model the elements of  pair $(e_{\st 0}, e_{\st 1})$ have been randomly permuted, which result in $(\bar{e}'_{\st 0}, \bar{e}'_{\st 1})$ and $(e'_{\st 0}, e'_{\st 1})$ respectively. Therefore, the permuted pairs have identical distributions too. 

We conclude that the two views are computationally indistinguishable, i.e., Relation \ref{equ::DUQ-OT-reciever-sim-} (in Section \ref{sec::DUQ-OT-definition}) holds.

\subsubsection{Corrupt Sender \se} In the real model, \se's view is: 

$$\view_{\se}^{\st DUQ\text{-}OT}\big((m_{\st 0}, m_{\st 1}), \empt,  \empt ,\empt, s\big)= \{r_{\st\se}, C, r_{\st 3}, \beta_{\st 0}, \beta_{\st 1}\}$$

where $r_{\st \se}$ is the outcome of the internal random coin of  \se.  Next, we construct an idea-model simulator  $\simm_{\se}$ which receives $\{m_{\st 0}, m_{\st 1}\}$ from \se. 

\begin{enumerate}
\item constructs an empty view and appends uniformly random coin $r'_{\st\se}$ to it, where $r'_{\st\se}$ will be used to generate random values for \se.
\item picks  random values $C', r'\stackrel{\st\$}\leftarrow\mathbb{Z}_{\st p}, r'_{\st 3}\stackrel{\st \$}\leftarrow\{0, 1\}^{\st \lambda}$.

\item sets $\beta'_{\st 0}=g^{\st r'}$ and $\beta'_{1}=\frac{C'}{g^{\st r'}}$.

\item appends $C', r'_{\st 3}, \beta'_{\st 0}$, and $\beta'_{\st 1}$ to the view and outputs the view. 

\end{enumerate}

Next, we explain why the two views in the ideal and real models are indistinguishable. Recall, in the real model, $(\beta_{\st s}, \beta_{\st 1-s})$ have the following form: $\beta_{s}=g^{\st x}$ and  $\beta_{\st 1-s}=g^{\st a-x}$, where $a=DL(C)$ and $C=g^{\st a}$. 

In this ideal model, as $a$ and $x$ have been picked uniformly at random and unknown to the adversary, due to the DL assumption,   $\beta_{\st s}$ and $\beta_{\st 1-s}$ have identical distributions and are indistinguishable. 

In the ideal model, $r'$ and $C'$ have been picked uniformly at random and we know that $a'$ in $C'=g^{\st a'}$ is a uniformly random value, unknown to the adversary; thus, due to DL assumption, $\beta'_{\st 0}$ and $\beta'_{\st 1}$ have identical distributions too. The same holds for values $C$ and $C'$. 

Moreover, values $\beta_{\st s}, \beta_{\st 1-s}, \beta'_{\st 0}$, and $\beta'_{\st 1}$  have been defined over the same field, $\mathbb{Z}_{\st p}$. Thus, they have identical distributions and are indistinguishable.  The same holds for values $r_{\st 3}$ in the real model and $r'_{\st 3}$ in the ideal model. 

Therefore, the two views are computationally indistinguishable, i.e., Relation \ref{equ::DUQ-OT-sender-sim-} (in Section \ref{sec::DUQ-OT-definition}) holds.

\subsubsection{Corrupt Server $\p_{\st 2}$}  
 
 In the real execution, $\p_{\st 2}$'s view is: 
$$\view_{\st \p_2}^{\st DUQ\text{-}OT}\big((m_{\st 0}, m_{\st 1}), \empt, \empt, \empt, s\big)=\{ g, C, p, s_{\st 2}, r_{\st 2}\}$$ 

Below, we show how an ideal-model simulator $\simm_{\st \p_{\st 2}}$ works. 

\begin{enumerate}
\item initiates an empty view. It chooses a random generator $g$ and a large random prime number $p$.

\item picks two uniformly random values $s'_{\st 2}\stackrel{\st\$}\leftarrow\mathbb U$ and $C', r'_{\st 2}\stackrel{\st\$}\leftarrow\mathbb{Z}_{\st p}$, where $\mathbb{U}$ is the output range of $\ses$. 
\item appends $s'_{\st 2}, C'$ and $r'_{\st 2}$ to the view and outputs the view. 
\end{enumerate}

Next, we explain why the views in the ideal and real models are indistinguishable. 
 Values $g$ and $p$ have been selected uniformly at random in both models. Thus, they have identical distributions in the real and ideal models.

Since $r_{\st 2}$ and $r'_{\st 2}$ have been picked uniformly at random from  $\mathbb{Z}_{\st p-1}$, they have identical distributions. 

Also, due to the security of $\ses$ each share $s_{\st 2}$ is indistinguishable from a random value $s'_{\st 2}$, where $s'_{\st 2}\in \mathbb{U}$. Also, both $C$ and $C'$ have been picked uniformly at random from $\mathbb{Z}_{\st p}$; therefore, they have identical distributions.

Thus, the two views are computationally indistinguishable, i.e., Relation \ref{equ::DUQ-OT-server-sim-} w.r.t. $\p_{\st 2}$ (in Section \ref{sec::DUQ-OT-definition}) holds.

\subsubsection{Corrupt Server $\p_{\st 1}$}

In the real execution, $\p_{\st 1}$'s view is: 
$$\view_{\st\p_1}^{\st DUQ\text{-}OT}\big((m_{\st 0}, m_{\st 1}), \empt, \empt, \empt, s\big)=\{ g, C, p, s_{\st1}, r_{\st 1}, \delta_{\st 0}, \delta_{\st 1}\}$$ Ideal-model $\simm_{\st \p_{1}}$ works as follows.

\begin{enumerate}
\item initiates an empty view. It chooses a large random prime number $p$ and a random generator $g$.
\item picks two random values $\delta'_{\st 0}, \delta'_{\st 1}\stackrel{\st\$}\leftarrow\mathbb{Z}_{\st p}$. 
\item picks two uniformly random values $s'_{\st 1}\stackrel{\st\$}\leftarrow\mathbb U$ and $C', r'_{\st 1}\stackrel{\st\$}\leftarrow\mathbb{Z}_{\st p}$, where $\mathbb{U}$ is the output range of $\ses$. 
\item appends  $s'_{\st 1}, g, C', p, r'_{\st 1}, \delta'_{\st 0}, \delta'_{\st 1}$ to the view and outputs the view. 
\end{enumerate}

Now, we explain why the views in the ideal and real models are indistinguishable. Recall, in the real model, $\p_{\st 1}$ receives $\delta_{\st s_2}=g^{\st r_2}$ and  $\delta_{\st 1- s_2}=g^{\st a- r_2}$ from  $\p_{\st 2}$. 
Since $a$ and $r_{\st 2}$ have been picked uniformly at random and unknown to the adversary due to DL assumption,   $\delta_{\st s_2}$ and $\delta_{\st 1-{\st s_2}}$ (or $\delta_{\st 0}$ and $\delta_{\st 1}$) have identical distributions and are indistinguishable from random values (of the same field). 

In the ideal model, $\delta'_{\st 0}$ and  $\delta'_{\st 1}$ have been picked uniformly at random; therefore, they have identical distributions too. Moreover, values $\delta_{\st s}, \delta_{\st 1-s}, \delta'_{\st 0}$, and $\delta'_{\st 1}$  have been defined over the same field, $\mathbb{Z}_{\st p}$. So, they have identical distributions and are indistinguishable. 

Due to the security of $\ses$ each share $s_{\st 1}$ is indistinguishable from a random value $s'_{1}$, where $s'_{\st 1}\in \mathbb{U}$. Furthermore, $(r_{\st 1}, C)$ and $(r'_{\st 1}, C')$ have identical distributions, as they are picked uniformly at random from $\mathbb{Z}_{\st p}$.  Values $g$ and $p$ in the real and ideal models have identical distributions as they have been picked uniformly at random.

Hence, the two views are computationally indistinguishable, i.e., Relation \ref{equ::DUQ-OT-server-sim-} w.r.t. $\p_{\st 2}$ (in Section \ref{sec::DUQ-OT-definition})  holds.

\subsubsection{Corrupt \tp} T's view can be easily simulated.  It has an input $s$, but it receives no messages from its counterparts and receives no output from the protocol. Thus, its real-world view is defined as

 $$\view_{\st\tp}^{\st DUQ\text{-}OT}\big((m_{\st 0}, m_{\st 1}), \empt, \empt, \empt, s\big)=\{r_{\st \tp}, g, C,  p\}$$
 
  where  $r_{\st \tp}$ is the outcome of the internal random coin of  \tp and is used to generate random values.  

Ideal-model $\simm_{\st\tp}$ initiates an empty view, picks  $r'_{\st \tp}$, $g, C$, and $p$ uniformly at random, and adds them to the view. Since, in the real model, the adversary is passive, then it picks its randomness according to the protocol's description; thus, $r_{\st \tp}, g, C, p$ and $r'_{\st \tp}, g, C, p$ have identical distributions. 

Thus, the two views are computationally indistinguishable, i.e., Relation \ref{equ::DUQ-OT-t-sim-} (in Section \ref{sec::DUQ-OT-definition}) holds.
%
\end{proof}

\section{Proof of Lemma \ref{lemma::two-pairs-indis-}}\label{sec::proof-of-thorem}
Below, we prove Lemma \ref{lemma::two-pairs-indis-} presented in Section \ref{sec::rdqothf}.

\begin{proof}
First, we focus on the first element of   pairs $(g^{\st y_0}, \h(\beta_{\st 0}^{\st y_0}) \oplus m_{\st 0})$ and $(g^{\st y_1}, \h(\beta_{\st 1}^{\st y_1}) \oplus m_{\st 1})$. Since  $y_{\st 0}$ and $y_{\st 1}$ have been picked uniformly at random and unknown to the adversary, $g^{\st y_0}$ and $g^{\st y_1}$ are indistinguishable from random elements of group $\mathbb{G}$. 

Next,  we turn our attention to the second element of the pairs.  Given $C=g^{\st a}$, due to the DL problem, value $a$ cannot be extracted by a PPT adversary, except for a probability at most $\mu(\lambda)$. We also know that, due to CDH assumption, a PPT adversary cannot compute  $\beta^{\st y_{i}}_{\st i}$ (i.e., the input of $\h$), given $g^{\st y_{\st i}}, r_{\st 1}, C, g^{\st r_2}$, and $\frac{C}{g^{\st r_2}}$, where $i\in \{0, 1\}$, except for a probability at most $\mu(\lambda)$. 

We know that $\h$ has been considered as a random oracle and its output is indistinguishable from a random value. Therefore,  $\h(\beta^{\st y_{0}}_{\st 0})\oplus m_{\st 0}$ and  $\h(\beta^{\st y_1}_{\st 1})\oplus m_{\st 1}$ are indistinguishable from random elements of $\{0,1\}^{\st\delta}$, except for a negligible probability, $\mu(\lambda)$. 
\end{proof}

\section{\rdqothf in more Detail}\label{sec::DQ-HF-OT-detailed-protocol}
Figure \ref{fig::DQHT-OT} presents the \rdqothf that realizes \dqothf.


\begin{figure}[!h]
\setlength{\fboxsep}{.9pt}
\begin{center}
    \begin{tcolorbox}[enhanced,width=81mm,
    drop fuzzy shadow southwest,
    colframe=black,colback=white]
\begin{enumerate}[leftmargin=5.2mm]
\item \underline{\textit{$\se$-side Initialization:}} 
$\mathtt{Init}(1^{\st \lambda})\rightarrow pk$
\begin{enumerate}

\item chooses a sufficiently large prime number $p$.

\item selects random element
$C \stackrel{\st \$}\leftarrow \mathbb{Z}_p$ and generator $g$.
\item publish $pk=(C, p, g)$. 

\end{enumerate}

\item \underline{\textit{$\re$-side Delegation:}}
$\mathtt{\re.Request}( 1^{\st \lambda}, s, pk)\rightarrow req=(req_{\st 1}, req_{\st 2})$
\begin{enumerate}

\item split  the private index $s$ into two shares $(s_{\st 1}, s_{\st 2})$ by calling  $\ses(1^{\st \lambda}, s, 2, 2)\rightarrow (s_{\st 1}, s_{\st 2})$.

\item pick two uniformly random values: $r_{\st 1}, r_{\st 2} \stackrel{\st\$}\leftarrow\mathbb{Z}_{\st p}$.

\item send $req_{\st 1 }=(s_{\st 1}, r_{\st 1})$ to $\p_{\st1}$ and $req_{\st 2 }=(s_{\st 2}, r_{\st 2})$ to $\p_{\st 2}$.

\end{enumerate}

\item \underline{\textit{$\p_{\st 2}$-side Query Generation:}}
$\mathtt{\p_{\st 2}.GenQuery}(req_{\st 2}, ,pk)\rightarrow q_{\st 2}$

\begin{enumerate}

\item compute a pair of partial queries:
\vspace{-2mm}
  $$\delta_{\st s_2}= g^{\st r_2},\ \  \delta_{\st 1-s_2} = \frac{C}{g^{\st r_2}}$$
  
\item send $q_{\st 2}=(\delta_{\st 0}, \delta_{\st 1})$ to  $\p_{\st 1}$. 

\end{enumerate}

\item\underline{\textit{$\p_{\st 1}$-side Query Generation:}}
$\mathtt{\p_{\st 1}.GenQuery}(req_{\st 1}, q_{\st 2},pk)\rightarrow q_{\st 1}$

\begin{enumerate}

\item compute a pair of final queries as: 
\vspace{-2mm}
$$\beta_{\st s_1}=\delta_{0}\cdot g^{\st r_1},\ \ \beta_{\st 1-s_1}=\frac{\delta_{\st 1}} {g^{\st r_1}}$$

\item send $q_{\st 1}=(\beta_{\st 0}, \beta_{\st 1})$ to  $\se$.

\end{enumerate}

\item\underline{\textit{\se-side Response Generation:}} 
$\mathtt{GenRes}(m_{\st 0, 0}, m_{\st 1, 0},\ldots, m_{\st 0, z-1}, m_{\st 1, z-1},  pk, q_{\st 1})\rightarrow res$

\begin{enumerate}

\item abort if  $C \neq \beta_{\st 0}\cdot \beta_{\st 1}$.
\item compute a response as follows. $\forall t, 0\leq t\leq z-1:$

\begin{enumerate}[leftmargin=3.5mm]

\item  pick two random values $y_{\st 0, t}, y_{\st 1, t}  \stackrel{\$}\leftarrow\mathbb{Z}_{\st p}$.
\item compute response:
\vspace{-2mm}
   $$e_{\st 0, t} := (e_{\st 0, 0, t}, e_{\st 0, 1, t}) = (g^{\st y_{0, t}}, \h(\beta_{\st 0}^{\st y_{0, t}}) \oplus m_{\st 0, t})$$
$$e_{\st 1, t} := (e_{\st 1, 0, t}, e_{\st 1, 1, t}) = (g^{\st y_{1, t}}, \h(\beta_{\st 1}^{\st y_{1, t}}) \oplus m_{\st 1, t}) $$

\end{enumerate}

\item send $res=(e_{\st 0, 0}, e_{\st 1, 0}),..., (e_{\st 0, z-1}, e_{\st 1, z-1})$ to $\p_{\st 1}$. 

\end{enumerate}

\item\underline{\textit{$\p_{\st 1}$-side Oblivious Filtering:}} 
$\mathtt{OblFilter}(res)\rightarrow  res'$

\begin{itemize}
\item[$\bullet$] forward $res'=(e_{\st 0, v}, e_{\st 1, v})$ to $\re$ and discard the rest of the messages received from \se. 
\end{itemize}

\item\underline{\textit{$\re$-side Message Extraction:}} 
$\mathtt{Retrieve}(res', req,   pk)  \rightarrow m_{\st s}$

\begin{enumerate}

\item set $x=r_{\st 2}+r_{\st 1}\cdot(-1)^{\st s_2}$.

\item retrieve message $m_{\st s,v}$ by setting:  
 $$m_{\st s, v}=\h((e_{\st s, 0, v})^{\st x})\oplus e_{\st s, 1 v}$$

\end{enumerate}

\end{enumerate}
\end{tcolorbox}
\end{center}
\vspace{-2mm}
    \caption{\rdqothf: Our protocol that realizes \dqothf.}
    \label{fig::DQHT-OT}
    \vspace{-3mm}
\end{figure}

\begin{theorem}\label{theo::DQ-OTHF-sec}
Let $\mathcal{F}_{\scriptscriptstyle\dqothf}$ be the functionality defined in Section \ref{sec::DQOT-HF}. If  
DL, CDH, and RO assumptions hold, then \rdqothf (presented in Figure \ref{fig::DQHT-OT}) securely computes $\mathcal{F}_{\scriptscriptstyle\dqothf}$ in the presence of semi-honest adversaries, 
%
%
w.r.t. Definition \ref{def::DQ-OT-HF-sec-def}. 
\end{theorem}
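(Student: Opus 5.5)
The plan is to prove Theorem~\ref{theo::DQ-OTHF-sec} by simulation, treating one corrupted party at a time as in the proof of Theorem~\ref{theo::DQ-OT-sec}. For each party we build a simulator that receives only that party's input and output, as fixed by Definition~\ref{def::DQ-OT-HF-sec-def}. The query-generation phases (delegation, $\p_{\st 2}$- and $\p_{\st 1}$-side query generation) are identical to DQ-OT, so the simulators for $\p_{\st 2}$ and for \se can reuse the DQ-OT simulators almost verbatim.

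\textbf{Corrupt $\p_{\st 2}$.} The view is $\{g,C,p,s_{\st 2},r_{\st 2}\}$. The simulator samples a uniform share and uniform group elements, exactly as in Appendix~\ref{sec::DQ-OT-proof}.

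\textbf{Corrupt \se.} The view is $\{r_{\st\se},C,\beta_{\st 0},\beta_{\st 1}\}$. This is the same as in DQ-OT, so the same simulator works: set $\beta'_{\st 0}=g^{\st r'}$ and $\beta'_{\st 1}=C'/g^{\st r'}$. The only new point to stress is that \se receives no information about $v$, because it computes responses for all $z$ pairs and its incoming messages do not depend on $v$.

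\textbf{Corrupt \re.} \re receives only the single pair $(e_{\st 0,v},e_{\st 1,v})$, which is distributed exactly like a DQ-OT response on messages $(m_{\st 0,v},m_{\st 1,v})$. Given $(s,m_{\st s,v})$, the simulator runs the DQ-OT receiver simulator with $m_{\st s}:=m_{\st s,v}$. The real view contains no information about $z$ or about any other pair, and indistinguishability reduces to the DQ-OT case, i.e., to the DL, CDH and RO assumptions.

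\textbf{Corrupt $\p_{\st 1}$.} This is the main obstacle. $\p_{\st 1}$ has input $v$ and output $z$, and its view is $\{g,C,p,s_{\st 1},r_{\st 1},\delta_{\st 0},\delta_{\st 1},res\}$, where $res$ holds $2z$ encrypted pairs. The simulator works in three steps.
\begin{itemize}
\item It samples $s'_{\st 1}$ uniformly and samples $r'_{\st 1}$ and $C'$.
\item It sets $\delta'_{\st 0}=g^{\st r'_2}$ and $\delta'_{\st 1}=C'/g^{\st r'_2}$. The order of the two values is hidden, since $s_{\st 2}$ is an unseen uniform share.
\item For each $t\in\{0,\ldots,z-1\}$ and $i\in\{0,1\}$, it outputs $(g^{\st y'_{i,t}},u_{\st i,t})$ with $u_{\st i,t}$ uniform in $\{0,1\}^{\st\sigma}$.
\end{itemize}
Indistinguishability for the $res$ part is exactly what Lemma~\ref{lemma::two-pairs-indis-} provides. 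Given $r_{\st 1},C,g^{\st r_2},C/g^{\st r_2}$, both $\h(\beta_{\st 0}^{\st y_0})\oplus m_{\st 0}$ and $\h(\beta_{\st 1}^{\st y_1})\oplus m_{\st 1}$ are indistinguishable from random.

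Here $\p_{\st 1}$ knows $r_{\st 1}$ and $\delta_{\st 0},\delta_{\st 1}$, and hence knows $\beta_{\st 0},\beta_{\st 1}$. However, $\beta_{\st 0}=g^{\st a\pm\gamma}$ and $\beta_{\st 1}=g^{\st a\mp\gamma}$ each contain either $a$ or the unknown $r_{\st 2}$ in the exponent, so neither discrete log is known to $\p_{\st 1}$. The Diffie--Hellman values $\beta_{\st i}^{\st y_{i,t}}$ therefore remain CDH-hard.

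To pass from one pair to $z$ pairs, I will use a hybrid argument over $t=0,\ldots,z-1$, replacing one pair at a time and invoking Lemma~\ref{lemma::two-pairs-indis-} at each step. The total loss is at most $z\cdot\mu(\lambda)$, which is negligible since $z$ is polynomial. Each step needs a reduction that embeds a CDH instance into one $g^{\st y_{i,t}}$ and generates the other pairs itself. This works because every other pair needs only freshly sampled $y$ values and the public $\beta_{\st i}$.

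I expect the delicate part to be making Lemma~\ref{lemma::two-pairs-indis-} apply while $\p_{\st 1}$ knows $r_{\st 1}$. Concretely, I must check that with $r_{\st 2}$ uniform and hidden, the exponent of each $\beta_{\st i}$ is uniform from $\p_{\st 1}$'s viewpoint and independent of $s$.
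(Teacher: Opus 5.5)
Your proposal is correct and follows the paper's own proof. It uses one simulator per corrupted party. The DQ-OT simulators are reused for \se, $\p_{\st 2}$ and \re; for \re this is with $m_{\st s}:=m_{\st s,v}$, since \re sees only the $v$-th pair. Lemma~\ref{lemma::two-pairs-indis-} then handles the $z$ extra pairs in $\p_{\st 1}$'s view.

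Your version is, if anything, more careful than the paper's in two places. First, sampling $(\delta'_{\st 0},\delta'_{\st 1})=(g^{\st r'_2},C'/g^{\st r'_2})$ preserves the relation $\delta_{\st 0}\cdot\delta_{\st 1}=C$, which $\p_{\st 1}$ can check; the paper's independently uniform $\delta'$s would violate it. Second, your explicit hybrid over $t$, with a CDH reduction that builds the other pairs from the public $\beta_{\st i}$, makes rigorous the paper's one-line per-pair appeal to the lemma.
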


Appendix \ref{sec::proof-of-DQ-OTHF-sec} presents the proof of Theorem \ref{theo::DQ-OTHF-sec}. 


\section{Proof of  Theorem \ref{theo::DQ-OTHF-sec}}\label{sec::proof-of-DQ-OTHF-sec}
Below, we prove the security of  \rdqothf, i.e., Theorem \ref{theo::DQ-OTHF-sec}.

\begin{proof}
To prove the above theorem, we consider the cases where each party is corrupt at a time.

\subsubsection{Corrupt \re} Recall that in \rdqothf, sender  \se holds a vector $\bm{m}$ of $z$ pairs of messages (as opposed to DQ-OT where \se holds only a single pair of messages). In the real execution, \re's view is:  
$\view_{\re}^{\st \rdqothf}\big(m_{\st 0}, m_{\st 1}, v,$ $  \empt, s\big) = \{r_{\st\re}, g, C, p, e_{\st 0,v}, e_{\st 1, v}, m_{\st s,v} \}$, where $C=g^{\st a}$ is a random value and public parameter, 
where $g$ is a random generator, $a$ is a random value, $p$ is a large random prime number, and $r_{\st \re}$ is the outcome of the internal random coin of  \re and is used to generate $(r_{\st 1}, r_{\st 2})$.

We will construct a simulator $\simm_{\st\re}$ that creates a view for \re such that (i) \re will see only a pair of messages (rather than $z$ pairs), and (ii) the view is indistinguishable from the view of corrupt \re in the real model. 
$\simm_{\st \re}$ which receives $(s, m_{\st s})$ from \re operates as follows.

 \begin{enumerate}
 \item initiates an empty view and appends uniformly random coin $r'_{\st\re}$ to it, where $r'_{\st\re}$ will be used to generate \re-side randomness. It chooses a large random prime number $p$ and a random generator $g$.

 \item sets $(e'_{\st 0}, e'_{\st 1})$ as follows: 
 \begin{itemize}
 \item splits $s$ into two shares: $\ses(1^{\st\lambda}, s, 2, 2)\rightarrow (s'_{\st 1}, s'_{\st 2})$.
 \item picks uniformly random values: $C', r'_{\st 1}, r'_{\st 2},y'_{\st 0}, y'_{\st 1}\stackrel{\st\$}\leftarrow\mathbb{Z}_{\st p}$.
 %
 %
 \item sets $\beta'_s=g^{\st x}$, where $x$ is set as follows: 
 \begin{itemize}
 \item[$*$] $x = r'_{\st 2} + r'_{\st 1}$, if $(s = s_{\st 1} = s_{\st 2} = 0)$ or $(s = s_{\st 1} = 1\wedge s_{\st 2} = 0)$.
 \item[$*$] $x = r'_{\st 2} - r'_{\st 1}$, if $(s =0 \wedge s_{\st 1} = s_{\st 2} = 1)$ or $(s = s_{\st 2} = 1\wedge s_{\st 1} = 0)$.
 \end{itemize}
 \item picks a uniformly random value $u\stackrel{\st\$}\leftarrow\mathbb{Z}_{p}$ and then sets $e'_s = (g^{\st y'_s}, \h(\beta'^{\st y'_s}_{\st s})\oplus m_{\st s})$ and $e'_{\st 1-s} = (g^{\st y'_{1-s}}, u)$. 
 \end{itemize}
 \item  appends $(g, C', p, r'_{\st 1}, r'_{\st 2},  e'_{\st 0}, e'_{\st 1}, m_{\st s})$ to the view and outputs the view.
 \end{enumerate}

The above simulator is identical to the simulator we constructed for DQ-OT. Thus, the same argument that we used (in the corrupt \re case in Section \ref{sec::DQ-OT-proof}) to argue why real model and ideal model views are indistinguishable, can be used in this case as well.  That means, even though \se holds $z$ pairs of messages and generates a response for all of them, \re's view is still identical to the case where \se holds only two pairs of messages. 
Hence, Relation \ref{equ::reciever-sim-DQ-OT-HF}  (in Section \ref{sec::DQOT-HF}) holds.


 \subsubsection{Corrupt \se}   This case is identical to the corrupt \se in the proof of DQ-OT (in Section \ref{sec::DQ-OT-proof}) with a minor difference. Specifically, the real-model view of \se in this case is identical to the real-model view of \se in DQ-OT; however, now $\simm_{\st\se}$ receives a vector $\bm{m}=[(m_{\st 0, 0},m_{\st 1, 0}),...,$ $ (m_{\st 0, z-1},$ $m_{\st 1, z-1})]$ from \se, instead of only a single pair that $\simm_{\st\se}$ receives in the proof of DQ-OT. $\simm_{\st\se}$ still operates the same way it does in the corrupt \se case in the proof of DQ-OT. Therefore, the same argument that we used (in Section \ref{sec::DQ-OT-proof}) to argue why real model and ideal model views are indistinguishable (when \se is corrupt), can be used in this case as well.
 
 Therefore, Relation \ref{equ::sender-sim-DQ-OT-HF}  (in Section \ref{sec::DQOT-HF}) holds.

\subsubsection{Corrupt $\p_{\st 2}$} This case is identical to the corrupt   $\p_{\st 2}$ case in the proof of DQ-OT. So, Relation \ref{equ::server2-sim-DQ-OT-HF}  (in Section \ref{sec::DQOT-HF}) holds.

\subsubsection{Corrupt $\p_{\st 1}$} In the real execution, $\p_{\st 1}$'s view is: 
 
$\view_{\st\p_1}^{\st \rdqothf}\big((m_{\st 0}, m_{\st 1}), v,$ $ \empt, s\big)=\{g, C, p, s_{\st 1}, r_{\st 1}, \delta_{\st 0}, \delta_{\st 1}, (e_{\st 0,0}, e_{\st 1,0}),$\\ $\ldots, (e_{\st 0, z-1},$ $ e_{\st 1, z-1})\}$. Ideal-model $\simm_{\st\p_{1}}$ that receives $v$ from $\p_{\st 1}$ operates as follows.

\begin{enumerate}
\item initiates an empty view. It selects a large random prime number $p$ and a random generator $g$.

\item picks two random values $\delta'_{\st 0}, \delta'_{\st 1}\stackrel{\st\$}\leftarrow\mathbb{Z}_{\st p}$. 
\item picks two uniformly random values $s'_{\st 1}\stackrel{\st\$}\leftarrow\mathbb U$ and $C', r'_{\st 1}\stackrel{\st\$}\leftarrow\mathbb{Z}_{\st p-1}$, where $\mathbb{U}$ is the output range of $\ses$. 

\item picks $z$ pairs of random values as follows  $(a_{\st 0,0}, a_{\st 1,0}),..., (a_{\st 0,z-1},$ $ a_{\st 1,z-1})\stackrel{\st\$}\leftarrow\mathbb{Z}_{\st p}$. 
\item appends  $s'_{\st 1}, g, C',p, r'_{\st 1}, \delta'_{\st 0}, \delta'_{\st 1}$ and pairs $(a_{\st 0,0}, a_{\st 1,0}),..., (a_{\st 0,z-1},$ $ a_{\st 1,z-1})$ to the view and outputs the view. 
\end{enumerate}

Now, we explain why the views in the ideal and real models are indistinguishable. The main difference between this case and the corrupt $\p_{\st 1}$ case in the proof of DQ-OT (in Section  \ref{sec::DQ-OT-proof}) is that now $\p_{\st 1}$ has $z$ additional pairs $(e_{\st 0,0}, a_{\st 1,0}),..., (e_{\st 0,z-1},$ $ a_{\st 1,z-1})$. Therefore, regarding the views in real and ideal models excluding the additional $z$ pairs, we can use the same argument we provided for the corrupt $\p_1$ case in the proof of DQ-OT to show that the two views are indistinguishable. Moreover, due to Lemma \ref{lemma::two-pairs-indis-}, the elements of each pair  $(e_{\st 0,i}, e_{\st 1,i})$ in the real model are indistinguishable from the elements of each pair  $(a_{\st 0,i}, a_{\st 1,i})$ in the ideal model, for all $i$, $0 \leq i \leq z-1$.  Hence, Relation \ref{equ::server1-sim-DQ-OT-HF} (in Section \ref{sec::DQOT-HF}) holds.
\end{proof}

\section{\rduqothf's Security Proof}\label{sec::proof-of-DUQ-OT-HF}


We prove the security of  \rduqothf, i.e., Theorem \ref{theo::DUQ-OTHF-2-sec}.

\begin{proof}
To prove the theorem, we consider the cases where each party is corrupt at a time.

\subsubsection{Corrupt \re} In the real execution, \re's view is: 

$\view_{\re}^{\st \rduqothf}\big(\bm{m}, (v, s, z)$ $  \empt, \empt, \empt\big) = \{r_{\st\re}, g, C, p, r_{\st 3}, s_{\st 2}, {o}_{\st 0}, {o}_{\st 1}, $ $m_{\st s, v} \}$, where 
$g$ is a random generator, $p$ is a large random prime number, 
${o}_{\st 0}:=({o}_{\st 0, 0}, {o}_{\st 0, 1})$, ${o}_{\st 1}:=({o}_{\st 1, 0}, {o}_{\st 1, 1})$, 
$C=g^{\st a}$ is a random value and public parameter, $a$ is a random value, and $r_{\st \re}$ is the outcome of the internal random coin of  \re that is used to (i) generate $(r_{\st 1}, r_{\st 2})$ and (ii) its public and private keys pair for additive homomorphic encryption.

We will construct a simulator $\simm_{\re}$ that creates a view for \re such that (i) \re will see only a pair of messages rather than $z$ pairs, and (ii) the view is indistinguishable from the view of corrupt \re in the real model. 
$\simm_{\st \re}$ which receives $m_{\st s, v}$ from \re performs as follows.

 \begin{enumerate}
 \item initiates an empty view and appends uniformly random coin $r'_{\st\re}$ to it, where $r'_{\st\re}$ will be used to generate \re-side randomness. It selects a large random prime number $p$ and a random generator $g$.

  \item sets response  as follows: 
  
  \begin{itemize}
 \item picks random values: $C', r'_{\st 1}, r'_{\st 2},y'_{\st 0}, y'_{\st 1}\stackrel{\st\$}\leftarrow\mathbb{Z}_{p}$, $ r'_{\st 3}\stackrel{\st\$}\leftarrow\{0, 1\}^{\st\lambda}$, $s'\stackrel{\st\$}\leftarrow\{0,1\}$, and $u\stackrel{\st\$}\leftarrow\{0, 1\}^{\st\sigma+\lambda}$. 
 
 \item sets $x=r'_{\st 2}+r'_{\st 1}\cdot (-1)^{\st s'}$ and $\beta'_{\st 0}=g^{\st x}$. 
 \item sets $\bar{e}_{\st 0} := \big(\bar{e}_{\st 0,0}=g^{\st y'_0}, \bar{e}_{\st 0, 1}=\g(\beta'^{\st y'_0}_{\st 0})\oplus (m_{\st s,v}|| r'_{\st 3})\big)$ and $\bar{e}_{\st 1} := (\bar{e}_{\st 1,0}=g^{\st y'_{\st 1}}, \bar{e}_{\st 1,1}=u)$. 

 \item encrypts the elements of the pair under $pk$ as follows. $\forall i,i', 0\leq i, i'\leq 1: \bar{o}_{\st i,i'}=\enc(pk, \bar{e}_{\st i,i'})$. Let $\bar{o}_{\st 0}:=(\bar{o}_{\st 0, 0}, \bar{o}_{\st 0, 1})$ and $\bar{o}_{\st 1}:=(\bar{o}_{\st 1, 0}, \bar{o}_{\st 1, 1})$.

 \item randomly permutes the element of  pair $(\bar{o}_{\st 0}, \bar{o}_{\st 1})$. Let $(\bar{o}'_{\st 0}, \bar{o}'_{\st 1})$ be the result.

 \end{itemize}
 \item  appends $(g, C', p, r'_{\st 3}, s', \bar{o}'_{\st 0}, \bar{o}'_{\st 1}, m_{\st s, v})$ to the view and outputs the view.
 
  %

 %
 \end{enumerate}

 Now, we argue that the views in the ideal and real models are indistinguishable. As we are in the semi-honest model, the adversary picks its randomness according to the protocol description; so, $r_{\st \re}$ and $r'_{\st \re}$ model have identical distributions, so do values $(r_{\st 3}, s_{\st 2})$ in the real model and $(r'_{\st 3}, s')$ in the ideal model, component-wise. Moreover,  values $g$ and $p$ in the real and ideal models, as they have been picked uniformly at random.  


For the sake of simplicity, in the ideal model let $\bar{e}'_{\st j}=\bar{e}_{\st 1} = (g^{y'_{\st 1}}, u)$ and in the real model let $e'_{\st i}=e_{\st 1-s}=(g^{\st y_{\st 1-s, v}}, \g(\beta^{\st y_{1-s, v}}_{\st 1-s})\oplus (m_{\st 1-s, v}||r_{\st 3}))$, where $i, j\in\{0,1 \}$.  Note that $\bar{e}'_{\st j}$ and $e'_{\st i}$ contain the elements that the adversary gets after decrypting the messages it receives from $\p_{\st 1}$ in the real model and from $\simm_{\st\re}$ in the ideal model.

We will explain that  $e'_{\st i}$ in the real model and $\bar{e}'_{\st j}$ in the ideal model are indistinguishable. In the real model, it holds that $e_{\st 1-s}=(g^{\st y_{\st 1-s, v}}, \g(\beta^{\st y_{\st 1-s, v}}_{\st 1-s})\oplus (m_{1-s, v}||r_3))$, where $\beta^{\st y_{\st 1-s, v}}_{\st 1-s}=\frac{C}{g^{\st x}}=g^{\st a-x}$. Since $y_{\st 1-s, v}$ in the real model and $y'_{\st 1}$ in the ideal model have been picked uniformly at random and unknown to the adversary, $g^{\st y_{\st 1-s, v}}$ and $g^{\st y'_{1}}$  have identical distributions. Moreover, in the real model, given $C=g^{\st a}$, due to the DL problem, $a$ cannot be computed by a PPT adversary.  Also, due to CDH assumption, \re cannot compute  $\beta^{\st y_{1-s, v}}_{\st 1-s}$, given $g^{\st y_{1-s}}$ and $g^{\st a-x}$. We know that $\g$ is considered a random oracle and its output is indistinguishable from a random value. Therefore,  $\g(\beta^{\st y_{1-s, v}}_{\st 1-s})\oplus (m_{\st 1-s, v} || r_{\st 3})$ in the real model and $u$ in the ideal model are indistinguishable. This means that $e'_{\st i}$ and $\bar{e}'_{\st j}$ are indistinguishable too, due to DL, CDH, and RO assumptions. 
 
Also, ciphertexts $\bar{o}_{\st 1,0}=\enc(pk, g^{y'_{\st 1}})$ and $\bar{o}_{\st 1,1}=\enc(pk, u)$ in the ideal model and ciphertexts ${o}_{\st 1-s,0}=\enc(pk, g^{\st y_{1-s, v}})$ and ${o}_{\st 1-s,1}=\enc(pk, \g(\beta^{\st y_{1-s, v}}_{1-s})\oplus (m_{\st 1-s, v}||r_{\st 3}))$ in the real model have identical distributions due to IND-CPA property of the additive homomorphic encryption.

 Further,  (i) $y_{\st s, v}$ in the real model and $y'_0$ in the ideal model have been picked uniformly at random and (ii) the decryption of both $e'_{\st 1-i}$ and $\bar{e}'_{\st 1-j}$ contain $m_{\st s, v}$; therefore, $e'_{\st 1-i}$ and $\bar{e}'_{\st 1-j}$ have identical distributions. Also, $m_{\st s, v}$ has an identical distribution in both models.  Both $C$ and $C'$ have also been picked uniformly at random from $\mathbb{Z}_{\st p-1}$; therefore, they have identical distributions.

 In the ideal model, $\bar{e}_{\st 0}$ always contains encryption of actual message $m_{\st s, v}$ while $\bar{e}_{\st 1}$ always contains a dummy value $u$. However, in the ideal model the encryption of the elements of pair  $(\bar{e}_{\st 0}, \bar{e}_{\st 1})$ and in the real model the encryption of the elements of pair $(e_{\st 0,v}, e_{\st 1,v})$ have been randomly permuted, which results in $(\bar{o}'_{\st 0}, \bar{o}'_{\st 1})$ and $(o_{\st 0}, o_{\st 1})$ respectively.

Moreover, ciphertexts $\bar{o}_{\st 0,0}=\enc(pk, g^{\st y'_{\st 0}})$ and $\bar{o}_{\st 0,1}=\enc(pk, $ $\g(\beta'^{\st y'_0}_{\st 0})\oplus (m_{\st s,v}|| r'_{\st 3})))$ in the ideal model and ciphertexts ${o}_{\st s,0}=\enc(pk, g^{\st y_{\st s, v}})$ and ${o}_{\st s,1}=\enc(pk, \g(\beta^{y_{\st s, v}}_{\st s})\oplus (m_{\st s, v}||r_{\st 3}))$ have identical distributions due to IND-CPA property of the additive homomorphic encryption. Thus, the permuted pairs have identical distributions, too.

We conclude that the two views are computationally indistinguishable, i.e., Relation \ref{equ::DUQ-OT-HF-reciever-sim-} (in Section \ref{sec::Delegated-Unknown-Query-OT-HF}) holds. That means, even though \se holds $z$ pairs of messages and generates a response for all of them, \re's view is still identical to the case where \se holds only two pairs of messages.

 \subsubsection{Corrupt \se}   This case is identical to the corrupt \se in the proof of DUQ-OT (in Appendix \ref{sec::DUQ-OT-Security-Proof}) with a minor difference. Specifically, the real-model view of \se in this case is identical to the real-model view of \se in DUQ-OT. Nevertheless, now $\simm_{\st\se}$ receives a vector $\bm{m}=[(m_{\st 0, 0},m_{\st 1, 0}),...,$ $ (m_{\st 0, z-1},$ $m_{\st 1, z-1})]$ from \se, instead of only a single pair that $\simm_{\st\se}$ receives in the proof of DUQ-OT. $\simm_{\st\se}$ still carries out the same way it does in the corrupt \se case in the proof of DUQ-OT. Therefore, the same argument that we used (in Appendix \ref{sec::DUQ-OT-Security-Proof}) to argue why real model and ideal model views are indistinguishable (when \se is corrupt), can be used in this case as well.
 
 Therefore, Relation \ref{equ::DUQ-OT-HF-sender-sim-}  (in Section \ref{sec::Delegated-Unknown-Query-OT-HF}) holds.

\subsubsection{Corrupt $\p_{\st 2}$} This case is identical to the corrupt   $\p_{\st 2}$ case in the proof of DUQ-OT. Thus, Relation \ref{equ::DUQ-OT-HF-server-sim-}  (in Section \ref{sec::Delegated-Unknown-Query-OT-HF}) holds.

\subsubsection{Corrupt $\p_1$} In the real execution, $\p_{1}$'s view is: 
 
$\view_{\p_1}^{\st \rduqothf}\big(\bm{m}, (v, s, z), $ $ \empt, \empt, \empt \big)=\{g, C, p, s_1, \bm{w}_{\st j}, r_{\st 1}, \delta_{\st 0}, \delta_{\st 1},$\\ $ (e'_{\st 0,0}, e'_{\st 1,0}),$ $ ...,$ $ (e'_{\st 0, z-1},$ $ e'_{\st 1, z-1})\}$. Ideal-model $\simm_{\st\p_{\st 1}}$ operates as follows.

\begin{enumerate}
\item initiates an empty view. It selects a large random prime number $p$ and a random generator $g$.
\item picks two random values $\delta'_{\st 0}, \delta'_{\st 1}\stackrel{\st\$}\leftarrow\mathbb{Z}_{p}$. 

\item constructs an empty vector $\bm{w}'$. It picks $z$ uniformly at random elements $w'_{\st 0},..., w'_{\st z}$  from the encryption (ciphertext) range and inserts the elements into $\bm{w}'$. 

\item picks two uniformly random values $s'_{\st 1}\stackrel{\st\$}\leftarrow\mathbb U$ and $C', r'_{\st 1}\stackrel{\st\$}\leftarrow\mathbb{Z}_{p}$, where $\mathbb{U}$ is the output range of $\ses$. 

\item picks $z$ pairs of random values as follows  $(a_{\st 0,0}, a_{\st 1,0}),..., (a_{\st 0,z-1},$ $ a_{\st 1,z-1})\stackrel{\$}\leftarrow\mathbb{Z}_{\st p}$. 
\item appends  $s'_{\st 1},g, C', p, r'_{\st 1}, \delta'_{\st 0}, \delta'_{\st 1}$ and pairs $(a_{\st 0,0}, a_{\st 1,0}),..., (a_{\st 0,z-1},$ $ a_{\st 1,z-1})$ to the view and outputs the view. 
\end{enumerate}

Next, we argue that the views in the ideal and real models are indistinguishable. The main difference between this case and the corrupt $\p_{\st 1}$ case in the proof of DUQ-OT (in Appendix  \ref{sec::DUQ-OT-Security-Proof}) is that now, in the real model, $\p_1$ has: (i) a vector $\bm{w}_{\st j}$ of ciphertexts and (ii) $z$ pairs $(e'_{\st 0,0}, e'_{\st 1,0}),..., (e'_{\st 0,z-1},$ $ e'_{\st 1,z-1})$.  Therefore, we can reuse the same argument we provided for the corrupt $\p_1$ case in the proof of DUQ-OT to argue that the views (excluding $\bm{w}_{\st j}$ and $(e'_{\st 0,0}, e'_{\st 1,0}),..., (e'_{\st 0,z-1},$ $ e'_{\st 1,z-1})$) have identical distributions.

Due to Lemma \ref{lemma::two-pairs-indis-}, the elements of each pair  $(e'_{\st 0, i}, e'_{1, i})$ in the real model are indistinguishable from the elements of each pair  $(a_{\st 0, i}, a_{\st 1, i})$ in the ideal model, for all $i$, $0 \leq i \leq z-1$.  Also, due to the IND-CPA property of the additive homomorphic encryption scheme, the elements of $\bm{w}_{\st j}$ in the real model are indistinguishable from the elements of $\bm{w}'$ in the ideal model.

Hence, Relation \ref{equ::DUQ-OT-HF-server-sim-} (in Section \ref{sec::Delegated-Unknown-Query-OT-HF}) holds.

\subsubsection{Corrupt $\tp$} This case is identical to the corrupt \tp in the proof of DUQ-OT, with a minor difference; namely, in this case, \tp also has input $z$, which is the total number of message pairs that \se holds. Thus, we can reuse the same argument provided for the corrupt \tp in the proof of DUQ-OT to show that the real and ideal models are indistinguishable. Thus, Relation \ref{equ::DUQ-OT-HF-t-sim-} (in Section \ref{sec::Delegated-Unknown-Query-OT-HF}) holds.
\end{proof}


\section{Proof of Theorem \ref{theo::one-out-of-n-OT}}\label{theo::compiler-sec--}
\begin{proof}[Proof sketch]
Compared to an original $1$-out-of-$n$ OT, the only extra information that \se learns in the real model is a vector of $n$ encrypted binary elements. Since the elements have been encrypted and the encryption satisfies IND-CPA, each ciphertext in the vector is indistinguishable from an element picked uniformly at random from the ciphertext (or encryption) range. Therefore, it would suffice for a simulator to pick $n$ random values and add them to the view. As long as the view of \se in the original $1$-out-of-$n$ OT can be simulated, the view of \se in the new $1$-out-of-$n$ OT can be simulated too (given the above changes). 

Interestingly, in the real model, \re learns less information than it learns in the original $1$-out-of-$n$ OT because it only learns the encryption of the final message $m_s$. The simulator (given $m_s$ and $s$) encrypts $m_s$ the same way as it does in the ideal model in the $1$-out-of-$n$ OT. After that, it encrypts the result again (using the additive homomorphic encryption) and sends the ciphertext to \re. Since in both models, \re receives the same number of values in response, the values have been encrypted twice, and \re can decrypt them using the same approaches, the two models have identical distributions. 

Moreover, the response size is $O(1)$, because the response is the result of (1) multiplying two vectors of size $n$ component-wise and (2) then summing up the products which results in a single value in the case where each element of the response contains a single value (or $w$ values if each element of the response contains $w$ values). 
\end{proof}

\section{Proof of Theorem \ref{theo::ultra-OT}}\label{sec::ultrasonic-ot-proof}

\begin{proof}
We consider the case where each party is corrupt at a time.

\subsubsection{Corrupt Receiver \re} In the real execution, \re's view is: 

$\view_{\st\re}^{\st Supersonic\text{-}OT}\big((m_{\st 0}, m_{\st 1}), \empt,  s\big) = \{r_{\st\re}, e''_{\st 0}, m_{\st s} \}$, where $r_{\st \re}$ is the outcome of the internal random coin of  \re and is used to generate $(s_{\st 1}, s_{\st 2}, k_{\st 0}, k_{\st 1})$.  
 Below, we construct an idea-model simulator $\simm_{\re}$ which receives $(s, m_{\st s})$ from \re.

\begin{enumerate}
\item constructs an empty view and appends a uniformly random coin $r'_{\st \re}$ to the view. 

\item picks a random key $k \stackrel{\st\$}\leftarrow\{0, 1\}^{\st\sigma}$, using $r'_{\st \re}$. 

\item encrypts  message $m_s$ as follows $e=m_s\oplus k$. 

\item appends $e$ to the view and outputs the view. 

\end{enumerate}

Since we are in the passive adversarial model, the adversary picks its random coin $r_{\st \re}$  (in the real models) according to the protocol. Therefore, $r_{\st \re}$ and $r'_{\st \re}$ have identical distributions. Moreover, $e''_{0}$ in the real model and $e$ in the ideal model have identical distributions as both are the result of XORing message $m_s$ with a fresh uniformly random value. Also, $m_s$ is the same in both models so it has identical distribution in the real and ideal models. We conclude that Relation \ref{equ::ultra-ot-reciever-sim-} (in Section \ref{sec::Ultra-OT-definition}) holds.

\subsubsection{Corrupt Sender \se}
In the real execution, \se's view is: 

$\view_{\se}^{\st Supersonic\text{-}OT}\big((m_{\st 0}, m_{\st 1}), \empt,   s\big) = \{r_{\st\se}, s_{\st 1}, k_{\st 0}, k_{\st 1} \}$, where $r_{\st \se}$ is the outcome of the internal random coin of  \se and is used to generate its random values.   Next, we construct an idea-model simulator $\simm_{\st\se}$ which receives $(m_{\st 0}, m_{\st 1})$ from \se.

\begin{enumerate}
\item constructs an empty view and appends a uniformly random coin $r'_{\st \se}$ to the view. 

\item picks a binary random value $s' \stackrel{\st\$}\leftarrow\{0, 1\}$.

\item picks two uniformly random keys $(k'_{\st 0}, k'_{\st 1}) \stackrel{\st\$}\leftarrow\{0, 1\}^{\st\sigma}$.

\item  appends $s', k'_{\st 0}, k'_{\st 1}$ to the view and outputs the view. 
\end{enumerate}
Next, we explain why the two views are indistinguishable. 
The random coins $r_{\st \se}$ and $r'_{\st \se}$ in the real and ideal models have identical distribution as they have been picked according to the protocol's description (as we consider the passive adversarial model). Moreover, $s_{\st 1}$ in the real model and $s'$ in the ideal model are indistinguishable, as due to the security of the secret sharing scheme, binary share $s_{\st 1}$ is indistinguishable from a random binary value $s'$. Also, the elements of pair $(k_{\st 0}, k_{\st 1})$ in the real model and the elements of pair $(k'_{\st 0}, k'_{\st 1})$ in the ideal model have identical distributions as they have been picked uniformly at random from the same domain. Hence,  Relation \ref{equ::ultra-ot-sender-sim-} (in Section \ref{sec::Ultra-OT-definition}) holds.

\subsubsection{Corrupt Server \p}
In the real execution, \p's view is: 

$\view_{\st\p}^{\st Supersonic\text{-}OT}\big((m_{\st 0}, m_{\st 1}), \empt,   s\big) = \{r_{\st\p}, s_{\st 2}, e' \}$,   where $r_{\st \p}$ is the outcome of the internal random coin of  \p and is used to generate its random values and  $e'$ is a pair $(e'_{\st 0}, e'_{\st 1})$ and is an output of $\cper$.    Next, we construct an idea-model simulator $\simm_{\st\p}$.

\begin{enumerate}
\item constructs an empty view and appends a uniformly random coin $r'_{\st \p}$ to the view. 

\item picks a binary random value $s' \stackrel{\st\$}\leftarrow\{0, 1\}$.

\item constructs a pair $v$ of  two uniformly random values $(v_{\st0}, v_{\st1}) \stackrel{\st\$}\leftarrow\{0, 1\}^{\st\sigma}$.

\item appends $s', v$ to the view and outputs the view. 

\end{enumerate}

Since we consider the passive adversarial model, the adversary picks its random coins $r_{\st \p}$ and $r'_{\st \p}$ (in the real and ideal models, respectively) according to the protocol. So, they have identical distributions. Also, $s_{\st 2}$ in the real model and $s'$ in the ideal model are indistinguishable, as due to the security of the secret sharing scheme, binary share $s_{\st 2}$ is indistinguishable from a random binary value $s'$. 

In the real model, the elements of $e'$, which are $e'_{\st 0}$ and $e'_{\st 1}$ have been encrypted/padded with two fresh uniformly random values. In the ideal model, the elements of $v$ which are $v_{\st0}$ and $v_{\st 1}$ have been picked uniformly at random. Due to the security of a one-time pad, $e'_{\st i}$ ($\forall i, 0\leq i\leq 1$) is indistinguishable from a uniformly random value, including $v_{\st 0}$ and $v_{\st 1}$.  

Also, in the real model, the pair $e'$ that is given to \p  is always permuted based on the value of \se's share (i.e., $s_{\st 1}\in\{0, 1\}$) which is not known to \p; whereas, in the ideal model, the pair $v$ is not permuted. However, given the permuted pair $e'$ and not the permuted pair $v$, a distinguisher cannot tell where each pair has been permuted with a probability greater than $\frac{1}{2}$.   Therefore,  Relation \ref{equ::ultra-ot-server-sim-} (in Section \ref{sec::Ultra-OT-definition}) holds. 
 \end{proof}


\section{Proof of Correctness}\label{sec::Ultra-OT-Proof-of-Correctness}


In this section, we demonstrate that \re always receives the message $m_{\st s}$ corresponding to its query $s$. To accomplish this, we will show that (in step \ref{ultra-ot::e-double-prime}) the first element of pair $e''$ always equals the encryption of $m_s$. This outcome is guaranteed by the following two facts: (a) $s=s_{\st 1}\oplus s_{\st 2}$ and (b)  \se and \tp permute their pairs based on the value of their share, i.e., $s_{\st 1}$ and $s_{\st 2}$ respectively. 


\begin{table}[!htb]
\begin{center}
\scalebox{1}{
  \begin{tabular}{|c|c|c|c|c|}   
\hline
    $s$&$s_{\st 1}$&$s_{\st 2}$\\   
\hline

\hline
&\cellcolor{gray!20}1&\cellcolor{gray!20} 1\\

\cline{2-3}

\multirow{-2}{*}{$0$}&\cellcolor{gray!20}0&\cellcolor{gray!20}0\\

\hline

&\cellcolor{gray!20}1&\cellcolor{gray!20}0\\

\cline{2-3}

\multirow{-2}{*}{$1$}&\cellcolor{gray!20}0&\cellcolor{gray!20}1\\

\hline

\end{tabular}
}
\end{center}
\caption{Relation between query $s$ and behaviour of permutation $\cper$ from the perspective of $\se$ and $\p$. When $s_{\st i}=1$, $\cper$ swaps the elements of its input pairs and when $s_{\st i}=0$, $\cper$ does not swap the elements of the input pairs.}
\label{tab:ultra-ot-correctness}
\end{table}


As Table \ref{tab:ultra-ot-correctness} indicates, when $s=0$, then (i) either both $\se$ and \p permute their pairs or (ii) neither does. In the former case, since both swap the elements of their pair, then the final permuted pair $e''$ will have the same order as the original pair $e$ (before it was permuted). In the latter case, again $e''$ will have the same order as the original pair $e$ because neither party has permuted it. Thus, in both of the above cases (when $s=0$), the first element of $e''$ will be the encryption of $m_{\st 0}$. 
Moreover, as Table \ref{tab:ultra-ot-correctness} indicates, when $s=1$, then only one of the parties \se and \p will permute their input pair. This means that the first element of the final permuted pair $e''$  will always equal the encryption of $m_{\st 1}$. 

\clearpage

\end{document}